\newcommand{\bbeta}{\boldsymbol{\beta}}
\newcommand{\bld}{{\boldsymbol{\Lambda}}}
\newcommand{\bX}{\mathbf{X}}
\newcommand{\bz}{\mathbf{z}}
\newcommand{\bA}{\mathbf{A}}
\newcommand{\baf}{\boldsymbol{\alpha}}
\newcommand{\bgam}{\boldsymbol{\gamma}}
\renewcommand{\P}{\mathrm{P}}
\newcommand{\dd}{\mathrm{d}}
\DeclareMathOperator{\sign}{sign}
\newcommand{\cond}[1]{($\mathcal C$.\ref{#1})}
\theoremstyle{plain} 
\newtheorem{theorem}{Theorem}[section]
\newtheorem{lemma}[theorem]{Lemma}
\newtheorem{proposition}[theorem]{Proposition}
\begin{document}
	
	\begin{frontmatter}
		\title{Adaptive lasso and Dantzig selector for spatial point processes intensity estimation}
		\runtitle{Lasso and Dantzig selector for spatial point processes}
		
		\begin{aug}
			\author[A]{\fnms{Achmad} \snm {Choiruddin}\ead[label=e1]{choiruddin@its.ac.id}},
			\author[B]{\fnms{Jean-Fran\c cois} \snm{Coeurjolly}\ead[label=e2,mark]{jean-francois.coeurjolly@univ-grenoble-alpes.fr}}
			\and
			\author[B]{\fnms{Frédérique} \snm{Letué}\ead[label=e3]{frederique.letue@univ-grenoble-alpes.fr}}
			\address[A]{Department of Statistics, Institut Teknologi Sepuluh Nopember (ITS), Indonesia, \printead{e1}}
			
			\address[B]{Laboratoire Jean Kuntzmann, Université Grenoble Alpes, France, \printead{e2} \printead{e3}}
		\end{aug}
		
		\begin{abstract}
			Lasso and Dantzig selector are standard procedures able to perform variable selection and estimation simultaneously. This paper is concerned with extending these procedures to spatial point process intensity estimation. We propose adaptive versions of these procedures, develop efficient computational methodologies and derive asymptotic results for a large class of spatial point processes under an original setting where the number of parameters, i.e. the number of spatial covariates considered, increases with the expected number of data points. Both procedures are compared theoretically, in a simulation study, and in a real data example.
		\end{abstract}
		
		\begin{keyword}
			\kwd{estimating equations}
			\kwd{high-dimensional statistics}
			\kwd{linear programming}
			\kwd{regularization methods}
			\kwd{spatial point pattern}
		\end{keyword}
		
	\end{frontmatter}
	
	\section{Introduction} \label{sec:intro}
	
	Spatial point processes are stochastic processes which model random locations of points in space, such as random locations of trees in a forest, locations of disease cases, earthquake occurrences and crime events \citep[e.g.][]{baddeley2015spatial,choiruddin2021quantifying,coeurjolly2019understanding,moller2003statistical}. To understand the arrangement of points, the intensity function is the standard  summary function \cite{baddeley2015spatial,coeurjolly2019understanding}. When one seeks to describe the probability of observing a point at location $u\in \mathbb R^d$ in terms of covariates, the most popular model for the intensity function, $\rho$, is
	\begin{align}
		\label{eq:int}
		\rho (u;\bbeta)=\exp\{\bbeta^\top \bz(u)\}, \quad u \in D \subset \mathbb{R}^d,
	\end{align}
	where, for $p\ge 1$, $\bz(u)=\{ z_1(u),\ldots,z_p(u)\}^\top$, represent spatial covariates measured at location $u$ and $\bbeta=\{\beta_1,\ldots,\beta_p\}^\top$ is a real $p$-dimensional parameter.
	
	The score of the Poisson likelihood, i.e. the likelihood if the underlying process is assumed to be the Poisson process, remains an unbiased estimating equation to estimate $\bbeta$  even if the point pattern does not arise from a Poisson process \citep{waagepetersen2007estimating}. Such a method is well-studied in the literature and extended in several ways to gain efficiency \citep[e.g.][]{guan2015quasi,guan2010weighted} when the number of covariates is moderate. Standard results cover the consistency and asymptotic normality of the maximum Poisson likelihood under the increasing domain asymptotic (see e.g.~\cite{guan2015quasi} and references therein).
	
	When a large number of covariates is available, variable selection is unavoidable. Performing estimation and selection for spatial point processes intensity models has received a lot of attention. Recent developments consider techniques centered around regularization methods \citep[e.g.][]{choiruddin2018convex,daniel2018penalized,rakshit2021variable,yue2015variable} such as lasso technique. In particular, \cite{choiruddin2018convex} consider several composite likelihoods penalized by a large class of convex and non-convex penalty functions and obtain asymptotic results under the increasing domain asymptotic framework.
	
	The Dantzig selector is an alternative procedure to regularization techniques. It was initially proposed for linear models by~\cite{candes2007dantzig} and subsequently extended to more complex models \cite[e.g.][]{antoniadis2010dantzig,dicker2010regularized,james2009generalized}. In particular, \cite{dicker2010regularized} generalizes this approach to general estimating equations.
	One of the main advantages of the Dantzig selector is its implementation which, for linear models, results in a linear programming. Since then, the Dantzig selector and lasso procedures have been compared in different contexts \citep[e.g.][]{bickel2009simultaneous,james2009dasso}.
	
	In this paper, we compare lasso and Dantzig selector when they are applied to  intensity estimation for spatial point processes. We compare these procedures in the complex asymptotic framework where the number of informative covariates, say $s_n$ and the number of non-informative covariates, say $p_n-s_n$, may increase with the mean number of points. Our asymptotic results are developed under a setting which embraces both increasing domain asymptotic and infill asymptotic which are often considered in the literature (see also Section~\ref{sec:background}). Such a setting is almost never considered in the spatial point processes literature (see again e.g.~\cite{guan2015quasi}). 
	
	It is well-known that the Poisson likelihood can be approximated as a quasi-Poisson regression model, see e.g.~\cite{baddeley2015spatial}. For the adaptive lasso procedure, our theoretical contributions can also be seen as extension of work such as \cite{fan2004nonconcave} which provides asymptotic results for estimators from regularized generalized linear models. However, in our spatial framework, the more standard sample size must be substituted by, here, a mean number of points. Furthermore, observations are no more independent, i.e. our results are valid for a large class of spatial dependent point processes. Note also that \cite{fan2004nonconcave} assumes $s_n=s$.
	
	The theoretical contributions of the present paper are two-fold. First, for the adaptive lasso procedure, the question stems from extending the work by~\cite{choiruddin2018convex} which considers only an increasing domain asymptotic, and assumes $s_n=s$ and $p_n=p$. For the Dantzig selector, our contributions are to extend the standard methodology to spatial point processes, propose an adaptive version and derive theoretical results.
	This yields different computational and theoretical issues. As revealed by our main result, Theorem~\ref{thm:main}, the adaptive lasso and Dantzig selector procedures share several similarities but also some slight differences. 
	We first prove that both procedures satisfy an oracle property, i.e. methods can correctly select the nonzero coefficients with probability converging
	to one, and second that the estimators of the nonzero coefficients are asymptotically normal. However, the conditions under which results are valid for the Dantzig selector are slightly more restrictive.
	
	Our conducted simulation study and application to environmental data also demonstrate that both procedures behave similarly.

	\section{Background and framework} \label{sec:background}
	
	Let $\bX$ be a spatial point process on $\mathbb{R}^d$, $d\ge 1$. We view $\bX$ as a locally finite random subset of $\mathbb{R}^d$. Let $D \subset \mathbb{R}^d$ be a compact set of Lebesgue measure $|D|$ which will play the role of the observation domain. A realization of $\bX$ in $D$ is thus a set $\mathbf{x}=\{x_1, \ldots, x_m\}$, where $x \in D$ and $m$ is the observed number of points in $D$.  Suppose $\bX$ has intensity function $\rho$ and second-order product density $\rho^{(2)}$. Campbell theorem states that, for any function $k: \mathbb{R}^d \to [0,\infty)$ or $k: \mathbb{R}^d \times \mathbb{R}^d \to [0,\infty)$
	\begin{align}
		\label{eq:campbell}
		\mathbb{E}  \sum_{u \in \mathbf{X}} k(u)  ={\int_{\mathbb{R}^d} k(u) \rho (u)\mathrm{d}u}, \quad
		\mathbb{E} \sum_{u,v \in \mathbf{X}}^{\neq} k(u,v)=\int_{\mathbb{R}^d \times \mathbb{R}^d} k(u,v) \rho^{(2)} (u,v)\mathrm{d}u \mathrm{d}v.
	\end{align}
	Based on the first two intensity functions, the pair correlation function $g$ is defined by
	\begin{align*}
		g(u,v)=\frac{\rho^{(2)}(u,v)}{\rho(u)\rho(v)}, \quad u,v \in D,
	\end{align*}
	when both $\rho$ and $\rho^{(2)}$ exist with the convention $0/0=0$. The pair correlation function is a measure of departure of the model from the Poisson point process for which $g=1$. For further background materials on spatial point processes, see for example \cite{baddeley2015spatial,moller2003statistical}.
	
	For our asymptotic considerations, we assume that a sequence $(\bX)_{n\ge 1}$ is observed within a sequence of bounded domains $(D_n)_{n\ge 1}$. We denote by $\rho_n$ and $g_n$ the intensity and pair correlation of $\bX_n$. With an abuse of notation, we denote by $\mathbb E$ and $\mathrm{Var}$, the expectation and variance under $\bX_n$. we assume that the intensity $\rho_n$ writes $\rho_n(u) = \exp \{ \bbeta_0^\top \bz(u)\}$, $u\in D_n$. We thus let $\bbeta_0$ denote the true parameter vector and assume it can be decomposed as $\bbeta_0 = \{\beta_{01},\ldots,\beta_{0s_n},\beta_{0(s_n+1)},\ldots,\beta_{0p_n}\}^\top=(\bbeta^{\top}_{01},\bbeta^{\top}_{02})^\top = (\bbeta_{01}^\top, \mathbf 0^\top)^\top$. Therefore, $\bbeta_{01} \in \mathbb R^{s_n}$, $\bbeta_{02}=\mathbf 0 \in \mathbb R^{p_n-s_n}$ and $\bbeta_0 \in \mathbb R^{p_n}$, where $s_n$ is the number of non-zero coefficients, $p_n-s_n$ the number of zero coefficients and $p_n$ the total number of parameters. We underline that it is unknown to us which coefficients are non-zero and
	which are zero. Thus, we consider a sparse intensity model where in particular $s_n$ and $p_n$ may diverge to infinity as $n$ grows.
	
	For any $\bbeta\in \mathbb R^{p_n}$ or for the spatial covariates $\bz(u)$, we use a similar notation, i.e. $\bbeta=(\bbeta_1^\top,\bbeta_2^\top)^\top$ and $\bz(u)=\{\bz_1(u)^\top,\bz_2(u)^\top\}^\top$, $u\in D_n$. We let $\mu_n = \mathbb E\{N(D_n)\}$, that is the expected number of points in $D_n$. By Campbell theorem, we have
	\[
	\mu_n = \int_{D_n}  \rho_n(u,\bbeta_0) \dd u = \int_{D_n} \exp \{ \bbeta_0^\top \bz(u)\} \dd u
	=\int_{D_n} \exp \{ \bbeta_{01}^\top \bz_1(u)\}  \dd u
	\]
	Note that $\mu_n$ is a function of $D_n, \bbeta_{01}, \bz_1(u), s_n$. In this paper, we assume that $\mu_n\to \infty$ as $n\to \infty$. That kind of assumption is very general and embraces the well-known frameworks called increasing domain asymptotics and infill asymptotics. For the increasing domain context, $D_n \to \mathbb R^d$ and usually $\bbeta_{01}$ depends on $n$ only through $s_n$. For the infill asymptotics, $D_n=D$ is assumed to be a bounded domain of $\mathbb R^d$ and usually $z_1(u)=1$, $\bbeta_{01}=\theta_n \to \infty$ as $n\to \infty$. In some sense, the parameter $\mu_n$ plays the role of the sample size in standard inference.
	
	To reduce notation in the following, unless it is ambiguous, we do not index $\bX$, $\rho$, $g$, $\bbeta_0$, $\bbeta$, $\bz(u)$ with $n$.

	\section{Methodologies} \label{sec:method}

	\subsection{Standard methodology}

	If $\bX$ is a Poisson point process, then, on $D_n$, $\bX$ admits a density with respect to the unit rate Poisson point process \citep{moller2003statistical}. This yields the log-likelihood function for $\bbeta$, which, for the intensity model~\eqref{eq:int}, is proportional to
	\begin{equation}\label{eq:likepois}
		\ell_n(\bbeta)  
		= \sum_{u \in \mathbf{X} \cap D_n} \bbeta^\top \bz(u) - \int_{D_n} \rho(u; \bbeta)\mathrm{d}u. 
	\end{equation}
	The gradient of~\eqref{eq:likepois} writes 
	\begin{equation}
		\label{eq:Un}
		\mathbf{U}_n (\bbeta)=  \frac{\mathrm d}{\mathrm d \bbeta} \ell_n(\bbeta) = {\sum_{u \in \bX \cap D_n}\bz(u)} - {\int_{D_n} \bz(u) \rho(u; \bbeta)\mathrm{d}u}.
	\end{equation}
	If $\bX$ is not a Poisson point process, Campbell Theorem shows that \eqref{eq:Un} remains an unbiased estimating equation. Hence, the maximum of~\eqref{eq:likepois}, still makes sense for non-Poisson models. Such an estimator, which can be viewed as  composite likelihood has received a lot of attention in the literature and asymptotic properties are well-established when $p_n=p$ and $p$ is moderate \citep[e.g][]{guan2015quasi,guan2010weighted,waagepetersen2007estimating}.

	We end this section with the definition of the two following $p_n\times p_n$ matrices
	\begin{align}
		\mathbf{A}_n(\bbeta)&={\int_{D_n} \bz(u)\bz(u)^\top \rho(u;\bbeta)\mathrm{d}u} \label{eq:An} \\
		\mathbf{B}_n(\bbeta)&=\mathbf{A}_n(\bbeta) + {\int_{D_n} \int_{D_n} \bz(u)\bz(v)^\top \{g(u,v)-1\} \rho(u;\bbeta) \rho(v;\bbeta) \mathrm{d}u \mathrm{d}v}. \label{eq:Bn}
	\end{align}
	The matrix $\bA_n(\bbeta)$ corresponds to the sensitivity matrix defined by \linebreak$\bA_n(\bbeta)=-\mathbb E \{\mathrm d \mathbf U_n(\bbeta) / \mathrm d \bbeta^\top \}$ while $\mathbf{B}_n(\bbeta)$ corresponds to the variance of the estimating equation, i.e. $\mathbf{B}_n(\bbeta)= \mathrm{Var}\{ \mathbf U_n(\bbeta) \}$. By passing, we point out that $\bA_n(\bbeta) = -\mathrm d \mathbf U_n(\bbeta) / \mathrm d \bbeta^\top $.
	Let $\mathbf M_n$ be some $p_n\times p_n$ matrix, e.g. $\bA_n(\bbeta)$ or $\mathbf{B}_n(\bbeta)$. Such a matrix is decomposed as
	\begin{align}
		\label{partition}
		\mathbf{M}_n=
		\begin{bmatrix}
			\mathbf{M}_{n,1} \\
			\mathbf{M}_{n,2}
		\end{bmatrix}
		=
		\begin{bmatrix}
			\mathbf{M}_{n,11}  &  \mathbf{M}_{n,12}\\
			\mathbf{M}_{n,21} &  \mathbf{M}_{n,22}
		\end{bmatrix},
	\end{align}
	where $\mathbf{M}_{n,1}$ (resp. $\mathbf{M}_{n,2}$) is the first $ s_n \times p_n$ (resp. the following $ (p_n-s_n) \times p_n$) components of $\mathbf{M}_n$ and $ \mathbf{M}_{n,11}$ (resp. $ \mathbf{M}_{n,12}$, $ \mathbf{M}_{n,21}$, and $\mathbf{M}_{n,22}$) is the $s_n \times s_n$ top-left corner (resp. the $s_n \times (p_n-s_n)$ top-right corner, the $(p_n-s_n) \times s_n$ bottom-left corner, and the $(p_n-s_n) \times (p_n-s_n)$ bottom-right corner) of $\mathbf{M}_n$. In what follows, for a squared symmetric matrix $\mathbf{M}_n$, $\nu_{\min}(\mathbf M_n)$ and $\nu_{\max}(\mathbf M_n)$ denote respectively the smallest and largest eigenvalue of $\mathbf M_n$. Finally, $\|\mathbf y \|$ denotes the Euclidean norm of a vector $\mathbf y$, while $\|\mathbf M_n\| = \sup_{\|\mathbf y\|\neq 0} \|\mathbf M_n \mathbf y\|/\|\mathbf y\|$ denotes the spectral norm. We remind that the spectral norm is subordinate and that for a symmetric definite positive matrix $\|\mathbf M_n\|=\nu_{\max}(\mathbf M_n)$.

	\subsection{Adaptive lasso (AL)} \label{sec:al}
	
	When the number of parameters is large, regularization methods allow one to perform both estimation and variable selection simultaneously. When $p_n=p$, \cite{choiruddin2018convex} consider several regularization procedures which consist in adding a convex or non-convex penalty term to~\eqref{eq:likepois}. The proposed methods are unchanged even when the number of covariates diverges. In particular, the adaptive lasso consists in maximizing
	\begin{align} \label{regmed}
		Q_n( \bbeta)= \frac{1}{\mu_n}\ell_n( \bbeta) - {\sum_{j=1}^{p_n} \lambda_{n,j}|\beta_{j}|},
	\end{align}
	where the real numbers $\lambda_{n,j}$ are non-negative tuning parameters. We therefore define the adaptive lasso estimator as
	\begin{align}
		\hat \bbeta_{\mathrm{AL}}= \arg\max_{\bbeta \in \mathbb{R}^{p_n}} Q_n( \bbeta).
	\end{align}
	When $\lambda_{n,j}=0$ for $j=1,\dots,p_n$, the method reduces to the maximum composite likelihood estimator and when $\lambda_{n,j}=\lambda_n$ to the standard lasso estimator. If $\beta_1$ acts as an intercept, meaning that $z_1(u)=1, \forall u \in D_n$, it is often desired to let this parameter free. This can be done by setting $\lambda_{n,1}=0$ in the second term of \eqref{regmed}. Finally, the choice of $\mu_n$ as a normalization factor in~\eqref{regmed} 
	follows the implementation of the adaptive lasso procedure for generalized linear models in the standard software (e.g. \texttt{R} package \texttt{glmnet}~\cite{friedman2010regularization}).

	\subsection{Adaptive (linearized) Dantzig selector (ALDS)} \label{sec:alds}
	
	When applied to a likelihood \citep{candes2007dantzig,james2009generalized}, the Dantzig selector estimate is obtained by minimizing $\|\bbeta\|_1$ subject to the infinite norm of the score function {bounded} by some threshold parameter $\lambda$. In the spatial point process setting, we  propose an adaptive version of the Dantzig selector estimate as the solution of the problem
	{\begin{align}
			\label{MADSvec}
			\min \|\bld_n \bbeta \|_1 \mbox{ subject to } |(\mathbf U_n(\bbeta))_j| \le \lambda_{n,j} \quad \text{ for } j=1,\dots,p_n
		\end{align}
		where {$\bld_n=\mathrm{diag}(\lambda_{n,1},\cdots,\lambda_{n,p_n} )$} and where $\mathbf U_n(\bbeta)$ is the estimating equation given by~\eqref{eq:Un}. It is worth pointing out $\lambda_{n,j}=0$ for $j=1,\dots,p_n$ reduces  the criterion \eqref{MADSvec} to $\mathbf U_n(\bbeta)=0$, which leads to the maximum composite likelihood estimator. Similarly to the adaptive lasso procedure, the intercept can be let free by setting $\lambda_{n,1}=0$. However, in the following, we assume, for the ALDS procedure,  that $\lambda_{n,j}>0$ for convenience, in order to rewrite~\eqref{MADSvec} in the following matrix form
	}
	\begin{align}
		\label{MADS}
		\min \|\bld_n \bbeta \|_1 \mbox{ subject to } (\mu_n)^{-1} \Big \|\bld_n^{-1} \mathbf{U}_n(  {\bbeta})  \Big\|_\infty \leq  1.
	\end{align}
	{We claim that the whole methodology and the proofs could be redone without involving the notation $\bld_n^{-1}$ and so that Theorem~\ref{thm:main} remains valid if one does not regularize the intercept term for example.}
	
	Due to the nonlinearity of the constraint vector, standard linear programming can no more be used to solve \eqref{MADS}. This results in a non-convex optimization problem. In particular the feasible set $\{\bbeta : \| \bld_n^{-1} \mathbf U_n(\bbeta)\|_\infty \leq 1\}$ is non-convex, which makes the method difficult to implement and to analyze from a theoretical point of view. In the context of generalized linear models, \cite{james2009generalized} consider the iterative reweighted least squares method and define an iterative procedure where at each step of the algorithm the constraint vector corresponds to a linearization of the updated pseudo-score. Such a procedure is not straightforward to extend from~\eqref{MADS} and remains complex to analyze from a theoretical point of view. As an alternative, we  follow~\cite[][Chapter 3]{dicker2010regularized} and propose to linearize the constraint vector by expanding $\mathbf U_n(\bbeta)$ around $\tilde \bbeta$, an initial estimate of $\bbeta_0$, using a first order Taylor approximation; i.e. we substitute $\mathbf U_n(\bbeta)$ by $\mathbf U_n(\tilde \bbeta) + \mathbf A_n(\tilde \bbeta) (\tilde \bbeta - \bbeta)$. Such a linearization enables now the use of standard linear programming.
	We term 
	adaptive linearized Dantzig selector (ALDS) estimate and denote it by $\hat \bbeta_{\mathrm{ALDS}}$ the solution to the optimization problem
	\begin{align}
		\label{ADS2}
		\min \|\bld_n \bbeta \|_1 \mbox{ subject to }   (\mu_n)^{-1}\; \Big \|\bld_n^{-1} \big \{\mathbf{U}_n( \tilde {\bbeta}) +  \mathbf{A}_n( \tilde {\bbeta}) (\tilde {\bbeta}  -  {\bbeta}) \big \}  \Big\|_\infty \leq  1.
	\end{align}
	Properties of $\hat \bbeta_{\mathrm{ALDS}}$ depend on properties of $\tilde \bbeta$ which are made precise in the next section.
	
	\section{Asymptotic results} \label{sec:results}

	Our main result relies upon the following conditions:
	\begin{enumerate}[($\mathcal C$.1)]
		\item The intensity function has the log-linear specification given by~\eqref{eq:int} where $\bbeta \in \mathbb R^{p_n}$. \label{C:intensity}
		\item  $(\mu_n)_{n\ge 1}$ is an increasing sequence of real numbers, such that $\mu_n\to \infty$ as ${n} \to \infty$. \label{C:nun}
		\item The covariates $\mathbf z$ satisfy
		\[
		\sup_{n\geq 1} \; \sup_{i=1,\dots,p_n} \; \sup_{u \in \mathbb{R}^d} |z_i(u)| < \infty 
		\qquad \mbox{ and } \qquad
		\inf_{n\ge 1} \inf_{\boldsymbol \phi\in \mathbb R^{p_n}, \|\boldsymbol \phi\|=1} \inf_{u\in D_n} \{\boldsymbol \phi^\top \bz(u)\}^2 >0
		\] \label{C:cov}
		\item The intensity and pair correlation satisfy
		\[
		\int_{D_n}\int_{D_n} \rho(u;\bbeta_0)\rho(v;\bbeta_0)|g(u,v)-1| \dd u\dd v = O(\mu_n).
		\] \label{C:g}
		\item The matrix $B_{n,11}(\bbeta_0)$ satisfies 
		\[
		\liminf_{n} \inf_{\boldsymbol \phi \in \mathbb R^{s_n}, \|\boldsymbol \phi\|=1} \boldsymbol \phi^\top
		\big\{(\mu_n)^{-1}\mathbf{B}_{n,11}(\bbeta_0) \big\}\boldsymbol \phi>0.	
		\] \label{C:Bn}
		\item For any $\boldsymbol \phi \in \mathbb R^{s_n}\setminus \{0\}$, the following convergence holds in distribution as $n\to \infty$:
		\[
		\sigma_{\boldsymbol \phi}^{-1} \boldsymbol \phi^\top \mathbf U_{n,1}(\bbeta_0) \stackrel{d}{\to} N(0,1)
		\]
		where $\sigma^2_{\boldsymbol \phi} = \boldsymbol \phi^\top \mathbf{B}_{n,11}(\bbeta_0) \boldsymbol \phi$. \label{C:clt}
		\item The initial estimate $\tilde {\bbeta}$ satisfies $\|\tilde {\bbeta} - {\bbeta_0} \|=O_{\mathrm{P}}(\sqrt{p_n/{\mu_n}})$ and is such that $\|\mathbf A_{n,11}(\tilde \bbeta)^{-1}\|= O_\P(\mu_n^{-1})$. 
		\label{C:initial}
		\item As $n\to \infty$, we assume that $s_n,p_n$ and $\mu_n$ are such that as $n\to \infty$
		\[\left\{
		\begin{array}{ll}
			\max \left(\frac{p_n^4}{\mu_n} , \frac{s_n^2 p_n^3}{\mu_n}\right) \to 0& \quad \text{ for the AL estimate} \\
			\frac{s_n^3 p_n^4}{\mu_n} \to 0&\quad \text{ for the ALDS estimate.} 
		\end{array}\right.
		\] \label{C:snpn}
		\item Let  $a_n=\max_{j=1,\ldots,{s_n}} \lambda_{n,j}$ and $b_n=\min_{j={s_n}+1,\ldots,p_n} \lambda_{n,j}$. We assume that these sequences are such that, as $n \to \infty$
		\[\left\{
		\begin{array}{lll}
			a_n \sqrt{s_n \mu_n }\to 0, & b_n \sqrt{\frac{\mu_n}{p_n^2}} \to \infty &\quad \text{ for the AL estimate} \\
			a_n \sqrt{s_n^3 \mu_n }\to 0, & b_n \sqrt{\frac{\mu_n}{p_n^3}} \to \infty &\quad \text{ for the ALDS estimate.} 
		\end{array}\right.
		\]
		\label{C:anbn}
	\end{enumerate}

	Condition~\cond{C:intensity} specifies the form of intensity models considered in this paper. In particular, note that we do not assume that $\bbeta$ is an element of a bounded domain of $\mathbb R^{p_n}$. Condition~\cond{C:nun} specifies our asymptotic framework where we assume to observe in average more and more points in $D_n$. As already mentioned, this may cover increasing domain type or infill type asymptotics. To our knowledge, only \cite{choiruddin2021information} consider a similar asymptotic framework in order to construct information criteria for spatial point process intensity estimation. The context is however very different here as we consider a {large number of covariates} and we study methodologies (adaptive lasso or Dantzig) which are able to produce a sparse estimate.  
	Condition~\cond{C:cov} is quite standard and is not too restrictive. Note that conditions~\cond{C:intensity}-\cond{C:cov} allow us in Lemma~\ref{lem:rhoubeta} to prove that in a `neighbordhood' of $\bbeta_0$, $\int \rho(u;\bbeta)= O(\mu_n)$, a useful result widely used in our proofs. The last part of Condition~\cond{C:cov} asserts that at any location the covariates are linearly independent. Condition~\cond{C:cov} also implies first that $\liminf_{n} \inf_{\boldsymbol \phi, \|\boldsymbol \phi\|=1} \boldsymbol \phi^\top\big\{(\mu_n)^{-1}\mathbf{A}_{n,11}(\bbeta_0) \big\}\boldsymbol \phi>0$ and second that $\liminf_{n} \inf_{\boldsymbol \phi, \|\boldsymbol \phi\|=1} \boldsymbol \phi^\top\big\{(\mu_n)^{-1}\mathbf{A}_{n}(\bbeta_0) \big\}\boldsymbol \phi>0$. Condition~\cond{C:Bn} is a similar assumption but for the submatrix $\mathbf B_{n,11}(\bbeta_0)$ which corresponds to $\mathrm{Var}\{\mathbf U_{n,1}(\bbeta_0)\}$. Condition~\cond{C:g} is also natural. Combined with~Condition~\cond{C:cov}, this implies that $\mathrm\mathbf B_{n}(\bbeta_0)= O(\mu_n p_n)$. When $p_n=p$ (and therefore $s_n=s$) and in the increasing domain framework, such an assumption can be satisfied by a large class of spatial point processes such as determinantal point processes, log-Gaussian Cox processes {and Neyman-Scott point processes \cite[see][]{choiruddin2018convex}. When $p_n=p$ and in the infill asymptotic framework, these assumptions are also valid for many spatial point processes, as discussed by~\cite{choiruddin2021information}}. Condition~\cond{C:clt} is required to derive the asymptotic normality of $\hat \bbeta_{01}$. Under a specific framework, such a result has already been obtained for a large class of spatial point processes: by \cite{biscio:waagepetersen:19,waagepetersen2009two} under the increasing domain framework and $p_n=p$ and~\cite{choiruddin2017spatial} when $p_n\to \infty$; by~\cite{choiruddin2021information} in the infill/increasing domain asymptotics frameworks and $p_n=p$.
	
	Condition ($\mathcal C$.\ref{C:initial}) is very specific to the ALDS estimate which requires a preliminary estimate of $\bbeta$. That condition is not unrealistic  as a simple choice for $\tilde \bbeta$ could be the maximum of the composite likelihood function~\eqref{eq:likepois}, see the remark after Theorem~\ref{thm:main}. Of course, we do not require that $\tilde \bbeta$ produces a sparse estimate.
	
	Condition ($\mathcal C$.\ref{C:snpn}) reflects the restriction on the number of covariates that can be considered in this study. For the AL estimate, this assumption is very similar to the one required by~\cite{fan2004nonconcave} when $\mu_n$ is replaced by $n$ and where the number of non-zero coefficients $s_n$ is constant.
	
	Condition ($\mathcal C$.\ref{C:anbn}) contains the main ingredients to derive sparsity properties, consistency and asymptotic normality. We first note that if $\lambda_{n,j}=\lambda_n$, then $a_n=b_n=\lambda_n$, whereby it is easily deduced that the two conditions on $a_n$ and $b_n$ cannot be satisfied simultaneously even if $p_n=p$. This justifies the introduction of an adaptive version of the Dantzig selector and motivates the use of the adaptive lasso. The condition $a_n\sqrt{s_n \mu_n}\to 0$ for the adaptive lasso is similar to the one imposed by \cite{fan2004nonconcave} when $\mu_n$ is replaced by $n$ and $s_n=s$ in their context. However, we require a slightly stronger condition on $b_n$ than the one required by \cite{fan2004nonconcave}. In our setting, their assumption would be written as $b_n \sqrt{\mu_n/p_n} \to \infty$. However, we would have to assume that $\nu_{\max}\big(\mathbf{A}_{n}(\bbeta_0)\big)=O(\mu_n)$. Such a condition is not straightforwardly satisfied in our setting since, for instance, the conditions~\cond{C:nun}-\cond{C:g} 
	only imply that $\nu_{\max}\big(\mathbf{A}_{n}(\bbeta_0)\big)=O({p_n \mu_n})$.

	As already mentioned, we do not assume that $\tilde \bbeta$ satisfies any sparsity property. We believe this is the main reason why conditions~\cond{C:snpn})-\cond{C:anbn} contain slightly stronger assumptions for the ALDS estimate than for the AL estimate. We now present our main result, whose proof is provided in Appendices~\ref{sec:proofAL}-\ref{sec:proofALDS}.
	
	\medskip
	
	
	\begin{theorem}  
		\label{thm:main}
		Let $\hat \bbeta$ denote either $\hat \bbeta_{\mathrm{AL}}$ or $\hat \bbeta_{\mathrm{ALDS}}$. Assume that  the conditions ($\mathcal C$.1)-($\mathcal C$.\ref{C:anbn}) hold, then the following properties hold.
		\begin{enumerate}[(i)]
			\item $\hat \bbeta$ exists. Moreover $\hat \bbeta_{\mathrm{AL}}$ satisfies, 
			${\displaystyle
				\|\hat \bbeta_{\mathrm{AL}} -\bbeta_0\| = O_{\mathrm P} 
				\left( \sqrt{\frac{p_n}{\mu_n}} \right)}$.
			\item Sparsity: $\mathrm{P}(\hat \bbeta_{2}=0) \to 1$ as $n \to \infty$.
			\item Asymptotic Normality: for any $\boldsymbol \phi \in \mathbb R^{s_n}\setminus \{0\}$ such that $\|\boldsymbol \phi\|<\infty$
			\[
			\sigma_{\boldsymbol \phi}^{-1} \, \boldsymbol \phi^\top \mathbf A_{n,11}(\bbeta_0)
			(\hat \bbeta_{1}- \bbeta_{01})\xrightarrow{d} \mathcal{N}(0, 1)
			\]
		\end{enumerate}
		in distribution, where $\sigma^2_{\boldsymbol \phi} = \boldsymbol \phi^\top \mathbf{B}_{n,11}(\bbeta_0) \boldsymbol \phi$.
	\end{theorem}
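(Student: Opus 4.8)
The plan is to establish the three conclusions through the standard oracle‑property template (existence and rate, then sparsity, then asymptotic normality), treating the two estimators in parallel at the level of the common analytic engine and separately where the penalized (AL) and constrained (ALDS) structures diverge. The common ingredients I would assemble first are: (a) the preliminary bound $\int_{D_n}\rho(u;\bbeta)\,\dd u = O(\mu_n)$ holding uniformly over a shrinking neighbourhood of $\bbeta_0$ (Lemma~\ref{lem:rhoubeta}), which, via~\cond{C:cov} and~\cond{C:g}, gives the trace bound $\mathbb E\|\mathbf U_n(\bbeta_0)\|^2 = \mathrm{tr}\,\mathbf B_n(\bbeta_0) = O(p_n\mu_n)$, hence $\|\mathbf U_n(\bbeta_0)\| = O_\P(\sqrt{p_n\mu_n})$ and the per‑component/row bounds $|(\mathbf U_n(\bbeta_0))_j| = O_\P(\sqrt{\mu_n})$, $\|(\mathbf A_n(\bbeta_0))_{j\cdot}\| = O(\sqrt{p_n}\,\mu_n)$; (b) the eigenvalue lower bounds $\nu_{\min}(\mu_n^{-1}\mathbf A_{n,11}(\bbeta_0))\ge c>0$ and $\nu_{\min}(\mu_n^{-1}\mathbf B_{n,11}(\bbeta_0))\ge c>0$ from~\cond{C:cov} and~\cond{C:Bn}; and (c) the second‑order expansion of the concave function $\ell_n$ using $\mathbf A_n(\bbeta) = -\dd\mathbf U_n(\bbeta)/\dd\bbeta^\top$, together with a Lipschitz‑type control of $\bbeta\mapsto\mathbf A_n(\bbeta)$ (the third‑order quantities $\int z_iz_jz_k\,\rho$) over the neighbourhood $\|\bbeta-\bbeta_0\| = O(\sqrt{p_n/\mu_n})$. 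The rate part of~\cond{C:snpn}, in particular $p_n^2/\mu_n\to 0$, is exactly what makes $\rho(u;\bbeta)\approx\rho(u;\bbeta_0)$ and hence $\mathbf A_n(\bbeta)\approx\mathbf A_n(\bbeta_0)$ uniformly on this neighbourhood.

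For the adaptive lasso I would run the Fan--Peng ball argument. With $\alpha_n=\sqrt{p_n/\mu_n}$, expanding $Q_n(\bbeta_0+\alpha_n\mathbf u)-Q_n(\bbeta_0)$ gives a dominant negative quadratic term bounded below in magnitude by $c\,\alpha_n^2\|\mathbf u\|^2$, a score term of order $O_\P(\tfrac{p_n}{\mu_n}\|\mathbf u\|)$, and a first‑block penalty contribution at most $a_n\alpha_n\sqrt{s_n}\|\mathbf u\|$ while the second‑block penalty only helps; since $a_n\sqrt{s_n\mu_n}\to 0$ this last term is $o$ of the quadratic. Taking $\|\mathbf u\| = C$ large forces $Q_n$ to decrease on the sphere with probability $\ge 1-\epsilon$, which yields a local maximizer with $\|\hat\bbeta_{\mathrm{AL}}-\bbeta_0\| = O_\P(\alpha_n)$, i.e. conclusion (i). Sparsity (ii) follows from the KKT condition: if $\hat\beta_j\ne 0$ for some $j>s_n$ then $\mu_n^{-1}|(\mathbf U_n(\hat\bbeta))_j| = \lambda_{n,j}\ge b_n$; but expanding $\mathbf U_n(\hat\bbeta)$ about $\bbeta_0$ and using the row bound gives $\mu_n^{-1}|(\mathbf U_n(\hat\bbeta))_j| = O_\P(p_n/\sqrt{\mu_n})$, which is $o(b_n)$ precisely because $b_n\sqrt{\mu_n/p_n^2}\to\infty$, a contradiction, so $\hat\bbeta_2=0$ w.h.p. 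For (iii), on the sparsity event the signal‑block KKT equations read $\mu_n^{-1}\mathbf U_{n,1}(\hat\bbeta_1,\mathbf 0) = \bld_{n,1}\sign(\hat\bbeta_1)$; expanding about $\bbeta_0$ gives $\mathbf A_{n,11}(\bbeta^*)(\hat\bbeta_1-\bbeta_{01}) = \mathbf U_{n,1}(\bbeta_0) - \mu_n\bld_{n,1}\sign(\hat\bbeta_1)$, and after replacing $\mathbf A_{n,11}(\bbeta^*)^{-1}$ by $\mathbf A_{n,11}(\bbeta_0)^{-1}$ (a step controlled by~\cond{C:snpn}) the bias term satisfies $\mu_n|\boldsymbol\phi^\top\bld_{n,1}\sign(\hat\bbeta_1)| = O(a_n\mu_n\sqrt{s_n}) = o(\sigma_{\boldsymbol\phi})$, since $\sigma_{\boldsymbol\phi}\ge\sqrt{c\mu_n}\,\|\boldsymbol\phi\|$ and $a_n\sqrt{s_n\mu_n}\to 0$. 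The statement then reduces to $\sigma_{\boldsymbol\phi}^{-1}\boldsymbol\phi^\top\mathbf U_{n,1}(\bbeta_0)\xrightarrow{d}\mathcal N(0,1)$, which is exactly~\cond{C:clt}.

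For the ALDS the geometry is constrained rather than penalized. Existence is immediate: the one‑step point $\tilde\bbeta+\mathbf A_n(\tilde\bbeta)^{-1}\mathbf U_n(\tilde\bbeta)$ annihilates the linearized score and is therefore feasible, so the feasible polytope is nonempty and a minimizer exists (invertibility of $\mathbf A_n(\tilde\bbeta)$ following from~\cond{C:cov} and~\cond{C:initial}). Because the signal‑block thresholds are set below the noise level $\sqrt{\mu_n}$ while the noise‑block thresholds exceed it (the asymmetry in~\cond{C:anbn}), $\bbeta_0$ itself is not feasible; the correct reference is the feasible sparse oracle point $\bbeta^o=(\bbeta_1^o,\mathbf 0)$ whose first block solves the first‑block linearized equations and which satisfies the loose noise‑block constraints w.h.p. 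Writing $\mathbf h=\hat\bbeta-\bbeta^o$ (so $\mathbf h_2=\hat\bbeta_2$) and using that the linearized score is affine with slope $-\mathbf A_n(\tilde\bbeta)$, feasibility of $\hat\bbeta$ and of $\bbeta^o$ yields $\|\bld_n^{-1}\mathbf A_n(\tilde\bbeta)\mathbf h\|_\infty\le 2\mu_n$, while $\ell_1$‑optimality gives the cone bound $b_n\|\mathbf h_2\|_1\le a_n\sqrt{s_n}\,\|\mathbf h_1\|$, so $\|\mathbf h_2\|$ is negligible against $\|\mathbf h_1\|$. Substituting into $\mathbf A_{n,11}(\tilde\bbeta)\mathbf h_1 = (\mathbf A_{n,1}(\tilde\bbeta)\mathbf h) - \mathbf A_{n,12}(\tilde\bbeta)\mathbf h_2$ and inverting $\mathbf A_{n,11}$ gives the rate for $\mathbf h_1$; sparsity (ii) then follows from the optimality conditions of the linear program, the large noise‑block thresholds ($b_n\sqrt{\mu_n/p_n^3}\to\infty$) forcing $\hat\beta_j=0$ for $j>s_n$ with probability tending to one; and normality (iii) follows because the signal‑block constraint is active, forcing $\mathbf A_{n,11}(\tilde\bbeta)(\hat\bbeta_1-\bbeta_{01}) = \mathbf U_{n,1}(\bbeta_0) + o_\P(\sigma_{\boldsymbol\phi})$ once the $O(\mu_n a_n\sqrt{s_n})$ boundary term and the linearization error are shown negligible via $a_n\sqrt{s_n^3\mu_n}\to 0$, whereupon~\cond{C:clt} concludes.

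I expect the main obstacle to be the uniform, high‑dimensional control of the nonlinear remainders. Concretely, this means bounding the variation of $\bbeta\mapsto\mathbf A_n(\bbeta)$ across the $O_\P(\sqrt{p_n/\mu_n})$ neighbourhood (the third‑order $\int z_iz_jz_k\,\rho$ terms, and the replacement $\mathbf A_{n,11}(\bbeta^*)^{-1}\to\mathbf A_{n,11}(\bbeta_0)^{-1}$ needed for normality), and, for the ALDS, additionally the linearization error $\mathbf U_n(\tilde\bbeta)+\mathbf A_n(\tilde\bbeta)(\tilde\bbeta-\bbeta_0)-\mathbf U_n(\bbeta_0) = [\mathbf A_n(\tilde\bbeta)-\mathbf A_n(\bbeta^{**})](\tilde\bbeta-\bbeta_0)$, which has to be handled \emph{without} assuming that $\tilde\bbeta$ is sparse and using only $\|\tilde\bbeta-\bbeta_0\| = O_\P(\sqrt{p_n/\mu_n})$ from~\cond{C:initial}. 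These remainder estimates are exactly what dictate the stronger requirements $s_n^3p_n^4/\mu_n\to 0$, $a_n\sqrt{s_n^3\mu_n}\to 0$ and $b_n\sqrt{\mu_n/p_n^3}\to\infty$ for the ALDS as compared with the AL.
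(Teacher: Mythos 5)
Your treatment of the adaptive lasso is essentially the paper's own argument: the Fan--Peng ball argument with radius $\sqrt{p_n/\mu_n}$ for existence and rate (Proposition~\ref{proposition:AL}), sparsity from the component-wise bound $\partial \ell_n(\bbeta)/\partial\beta_j = O_{\mathrm P}(p_n\sqrt{\mu_n})$ against the threshold $\mu_n b_n$ (your KKT contradiction is equivalent to the paper's sign-of-derivative Lemma~\ref{sparsity}), and normality from the stationarity equations with the $\boldsymbol\Psi_n$ remainder and bias term killed by \cond{C:snpn}--\cond{C:anbn} and Slutsky. For ALDS existence, your argument is genuinely different from, and simpler than, the paper's: you exhibit the one-step point $\tilde\bbeta + \mathbf A_n(\tilde\bbeta)^{-1}\mathbf U_n(\tilde\bbeta)$ as feasible and invoke coercivity of $\|\bld_n\bbeta\|_1$ on a closed polyhedron, whereas the paper (Lemma~\ref{lem:existenceALDS}) goes through the LP dual and strong duality. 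Your route buys brevity; the paper's buys the dual problem~\eqref{eq:dual} and the KKT apparatus of Lemma~\ref{lemma:opt}, which it then needs anyway --- and that is where your proposal breaks down.

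The genuine gap is in the ALDS sparsity and normality claims. Your cone bound $b_n\|\mathbf h_2\|_1 \le a_n\sqrt{s_n}\,\|\mathbf h_1\|$ is correctly derived from $\ell_1$-optimality against the oracle point, but it only shows $\hat\bbeta_2$ is \emph{small}; conclusion (ii) requires $\mathrm P(\hat\bbeta_2 = \mathbf 0)\to 1$ \emph{exactly}, and no amount of norm-smallness delivers an exact zero. Likewise, your assertion that ``the signal-block constraint is active'' --- the ingredient you use to extract $\mathbf A_{n,11}(\tilde\bbeta)(\hat\bbeta_1-\bbeta_{01}) = \mathbf U_{n,1}(\bbeta_0)+o_{\mathrm P}(\sigma_{\boldsymbol\phi})$ --- is precisely what must be proven, not assumed: for a linear program nothing forces the minimizer to saturate a prescribed block of constraints or to vanish on a prescribed block of coordinates. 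The paper closes this by a primal--dual witness construction (Lemmas~\ref{lemma:opt} and~\ref{lemma:sparsity}): it \emph{defines} a candidate $\hat\bbeta_{\mathrm{ALDS}} = (\hat\bbeta_{\mathrm{ALDS},1}^\top,\mathbf 0^\top)^\top$ via the explicit formula~\eqref{eq:betahat1} (which makes the signal-block constraints active by construction) together with a dual certificate $\hat\bgam = (\hat\bgam_1^\top,\mathbf 0^\top)^\top$, and then verifies that the four conditions \eqref{eq:fea1}--\eqref{eq:slack2} hold with probability tending to one --- the hard estimates being $\mu_n^{-1}\|\mathbf{II}'\|_\infty = o_{\mathrm P}(1)$ and $\mu_n^{-1}\|\tilde{\mathbf{II}}\|_\infty = o_{\mathrm P}(1)$, which consume the bounds of Lemma~\ref{lem:aux} and the stronger ALDS conditions $b_n\sqrt{\mu_n/p_n^3}\to\infty$, $a_n\sqrt{s_n^3\mu_n}\to 0$. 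Only once the KKT conditions certify this sparse candidate as an optimal solution do sparsity, the rate $O_{\mathrm P}(\sqrt{s_n/\mu_n})$ for the first block, and the asymptotic-normality expansion follow. You correctly identified the linearization error and the $\mathbf A_{n,11}$-replacement as obstacles, but the certificate construction --- the actual mechanism producing exact zeros --- is missing from your plan.
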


	
	\medskip
	
	To derive the consistency of $\hat \bbeta_{\mathrm{AL}}$, a careful look at the proof in Appendix~\ref{sec:proofAL} shows that the condition $a_n \sqrt{{\mu_n s_n}/p_n} \to 0$ could be sufficient.  The convergence rate, i.e. $O_{\mathrm P} ( \sqrt{p_n/\mu_n})$, is $\sqrt{p_n}$ times the convergence rate of the estimator obtained when $p_n$ is constant \citep[see ][Theorem 1]{choiruddin2018convex}. It also corresponds to the rate of convergence obtained by \cite{fan2004nonconcave} for generalized linear models when $p_n\to \infty$ and when $\mu_n$ corresponds to the standard sample size. It is worth pointing out that a possible diverging number of non-zero coefficients $s_n$ does not affect the rate of convergence. It does however impose a more restrictive condition on $a_n$. 
	Still on Theorem~\ref{thm:main} (i), its proof shows that this result remains valid when $\lambda_{n,j}=0$ for  $j=1,\dots,p_n$. In other words, the maximum composite likelihood estimator is consistent with the same rate of convergence. Hence a simple choice for the initial estimate $\widetilde \bbeta$ defining the ALDS estimate is the maximum of the Poisson likelihood given by~\eqref{eq:likepois}.
	
	Theorem~\ref{thm:main} (iii) would be the result one would obtain if $p_n-s_n=0$. Therefore, the efficiency  of $\hat \bbeta_{\mathrm{AL},1}$ and $\hat \bbeta_{\mathrm{ALDS},1}$ is the same as the estimator of $\bbeta_{01}$ obtained by maximizing \eqref{eq:likepois} based on the sub model knowing that $\bbeta_{02}=\mathbf{0}$. In other words, when $n$ is sufficiently large, both estimators are as efficient as the oracle one. 
	
	We end this section with the following remark. Despite the asymptotic properties for $\hat \bbeta_{\mathrm{AL}}$ and $\hat \bbeta_{\mathrm{ALDS}}$  and the conditions under which they are valid,  are (almost) identical, the proofs are completely different and rely upon different tools. For $\hat \bbeta_{\mathrm{AL}}$, our contribution is to extend the proof by~\cite{choiruddin2018convex} where only the increasing domain framework was considered, i.e. $\mu_n=O(|D_n|)$ and $D_n\to \mathbb R^d$ as $n\to \infty$ and where $s_n=s$ and $p_n=p$. The results for $\hat \bbeta_{\mathrm{ALDS}}$ are the first ones available for spatial point processes. To handle this estimator, we first have to study existence and optimal solutions for the primal and dual problems.  
	
	\section{Computational considerations} \label{sec:comp}
	
	To lessen notation, we remove the index $n$ in different quantities such as $\mathbf U_n, \lambda_{n,j}, \ell_n$, and $\mu_n$. For the practical point of view, we define $\mu=N(D)$ the number of data points in $D$.
	
	\subsection{Berman-Turner approach}
	
	Before discussing how AL and ALDS estimates are obtained we first remind the Berman-Turner approximation \citep{baddeley2015spatial} used to derive the Poisson likelihood estimate~\eqref{eq:likepois}. The so-called Berman-Turner approximation consists in discretizing the integral term in \eqref{eq:likepois} as
	\begin{align*}
		{\int_{D}  \rho(u; \bbeta) \mathrm{d}u} \approx {\sum_{i=1}^{M} w(u_i) \rho (u_i; \bbeta)},
	\end{align*}
	where $u_i, i=1,\ldots,M$ are points in $D$ consisting of the $m$ data points and $M-m$ dummy points and where the quadrature weights $w(u_i)>0$ are non-negative real numbers such that ${\sum_i w(u_i)}=|D|$. Using this specific integral discretization, \eqref{eq:likepois} is then approximated by
	\begin{align}
		\label{eq:appx:pois}
		\ell(\bbeta) \approx \tilde \ell(\bbeta) = {\sum_{i=1}^{M} w_i \{y_i \log \rho_i(\bbeta) - \rho_i(\bbeta)\}},
	\end{align}
	where $w_i=w(u_i), y_i=w_i^{-1} \mathbf{1}(u_i \in \bX \cap D)$ and $\rho_i(\bbeta)=\rho (u_i; \bbeta)$.
	Equation \eqref{eq:appx:pois} is formally equivalent to the weighted likelihood function of independent Poisson variables $y_i$ with weights $w_i$. The method to approximate \eqref{eq:Un} and \eqref{eq:An} follows along similar lines which respectively results in
	\begin{align}
		\mathbf{U}(\bbeta) \approx \tilde{\mathbf{U}}(\bbeta) = {\sum_{i=1}^{M} w_i   \bz_i \{y_i - \rho_i(\bbeta)\}}, \quad
		\mathbf{A}(\bbeta) \approx \tilde{\mathbf{A}}(\bbeta) = {\sum_{i=1}^{M} w_i   \bz_i \bz_i^\top \rho_i(\bbeta)}, \label{eq:approx:An}
	\end{align}
	where $\bz_i=\bz(u_i)$. Thus, standard statistical software for generalized linear models can be used to obtain the estimates. This fact is implemented in the $\mathtt{spatstat}$ $\texttt{R}$ package by $\texttt{ppm}$ function with option $\texttt{method="mpl"}$ \citep{baddeley2015spatial}.
	
	\subsection{Adaptive lasso (AL)} \label{sec:comp:AL}
	
	First, given a current estimate $\check \bbeta$, \eqref{eq:appx:pois} is approximated using second order Taylor approximation to apply iteratively reweighted least squares,
	\begin{align}
		\tilde \ell(\bbeta) \approx \ell_Q(\boldsymbol {\beta})
		=- \frac{1}{2} {\sum_{i=1}^M \psi_i (y_i^*-\bbeta^\top \bz_i)^2+C(\check \bbeta)} \label{eq:quad},
	\end{align}
	where $C(\check \bbeta)$ is a constant, $y_i^*$ and $\psi_i$ are the new working response values and weights,
	$
	y_i^*=\bz_i^\top \check \bbeta+\{y_i - \exp(\check \bbeta^\top\bz_i)\}/\{\exp(\check \bbeta^\top\bz_i)\}, \;\psi_i= w_i \exp(\check \bbeta^\top\bz_i). 
	$
	Second, a penalized weighted least squares problem is obtained by adding penalty term. Therefore, we solve
	\begin{align}
		\label{eq:glmnet}	
		{\displaystyle \min_{\bbeta \in \mathbb{R}^{p}} \Omega(\bbeta)}={\displaystyle \min_{\bbeta \in \mathbb{R}^{p}} \left\{-\frac{1}{N(D)}\ell_Q(\bbeta)+\sum_{j=1}^{p} \lambda_j|\beta_j|\right\}}
	\end{align}
	using the coordinate descent algorithm \citep{friedman2010regularization}. The method consists in partially minimizing ($\ref{eq:glmnet}$) with respect to $\beta_j$ given $\check \beta_l$ for $l \neq j$, $l,j=1,\ldots,p$, that is
	\begin{align*}	
		{\displaystyle \min_{\beta_j} \Omega (\check \beta_1, \ldots, \check \beta_{j-1}, \beta_j, \check \beta_{j+1}, \ldots, \check \beta_{p})}.
	\end{align*}
	With a few modifications \eqref{eq:glmnet} is solved using the $\mathtt{glmnet}$ $\texttt{R}$ package \citep{friedman2010regularization}. More detail about this implementation can be found in~\cite[][Appendix C]{choiruddin2018convex}.
	
	\subsection{Adaptive (linearized) Dantzig selector (ALDS)} \label{sec:comp:ALDS}
	
	Given $\tilde {\bbeta}$, \eqref{ADS2} is a linear problem, simple to implement. The main task is to compute the vectors $\mathbf{U}(\tilde \bbeta)$ and $\mathbf{A}(\tilde \bbeta)(\tilde {\bbeta}- {\bbeta})$. Typically, $\tilde {\bbeta}$ is chosen from the maximum composite likelihood estimate. Then, $\mathbf{U}(\tilde \bbeta)$ and $\mathbf{A}(\tilde \bbeta)(\tilde {\bbeta}- {\bbeta})$ are approximated by \eqref{eq:approx:An}. This results in solving
	\begin{align*}
		\min {\sum_{j=1}^{p} \lambda_j |\beta_j|} \mbox{ subject to } {N(D)}^{-1}\Big |\tilde{\mathbf{U}}_{j}(\tilde \bbeta) + \big\{\tilde{\mathbf{A}}(\tilde \bbeta)(\tilde {\bbeta}- {\bbeta})\big\}_j \Big | \leq \lambda_j, \quad \text{ for } j=1,\dots,p_n,
	\end{align*}
	where $ \tilde{\mathbf{U}}_{j}(\tilde \bbeta)$ and $ \{\tilde{\mathbf{A}}(\tilde \bbeta)(\tilde {\bbeta}- {\bbeta})\}_j$ are the $j$-th components of vectors $\tilde{\mathbf{U}}(\tilde \bbeta)$ and $\tilde{\mathbf{A}}(\tilde \bbeta)(\tilde {\bbeta}- {\bbeta})$.
	
	\subsection{Tuning parameter selection}
	\label{tuning}
	
	Both AL and ALDS rely on the proper regularization parameters $\lambda_j$ to avoid from unnecessary bias due to too small $\lambda_j$ selection and from large variance due to too large $\lambda_j$ choice, so the selection of $\lambda_j$ becomes an important task. To tune the $\lambda_j$, we follow \cite{choiruddin2018convex,zou2006adaptive} and define $\lambda_j= \lambda |\tilde \beta_j|^{-\nu}$, where $\lambda\geq 0, \nu>0$ and $\tilde {\bbeta}$ is the maximum composite likelihood estimate. The weights $|\tilde \beta_j|^{-\nu}$ serve as a prior knowledge to identify the non-zero coefficients since, for a constant $\lambda$, large (resp. small) $\tilde \beta_j$ will force $\lambda_j$ close to zero (resp. infinity) \cite{zou2006adaptive},

	{Our theoretical results are not in line with this stochastic way of setting the regularization parameters but we believe this choice is pertinent in our context. Here is the intuition: let $\lambda_{nj}=\lambda_n/|\tilde \beta_j|$. Using the $\sqrt{\mu_n/p_n}$-consistency of $\tilde \beta_j$, $\lambda_{nj}=O_P(\lambda_n)$ for non-zero coefficients, while, for zero coefficients, we may conjecture that $\lambda_{nj} = O_P(\lambda_n \sqrt{\mu_n/p_n})$. Forgetting the $O_P$, we may conjecture that $a_n\asymp \lambda_n$ and $b_n\asymp\lambda_n \sqrt{\mu_n/p_n}$. Hence, considering the AL procedure for instance, that would mean that we require $\lambda_n$  to be such that $\lambda_n \sqrt{s_n\mu_n}\to 0$ and $\lambda_n \mu_n/p_n^{3/2} \to \infty$, which constitutes a non empty  condition. For instance, assuming~($\mathcal C$.\ref{C:snpn}), the sequence $\lambda_n=(s_n\mu_n)^{-\eta}$ satisfies~($\mathcal C$.\ref{C:anbn}) as soon as $1/2<\eta<5/9$, since as $n\to \infty$
		\[
		\lambda_n \sqrt{s_n\mu_n} = (s_n\mu_n)^{1/2-\eta}\to 0  
		\qquad \text{ and } \qquad 
		\frac{\lambda_n \mu_n}{p_n^{3/2}}=\left(\frac{\mu_n}{s_n^2 p_n^3}\right)^{\eta/2} \left(\frac{\mu_n}{p_n^4}\right)^{1-3\eta/2} 
		p_n^{5/2-9\eta/2}	 \to \infty.
		\] 
		Considering rigorously the stochastic choice $\lambda_n/|\tilde \beta_j|$ for the regularization parameters is left for future research.}
	
	The remaining task is to specify $\lambda$. We follow here the literature, mainly \cite{choiruddin2021information}, and propose to select $\lambda$ as the minimum of the Bayesian information criterion,  BIC($\lambda$), for spatial point processes
	\begin{align*}
		\mathrm{BIC}(\lambda)=-2 \ell\{\hat{\bbeta} (\lambda)\} +  p_* \log N(D),
	\end{align*}
	where $\ell\{\hat{\bbeta}  (\lambda)\}$ is the maximized composite likelihood function, $p_*$ is the number of non-zero elements in $\hat \bbeta(\lambda)$ and $N(D)$ is the number of observed data points. 
	
	\section{Numerical results} \label{sec:num}
	
	In Sections~\ref{sec:sim}-\ref{sec:appl}, we compare the AL and ALDS for intensity modeling of a simulated and real data. In particular, the real data example comes from an environmental data where 1146 locations of Acalypha diversifolia trees given by Figure~\ref{fig:acaldi} are surveyed in a 50-hectare region ($D =1000m \times 500m$) of the tropical forest of Barro Colorado Island (BCI) in central Panama \cite[e.g.][]{hubbell2005barro}. A main question is how this tree species profits from environmental habitats \cite{choiruddin2020regularized,waagepetersen2009two} that could be related to the 15 environmental covariates depicted in Figure~\ref{cov} and their 79 interactions. With a total of 94 covariates, we perform variable selection using the AL and ALDS to determine which covariates should be included in the model. We center and scale the 94 covariates to sort the important covariates according to the magnitudes of $\hat \bbeta$. The subset of such covariates are also considered to construct a realistic setting for the simulation studies.
	
	\begin{figure}[ht]
		\includegraphics[width=0.81\textwidth]{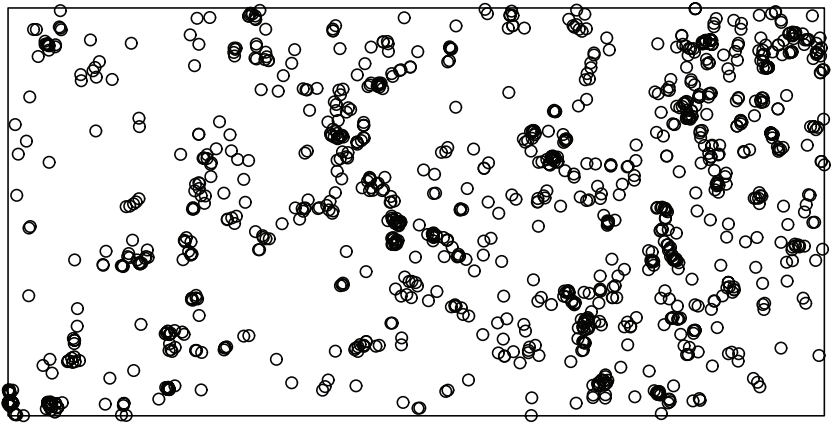}
		\caption{Plot of 1146 Acalypha diversifolia tree locations observed in the tropical forest of Barro Colorado Island}
		\label{fig:acaldi}
	\end{figure}
	
	\begin{figure}[ht]
		\renewcommand{\arraystretch}{0}
		\centering
		\begin{tabular}{l l l}
			\includegraphics[width=0.31\textwidth]{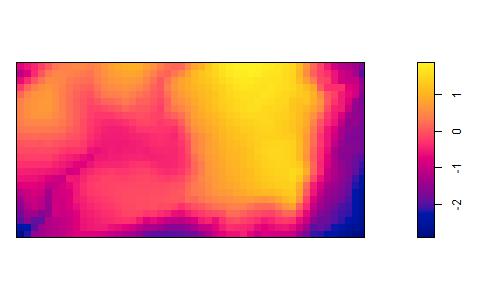} & \includegraphics[width=0.31\textwidth]{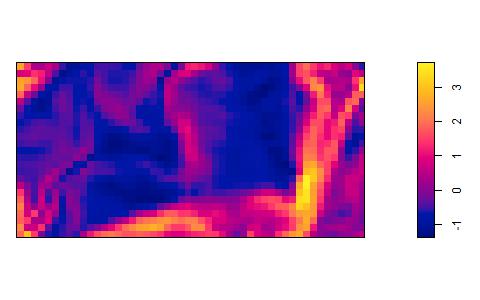} &  \includegraphics[width=0.31\textwidth]{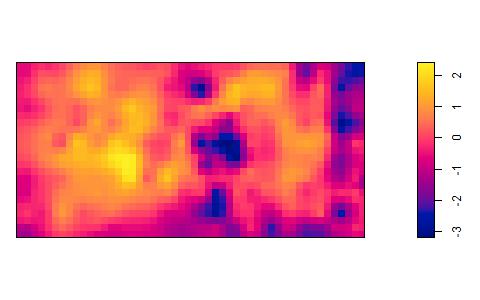} \\ \includegraphics[width=0.31\textwidth]{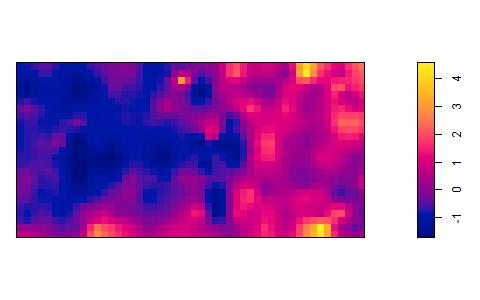} &  \includegraphics[width=0.31\textwidth]{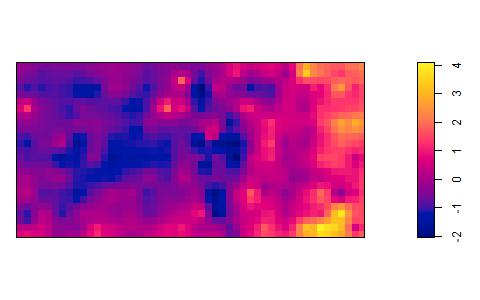} &
			\includegraphics[width=0.31\textwidth]{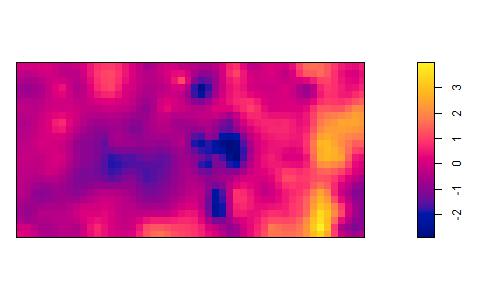} \\ \includegraphics[width=0.31\textwidth]{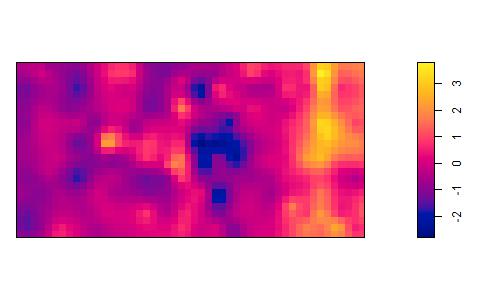} &  \includegraphics[width=0.31\textwidth]{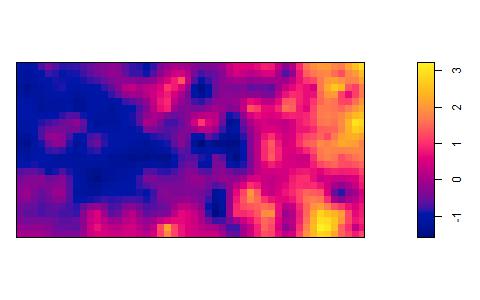} &  \includegraphics[width=0.31\textwidth]{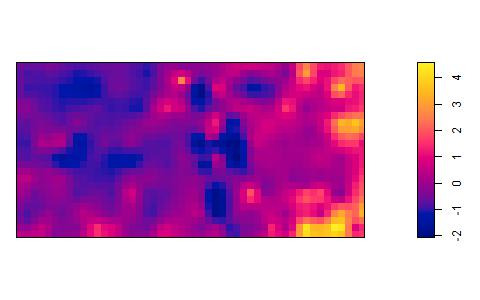} \\  \includegraphics[width=0.31\textwidth]{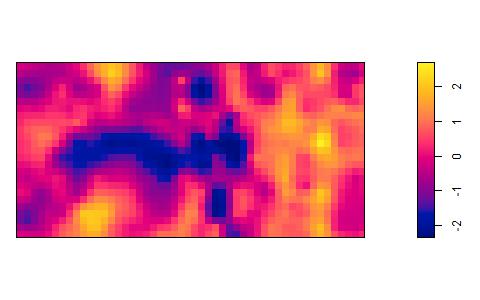} &
			\includegraphics[width=0.31\textwidth]{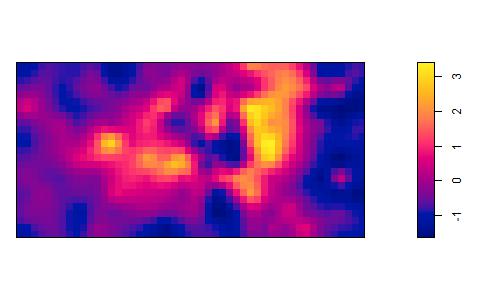} & \includegraphics[width=0.31\textwidth]{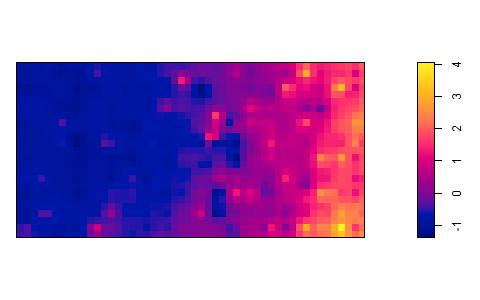} \\  \includegraphics[width=0.31\textwidth]{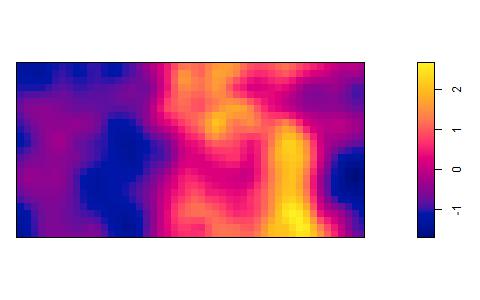} &  \includegraphics[width=0.31\textwidth]{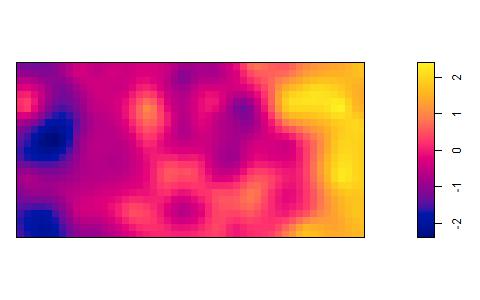} &  \includegraphics[width=0.31\textwidth]{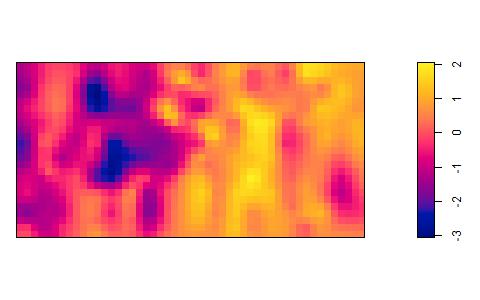}\\
		\end{tabular}
		\caption{Maps of covariates used in simulation study and in application. From left to right: Elevation, slope, Aluminium (row 1), Boron, Calcium, Copper (row 2), Iron, Potassium, Magnesium (row 3), Manganese, Phosporus, Zinc (row 4), and Nitrogen, Nitrigen mineralisation, pH (row 5).}
		\label{cov}
	\end{figure}
	
	\subsection{Simulation study} \label{sec:sim}
	
	The simulated point patterns are generated from Poisson and Thomas cluster processes with intensity~\eqref{eq:int}. To generate point patterns from Thomas process with intensity \eqref{eq:int} \cite[e.g.][]{choiruddin2018convex}, we first generate a parent point pattern from a stationary Poisson point process $\mathbf{C}$ with intensity $\kappa=4 \times10^{-4}$. Given $\mathbf{C}$, offspring point patterns are generated from inhomogeneous Poisson point process $\mathbf{X}_c, c \in \mathbf{C}$ with intensity
	\begin{align*}
		\rho_{child}(u)=\exp\{\bbeta^\top \mathbf{z}(u)\} k(u-c;\gamma)/\kappa,
	\end{align*}	
	where $k(u-c;\gamma)=(2 \pi \gamma^2)^{-1} \exp(-\|u-c\|^2/(2 \gamma^2))$. 
	The point process $\mathbf{X}=\cup_{c \in \mathbf{C}}\mathbf{X}_c$ is indeed an inhomogeneous Thomas point process with intensity \eqref{eq:int}. We set $\gamma=5$ and $15$. The smaller the $\gamma$, the more clustered the point patterns, leading to moderate clustering for $\gamma=15$ and highly clustering for $\gamma=5$.
	
	The covariates $\bz(u)$ used for the simulation experiment are from the BCI data. In addition to the 15 environmental factors depicted by Figure~\ref{cov}, we add interaction between two covariates until we obtain the desired number of covariates $p=21, 41$ or $81$. For each $p$, we consider three mean numbers of points ($\mu$) which increase when the observation domain expands. More precisely, we set that $\mu_1=150$ (resp. $\mu_2=600,\mu_3=2400$) points is generated in $D_1=[0,250]\times[0,125]$ (resp.  $D_2=[0,500]\times [0,250]$, $D_3=[0,1000]\times[0,500]$). When $D_1$ or $D_2$ is considered, we simply rescale the covariates to fit the desired observation domain. We fix $\beta_2=1$ and $\beta_3=-1$ while the rest are set to zero, {so $s=2$}. The parameter $\beta_1$ acts as intercept and is tuned to control the average number of points.
	
	\begin{table}[ht]
		\centering
		\setlength{\tabcolsep}{7pt}
		\renewcommand{\arraystretch}{1.1}
		\caption{True positive rate (TPR), false positive rate (FPR) in percentage, RMSE and average time in seconds obtained for AL and ALDS estimates based on 500 simulations from inhomogeneous Poisson point processes observed on different observation domains.}
		\label{tab:poisson}
		\begin{tabular}{rrrrrrrrr}
			\hline
			& \multicolumn{2}{c}{TPR} & \multicolumn{2}{c}{FPR} &\multicolumn{2}{c}{RMSE} & \multicolumn{2}{c}{Time}\\
			& AL & ALDS & AL & ALDS & AL & ALDS & AL & ALDS \\ 
			\hline
			\multicolumn{1}{l}{$D_1$ ($\mu_1=150$)} &&&&&&&&\\
			$p=20$ &  57 & 57 & 23 & 23 & 2.4 & 2.4 & 0.3 & 0.3 \\
			$p=40$ & 7 & 7 & 15 & 15 & 2.9 & 2.9 & 2.0 & 2.0\\
			$p=80$ & 0 & 0 & 8 & 8 & 2.8 & 2.8 & 4.0 & 4.0 \\
			\multicolumn{1}{l}{$D_2$ ($\mu_2=600$)} &&&&&&&&\\
			$p=20$ &  100 & 100 & 3 & 4 & 0.3 & 0.3 & 0.3 & 0.3\\
			$p=40$ & 97 & 96 & 4 & 5 & 0.5 & 0.5 & 0.8 & 0.6 \\
			$p=80$ & 86 & 86 & 8 & 8 & 0.9 & 0.9 & 3.0 & 3.0\\
			\multicolumn{1}{l}{$D_3$ ($\mu_3=2400$)} &&&&&&&&\\
			$p=20$ & 100 & 100 & 0 & 0 & 0.1 & 0.1 & 0.4 & 0.3\\
			$p=40$ & 100 & 100 & 0 & 0 & 0.1 & 0.1 & 0.9 & 0.9\\
			$p=80$ & 100 & 100 & 0 & 0 & 0.1 & 0.1 & 3.0 & 3.0 \\
			\hline
		\end{tabular}
	\end{table}
	
	\begin{table}[ht]
		\centering
		\setlength{\tabcolsep}{7pt}
		\renewcommand{\arraystretch}{1.1}
		\caption{True positive rate (TPR), false positive rate (FPR) in percentage, RMSE and average time in seconds obtained for AL and ALDS estimates based on 500 simulations from inhomogeneous Thomas point processes with $\kappa=4\times 10^{-4}$ and $\gamma=15$ (moderate clustering) observed on different observation domains.}
		\label{tab:MC}
		\begin{tabular}{rrrrrrrrr}
			\hline
			& \multicolumn{2}{c}{TPR} & \multicolumn{2}{c}{FPR} &\multicolumn{2}{c}{RMSE} & \multicolumn{2}{c}{Time}\\
			& AL & ALDS & AL & ALDS & AL & ALDS & AL & ALDS \\ 
			\hline
			\multicolumn{1}{l}{$D_1$ ($\mu_1=150$)} &&&&&&&&\\
			$p=20$ & 59 & 59 & 44 & 44 & 8.2 & 8.2 & 0.3 & 0.3 \\ 
			$p=40$ & 20 & 20 & 30 & 30 & 11.0 & 11.0 & 2.0 & 2.0 \\ 
			$p=80$ & 0 & 0 & 15 & 15 & 8.3 & 8.3 & 4.0 & 4.0 \\ 
			\multicolumn{1}{l}{$D_2$ ($\mu_2=600$)} &&&&&&&&\\
			$p=20$ & 91 & 89 & 53 & 47 & 2.6 & 2.2 & 0.3 & 0.3 \\ 
			$p=40$ & 88 & 86 & 48 & 43 & 4.9 & 3.8 & 1.0 & 0.6 \\ 
			$p=80$ & 80 & 80 & 35 & 35 & 7.7 & 7.7 & 5.0 & 5.0 \\ 
			\multicolumn{1}{l}{$D_3$ ($\mu_3=2400$)} &&&&&&&&\\
			$p=20$ & 100 & 100 & 59 & 43 & 1.0 & 0.8 & 0.9 & 1.0 \\   
			$p=40$ & 100 & 100 & 56 & 39 & 1.7 & 1.1 & 2.0 & 3.0 \\ 
			$p=80$ & 100 & 100 & 52 & 52 & 3.2 & 3.2 & 8.0 & 8.0 \\ 
			\hline
		\end{tabular}
	\end{table}
	
	\begin{table}[!ht]
		\centering
		\setlength{\tabcolsep}{7pt}
		\renewcommand{\arraystretch}{1.1}
		\caption{True positive rate (TPR), false positive rate (FPR) in percentage, RMSE and average time in seconds obtained for AL and ALDS estimates based on 500 simulations from inhomogeneous Thomas point processes with $\kappa=4\times 10^{-4}$ and $\gamma=5$ (high clustering) observed on different observation domains.}
		\label{tab:HC}
		\begin{tabular}{rrrrrrrrr}
			\hline
			& \multicolumn{2}{c}{TPR} & \multicolumn{2}{c}{FPR} &\multicolumn{2}{c}{RMSE} & \multicolumn{2}{c}{Time}\\
			& AL & ALDS & AL & ALDS & AL & ALDS & AL & ALDS \\ 
			\hline
			\multicolumn{1}{l}{$D_1$ ($\mu_1=150$)} &&&&&&&&\\
			$p=20$ & 64 & 64 & 72 & 72 & 39.0 & 39.0 & 0.5 & 0.5 \\ 
			$p=40$ & 29 & 29 & 72 & 72 & 130.0 & 130.0 & 6.0 & 6.0 \\ 
			$p=80$ &  0 & 0 & 36 & 36 & 69.0 & 69.0 & 6.0 & 6.0 \\ 
			\multicolumn{1}{l}{$D_2$ ($\mu_2=600$)} &&&&&&&&\\
			$p=20$ & 91 & 90 & 66 & 61 & 4.1 & 3.6 & 0.4 & 0.3 \\ 
			$p=40$ & 84 & 80 & 70 & 64 & 13.0 & 10.0 & 2.0 & 0.7 \\ 
			$p=80$ & 80 & 80 & 74 & 74 & 53.0 & 53.0 & 10.0 & 10.0 \\ 
			\multicolumn{1}{l}{$D_3$ ($\mu_3=2400$)} &&&&&&&&\\
			$p=20$ & 100 & 100 & 66 & 51 & 1.4 & 1.1 & 1.0 & 1.0 \\ 
			$p=40$ & 100 & 100 & 69 & 51 & 2.6 & 1.9 & 3.0 & 4.0 \\ 
			$p=80$ & 100 & 100 & 70 & 70 & 7.1 & 7.1 & 10.0 & 10.0 \\ 
			\hline
		\end{tabular}
	\end{table}

	For each model and setting, we generate 500 independent point patterns and estimate the parameters for each of these using the AL and ALDS procedures. The performances of AL and ALDS estimates are compared in terms of the true
	positive rate (TPR),  false positive rate (FPR), and root-mean squared error (RMSE). We also report the computing time. The TPR (resp.\ FPR) are the
	expected fractions of informative (resp.\ non-informative)
	covariates included in the selected model, so we would expect to obtain a high (resp. low) TPR (resp. FPR).
	The RMSE is defined for an estimate $\hat \bbeta$ by
	\begin{align*}
		\mathrm{RMSE}=\left\{ \sum_{j=2}^{p} { \hat{\mathbb{E}}(\hat \beta_j-\beta_j)^2} \right\}^\frac{1}{2}
	\end{align*}
	where $\hat{\mathbb{E}}$ is the empirical mean. 
	
	Tables~\ref{tab:poisson}-\ref{tab:HC} report results respectively for the Poisson model and the Thomas model with moderate and high clustering. 
	In the situation where point patterns come from Poisson processes, AL and ALDS perform very similarly. In particular, both methods do not work well in a small spatial domain with large $p$. The performances improve significantly as $D$ expands even for large $p$. When the point patterns exhibit clustering (Tables~\ref{tab:MC}-\ref{tab:HC}), in general AL and ALDS tend to overfit the intensity model by selecting too many covariates (indicated by higher FPR) which yields higher RMSE. ALDS sometimes performs slightly better in terms of RMSE. Results tend to deteriorate in the high clustering situation but  remain very satisfactory: for the three considered models, the TPR (resp. FPR, RMSE) increases (decreases) when $|D|$ grows for given $p$ while for given $D$ (especially when $D=D_2,D_3$), the results remain quite stable when $p$ increases. In terms of computing time, no major difference can be observed.
	
	\subsection{Application to the forestry dataset} \label{sec:appl}
	
	We model the intensity function for the point pattern of the Acalypha diversifolia using~\eqref{eq:int} depending on 94 environmental described previously. The overall $\bbeta$ estimates from the AL and ALDS procedures are presented in Table~\ref{tab:betaest}. We only report in Table~\ref{tab:selection} the top 12 important variables.
	
	Among 94 environmental variables, the AL and ALDS respectively selects 32 and 33 important covariates (most of them are similar). We sort the magnitude of $\hat \bbeta$ to identify the 12 most informative covariates. It turns out that these 12 covariates are similar for both procedures (see Table~\ref{tab:selection}). For the rest of selected covariates, the rankings are slightly different but the magnitudes are very similar (see Table~\ref{tab:betaest}).
	
	\setlength{\tabcolsep}{2.5pt}
	\renewcommand{\arraystretch}{1}
	\begin{table}[ht]
		\caption{Twelve most important covariates selected by AL and ALDS for modeling the intensity of Acalypha diversifolia point pattern}
		\label{tab:selection}
		\centering
		\begin{tabular}{lrr}
			\hline
			Covariates & AL & ALDS \\ 
			\hline
			Ca:N.min & -0.89 & -0.89 \\ 
			K:N.min & 0.58 & 0.53 \\ 
			Al:Mg & -0.51 & -0.47 \\ 
			pH & 0.48 & 0.46 \\ 
			B & -0.46 & -0.43 \\ 
			Ca & 0.45 & 0.42 \\ 
			Al:Fe & 0.38 & 0.39 \\ 
			Fe:K & -0.31 & -0.33 \\ 
			B:P & -0.30 & -0.29 \\ 
			P:Nz & 0.27 & 0.26 \\ 
			Fe & 0.26 & 0.25 \\
			Mn & -0.24 & -0.24 \\
			Number of selected covariates & 32 & 33 \\ 
			\hline
		\end{tabular}
	\end{table}

	\section{Discussion}\label{sec:conl}
	
	In this paper, we develop the adaptive lasso and Dantzig selector for spatial point processes intensity estimation and provide asymptotic results under an original setting where the number of non-zero and zero coefficients diverge with the mean number of points. We demonstrate that both methods share identical asymptotic properties and perform similarly on simulated and real data. This study supplements previous ones \citep[see e.g.][]{bickel2009simultaneous} where similar conclusions on linear models and generalized linear models were addressed.
	
	\cite{choiruddin2018convex} considered extensions of lasso type methods by involving general convex and non-convex penalties. In particular, composite likelihoods penalized by SCAD or MC+ penalty showed interesting properties. To integrate such an idea for the Dantzig selector, we could consider the  optimization problem
	\begin{align*}
		\min \sum_{i=1}^{p}{p_{\lambda_j}(\beta_j)} \mbox{ subject to }  {\mu}^{-1}\; \Big |\mathbf{U}_j( \tilde {\bbeta}) +  [\mathbf{A}( \tilde {\bbeta}) (\tilde {\bbeta}  -  {\bbeta})]_j \Big | \leq  p^{\prime}_{\lambda_j}(\beta_j), && j=1,\ldots,p
	\end{align*}
	where $p^{\prime}_{\lambda}(\theta)$ is the derivative with respect to $\theta$ of a general penalty function $p_\lambda$. However, such an interesting extension would make linear programming unusable and theoretical developments more complex to derive.
	We leave this direction for further study.
	
	{Another direction for further study is to derive results for the selection of regularization parameters. As mentioned earlier, a challenging and definitely interesting perspective would be the validity of Theorem~\ref{thm:main} when we define the regularization parameters in a stochastic way such as $\lambda_{n,j}=\lambda_n/|\tilde \beta_j|$.}

	On a similar topic, \cite{choiruddin2021information} studied information criteria such as AIC, BIC, and the composite-versions under similar asymptotic framework for selecting intensity model of spatial point process. These criteria could be extended for tuning parameter selection in the context of regularization methods for spatial point process.

	
	\begin{appendix}
		
		
		\section{Additional notation and auxiliary Lemmas}  \label{auxLemma}
		
		Lemmas~\ref{bound}-\ref{lem:rhoubeta} are used for the proof of Theorem~\ref{thm:main} in both cases $\hat \bbeta=\hat \bbeta_{\mathrm{AL}},\hat \bbeta_{\mathrm{ALDS}}$. Throughout the proofs, the notation $\mathbf X_n = O_{\mathrm P} (x_n)$ or $\mathbf X_n = o_{\mathrm P} (x_n)$ for a random vector $\mathbf X_n$ and a sequence of real numbers $x_n$ means that $\|\mathbf X_n\|=O_{\mathrm P}(x_n)$ and $\|\mathbf X_n\|=o_{\mathrm P}(x_n)$. In the same way for a vector $\mathbf V_n$ or a squared matrix $\mathbf M_n$, the notation $\mathbf V_n=O(x_n)$ and  $\mathbf M_n=O(x_n)$ mean that $\|\mathbf V_n\|=O(x_n)$ and  $\|\mathbf M_n\|=O(x_n)$. 
		
		\begin{lemma} \label{bound}
			Under conditions~\cond{C:nun}, \cond{C:cov} and~\cond{C:g}, the following result hold as $n\to \infty$
			\begin{align}
				\max\left\{ \|\mathbf{U}_n(\bbeta_0)\|,\|\mathbf{U}_{n,2}(\bbeta_0)\|\right\}  =O_\mathrm{P} \left( \sqrt { p_n \mu_n} \right) 
				\quad \text{ and } \quad
				\mathbf{U}_{n,1}(\bbeta_0)  =O_\mathrm{P} \left( \sqrt { {s_n \mu_n} } \right).
				\label{ch2:eq:ln}
			\end{align}
		\end{lemma}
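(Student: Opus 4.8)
The plan is to exploit the fact that, by Campbell's theorem, $\mathbf{U}_n(\bbeta_0)$ is a centered random vector (it is an unbiased estimating equation), so that its expected squared norm coincides with the trace of its covariance matrix $\mathbf{B}_n(\bbeta_0)$ defined in \eqref{eq:Bn}. Indeed, for any mean-zero random vector one has $\mathbb{E}\|\mathbf{U}_n(\bbeta_0)\|^2 = \mathrm{tr}\{\mathrm{Var}(\mathbf{U}_n(\bbeta_0))\} = \mathrm{tr}\{\mathbf{B}_n(\bbeta_0)\}$, and the analogous identities hold blockwise for $\mathbf{U}_{n,1}(\bbeta_0)$ and $\mathbf{U}_{n,2}(\bbeta_0)$ with the diagonal blocks $\mathbf{B}_{n,11}(\bbeta_0)$ and $\mathbf{B}_{n,22}(\bbeta_0)$. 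Once the traces are controlled, Markov's inequality immediately delivers the claimed $O_\mathrm{P}$ rates.

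The heart of the argument is therefore to bound $\mathrm{tr}\{\mathbf{B}_n(\bbeta_0)\}$. Writing $\mathbf{B}_n(\bbeta_0)=\mathbf{A}_n(\bbeta_0)+\mathbf{C}_n$, with $\mathbf{C}_n$ the double-integral term in \eqref{eq:Bn}, I would treat the two pieces separately. For the first, $\mathrm{tr}\{\mathbf{A}_n(\bbeta_0)\}=\int_{D_n}\|\bz(u)\|^2\rho(u;\bbeta_0)\,\mathrm{d}u$; the uniform bound on the covariates from \cond{C:cov} gives $\|\bz(u)\|^2\le C^2 p_n$, so that $\mathrm{tr}\{\mathbf{A}_n(\bbeta_0)\}\le C^2 p_n\int_{D_n}\rho(u;\bbeta_0)\,\mathrm{d}u=C^2 p_n\mu_n$. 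For the second, since $\mathrm{tr}\{\bz(u)\bz(v)^\top\}=\bz(u)^\top\bz(v)$, Cauchy--Schwarz together with \cond{C:cov} yields $|\bz(u)^\top\bz(v)|\le C^2 p_n$, whence
\[
|\mathrm{tr}(\mathbf{C}_n)|\le C^2 p_n\int_{D_n}\int_{D_n}|g(u,v)-1|\,\rho(u;\bbeta_0)\rho(v;\bbeta_0)\,\mathrm{d}u\,\mathrm{d}v=O(p_n\mu_n)
\]
by \cond{C:g}. Combining the two bounds, $\mathrm{tr}\{\mathbf{B}_n(\bbeta_0)\}=O(p_n\mu_n)$, so $\mathbb{E}\|\mathbf{U}_n(\bbeta_0)\|^2=O(p_n\mu_n)$ and Markov's inequality gives $\|\mathbf{U}_n(\bbeta_0)\|=O_\mathrm{P}(\sqrt{p_n\mu_n})$.

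The block statements follow along exactly the same lines, the only difference being how many covariates enter the trace. For $\mathbf{U}_{n,2}(\bbeta_0)$ the relevant sum runs over the $p_n-s_n\le p_n$ zero-coefficient covariates, so the same $O(p_n\mu_n)$ bound applies and yields $\|\mathbf{U}_{n,2}(\bbeta_0)\|=O_\mathrm{P}(\sqrt{p_n\mu_n})$. For $\mathbf{U}_{n,1}(\bbeta_0)$ the trace of $\mathbf{B}_{n,11}(\bbeta_0)$ involves only the first $s_n$ covariates, so $\|\bz_1(u)\|^2\le C^2 s_n$ and $|\bz_1(u)^\top\bz_1(v)|\le C^2 s_n$; repeating the two bounds with $s_n$ in place of $p_n$ gives $\mathrm{tr}\{\mathbf{B}_{n,11}(\bbeta_0)\}=O(s_n\mu_n)$ and hence $\|\mathbf{U}_{n,1}(\bbeta_0)\|=O_\mathrm{P}(\sqrt{s_n\mu_n})$. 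I do not anticipate a genuine obstacle here: the only point requiring care is that the dimension-dependent factor ($p_n$ or $s_n$) arises from the crude covariate bound applied inside the trace, while the $\mu_n$ factor comes entirely from combining $\int_{D_n}\rho(u;\bbeta_0)\,\mathrm{d}u=\mu_n$ with condition \cond{C:g}; these two effects multiply, which is precisely what produces the stated rates.
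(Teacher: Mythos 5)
Your proof is correct and follows essentially the same route as the paper: unbiasedness of $\mathbf{U}_n(\bbeta_0)$ via Campbell's theorem, identification of its variance with $\mathbf{B}_n(\bbeta_0)$ and of the block variances with $\mathbf{B}_{n,11}(\bbeta_0)$, $\mathbf{B}_{n,22}(\bbeta_0)$, a bound of order $p_n\mu_n$ (resp. $s_n\mu_n$ for the first block) obtained from the covariate bound in \cond{C:cov} together with \cond{C:g}, and a Chebyshev/Markov step to conclude. If anything, your explicit use of the trace identity $\mathbb{E}\|\mathbf{U}_n(\bbeta_0)\|^2 = \mathrm{tr}\{\mathbf{B}_n(\bbeta_0)\}$ is slightly more careful than the paper's write-up, which states the variance bounds in matrix (spectral-norm) form before invoking the scalar Chebyshev argument.
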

		
		\begin{proof}
			Using Campbell Theorems~\eqref{eq:campbell}, the score vector  $\mathbf{U}_n(\bbeta_0)$ is unbiased and has variance $\mathrm{Var} \mathbf{U}_{n}( \bbeta_0)= \mathbf{B}_{n}( \bbeta_{0})$. {By Condition~\cond{C:cov} for any $u\in D_n$, $\bz(u)\bz(u)^\top= O(p_n)$. Hence, $\mathbf A_n(\bbeta_0) = O(p_n \mu_n)$. By definition of~\eqref{eq:Bn} and conditions~\cond{C:nun} and~\cond{C:g}, we deduce that $\mathbf B_n(\bbeta_0)= O(p_n \mu_n)$. We deduce that $\mathrm{Var}\{\mathbf U_n(\bbeta_0)\}= O(p_n \mu_n)$. In the same way, $\mathrm{Var}\{\mathbf U_{n,2}(\bbeta_0)\}= O\{(p_n-s_n) \mu_n\} = O(p_n \mu_n)$ and $\mathrm{Var}\{\mathbf U_{n,1}(\bbeta_0)\}= O(s_n \mu_n)$. } The result is proved since for any centered real-valued stochastic process ${Y_n}$ with finite variance $\mathrm{Var}({Y_n})$, ${Y_n}=O_\mathrm{P}{\big\{\sqrt{\mathrm{Var} ({Y_n})}\big\}}$.
		\end{proof}
		
		The next lemma states  that in the vicinity of $\bbeta_0$, $\rho(u;\bbeta)$ and $\rho(u;\bbeta_0)$ have the same behaviour.
		\begin{lemma}\label{lem:rhoubeta}
			(i) Let $(\zeta_n)_{n\ge 1}$ be any sequence such that $\zeta_n=o(1/\sqrt{p_n})$ and let $\kappa$ be any non-negative real number, then under the Conditions~\cond{C:intensity}-\cond{C:cov}, we have 
			\[
			\sup_{\|\bbeta - \bbeta_0\| \le \kappa \zeta_n} \int_{D_n} \rho(u;\bbeta) \dd u = O(\mu_n).
			\]
			(ii) Similary for any random vector $\bbeta$ such that $\|\bbeta-\bbeta_0\|=o_\P(1/\sqrt{p_n})$, then
			\[
			\int_{D_n} \rho(u;\bbeta) \dd u = O_\P(\mu_n).	
			\]
			(iii) In addition, under condition~\cond{C:snpn}, (i)-(ii) are valid for the sequence defined by $\zeta_n= \sqrt{p_n/ \mu_n}$.
		\end{lemma}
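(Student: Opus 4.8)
The plan is to exploit the log-linear form of the intensity to factor out $\rho(u;\bbeta_0)$ and to control the remaining exponential factor uniformly over the ball through the boundedness of the covariates. First I would write, for any $u\in D_n$ and any $\bbeta$,
\[
\rho(u;\bbeta) = \exp\{\bbeta^\top \bz(u)\} = \rho(u;\bbeta_0)\,\exp\{(\bbeta-\bbeta_0)^\top \bz(u)\}.
\]
By the first part of Condition~\cond{C:cov} there is a finite constant $c$ with $\sup_{n\ge 1}\sup_{i}\sup_{u}|z_i(u)|\le c$, so that $\|\bz(u)\|^2=\sum_{i=1}^{p_n} z_i(u)^2 \le c^2 p_n$ and hence $\|\bz(u)\|\le c\sqrt{p_n}$ for every $u$ and every $n$. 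The Cauchy--Schwarz inequality then gives the deterministic bound $|(\bbeta-\bbeta_0)^\top \bz(u)| \le \|\bbeta-\bbeta_0\|\,\|\bz(u)\| \le c\sqrt{p_n}\,\|\bbeta-\bbeta_0\|$, valid uniformly in $u\in D_n$.

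For (i), on the ball $\{\bbeta:\|\bbeta-\bbeta_0\|\le \kappa\zeta_n\}$ this becomes $|(\bbeta-\bbeta_0)^\top \bz(u)|\le c\kappa\,\zeta_n\sqrt{p_n}$. Since $\zeta_n=o(1/\sqrt{p_n})$ we have $\zeta_n\sqrt{p_n}\to 0$, so the exponent tends to $0$ and there is $n_0$ with $\exp\{(\bbeta-\bbeta_0)^\top \bz(u)\}\le e$ for all $n\ge n_0$, all such $\bbeta$ and all $u$. Integrating and using $\int_{D_n}\rho(u;\bbeta_0)\dd u=\mu_n$ yields
\[
\sup_{\|\bbeta-\bbeta_0\|\le \kappa\zeta_n} \int_{D_n}\rho(u;\bbeta)\dd u \le e\int_{D_n}\rho(u;\bbeta_0)\dd u = e\,\mu_n = O(\mu_n),
\]
which proves (i). For (ii) I would transfer this to the random setting via a high-probability event: rewriting $\|\bbeta-\bbeta_0\|=o_\P(1/\sqrt{p_n})$ as $\sqrt{p_n}\,\|\bbeta-\bbeta_0\|=o_\P(1)$, the event $A_n=\{\sqrt{p_n}\,\|\bbeta-\bbeta_0\|\le 1\}$ has probability tending to one; on $A_n$ the exponent is bounded by $c$, so the same integration gives $\int_{D_n}\rho(u;\bbeta)\dd u \le e^{c}\mu_n$, whence $\int_{D_n}\rho(u;\bbeta)\dd u = O_\P(\mu_n)$.

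Finally, (iii) only requires checking that $\zeta_n=\sqrt{p_n/\mu_n}$ satisfies $\zeta_n=o(1/\sqrt{p_n})$, i.e. that $\zeta_n\sqrt{p_n}=p_n/\sqrt{\mu_n}=\sqrt{p_n^2/\mu_n}\to 0$. Since $p_n\ge 1$ and $s_n\ge 1$ we have $p_n^2/\mu_n \le p_n^4/\mu_n \le s_n^3 p_n^4/\mu_n$, and both the AL part and the ALDS part of Condition~\cond{C:snpn} force the right-hand side to $0$; then (i)--(ii) apply verbatim with this $\zeta_n$. There is no genuinely hard step here: the whole argument rests on the factorization of the log-linear intensity together with the uniform covariate bound $\|\bz(u)\|\le c\sqrt{p_n}$. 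The only point needing care is the interplay between the radius of the neighbourhood and the dimension, namely ensuring $\zeta_n\sqrt{p_n}\to 0$ so that the exponential correction factor stays bounded; this is exactly where the hypothesis $\zeta_n=o(1/\sqrt{p_n})$ (and, in (iii), the dimension budget of Condition~\cond{C:snpn}) enters, and it explains why this scaling of the neighbourhood is imposed throughout the proofs.
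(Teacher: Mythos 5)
Your proof is correct and follows essentially the same route as the paper: factor out $\rho(u;\bbeta_0)$, bound the exponent $|(\bbeta-\bbeta_0)^\top\bz(u)|$ by $\kappa\sqrt{p_n}\|\bbeta-\bbeta_0\|$ using the uniform covariate bound of \cond{C:cov}, and note that \cond{C:snpn} gives $p_n^2/\mu_n\to 0$ for part (iii). Your handling of (ii) via a high-probability event and your explicit verification of the dimension condition are simply more detailed versions of steps the paper leaves implicit.
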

		
		\begin{proof}
			(i)-(ii) We only focus on (i) as (ii) follows along similar lines. For any $u \in D_n$, there exists, by Condition~\cond{C:cov}, a constant $\kappa<\infty$ (independent of $u, \bbeta$ and $\bbeta_0$) such that 
			\[
			- \kappa \sqrt{p_n}\|\bbeta - \bbeta_0\| \leq  (\bbeta-\bbeta_0)^\top \bz (u) \leq 	\kappa \sqrt{p_n}\|\bbeta - \bbeta_0\|.	
			\]
			Since, $\int \rho(u;\bbeta)\dd u = \int \exp\{(\bbeta-\bbeta_0)^\top \bz (u)\} \rho(u;\bbeta_0) \dd u$, we deduce that
			\[
			\mu_n \, \exp(- \kappa \sqrt{p_n}\|\bbeta - \bbeta_0\|)	\le	\int_{D_n} \rho(u;\bbeta) \dd u \le \mu_n \, \exp( \kappa \sqrt{p_n}\|\bbeta - \bbeta_0\|) 
			\]	
			which yields the result by definition of $\zeta_n$.\\
			(iii) Condition~\cond{C:snpn} implies in particular that $\sqrt{p_n^2/\mu_n} \to 0$ as $n \to \infty$.
		\end{proof}

		\section{Proof of Theorem~\ref{thm:main} when $\hat \bbeta=\hat \bbeta_{\mathrm{AL}}$} \label{sec:proofAL}
		
		\subsection{Existence of a root-$(\mu_n/p_n)$ consistent local maximizer}
		
		The first result presented hereafter shows that there exists a local maximizer of $Q_n(\bbeta)$ which is a consistent estimator of $\bbeta_0$.
		
		\begin{proposition}
			\label{proposition:AL}
			Assume that the conditions~\cond{C:nun}-\cond{C:g} hold. If in addition $p_n^4/\mu_n\to 0$ and $a_n\sqrt{s_n \mu_n/p_n}\to 0$ as $n\to \infty$, then there exists a local maximizer ${\hat \bbeta_{\mathrm{AL}}}$ of $Q_n(\bbeta)$  such that
			\begin{align*}
				{\bf \| \hat \bbeta_{\mathrm{AL}} -\bbeta_0\|}=O_\mathrm{P}\big\{ \sqrt{p_n/\mu_n}  \big\}.
			\end{align*}
		\end{proposition}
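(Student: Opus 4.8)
The plan is to adapt the classical argument of \cite{fan2004nonconcave} establishing a consistent local maximizer inside a shrinking ball. Write $\alpha_n=\sqrt{p_n/\mu_n}$. It suffices to show that for every $\epsilon>0$ there is $C>0$ so that, for $n$ large,
\[
\P\Big\{\sup_{\|\mathbf v\|=C}Q_n(\bbeta_0+\alpha_n\mathbf v)<Q_n(\bbeta_0)\Big\}\ge 1-\epsilon ,
\]
since continuity of $Q_n$ then forces a local maximizer $\hat\bbeta_{\mathrm{AL}}$ into $\{\bbeta_0+\alpha_n\mathbf v:\|\mathbf v\|\le C\}$ with probability at least $1-\epsilon$, which is the asserted rate. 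First I would set $D_n(\mathbf v)=Q_n(\bbeta_0+\alpha_n\mathbf v)-Q_n(\bbeta_0)$ and split it into a likelihood contribution and a penalty contribution.

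For the likelihood part, since $\ell_n$ has gradient $\mathbf U_n$ and Hessian $-\mathbf A_n$, a second order Taylor expansion gives, for some $\bbeta^\ast$ on the segment joining $\bbeta_0$ to $\bbeta_0+\alpha_n\mathbf v$,
\[
\frac{1}{\mu_n}\{\ell_n(\bbeta_0+\alpha_n\mathbf v)-\ell_n(\bbeta_0)\}=\frac{\alpha_n}{\mu_n}\mathbf U_n(\bbeta_0)^\top\mathbf v-\frac{\alpha_n^2}{2\mu_n}\mathbf v^\top\mathbf A_n(\bbeta^\ast)\mathbf v .
\]
The linear term I control by Cauchy--Schwarz and Lemma~\ref{bound}, $\|\mathbf U_n(\bbeta_0)\|=O_{\mathrm P}(\sqrt{p_n\mu_n})$, giving a contribution of order $\alpha_n^2\,O_{\mathrm P}(1)\,C$. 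I then decompose $\mathbf A_n(\bbeta^\ast)=\mathbf A_n(\bbeta_0)+\{\mathbf A_n(\bbeta^\ast)-\mathbf A_n(\bbeta_0)\}$; the consequence of Condition~\cond{C:cov} that $\liminf_n\inf_{\|\boldsymbol\phi\|=1}\boldsymbol\phi^\top\{\mu_n^{-1}\mathbf A_n(\bbeta_0)\}\boldsymbol\phi>0$ turns the $\mathbf A_n(\bbeta_0)$ piece into a term bounded above by $-\tfrac12 c\,\alpha_n^2 C^2$ for some $c>0$, which is the term meant to dominate.

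The decisive step is to show the remainder $-\tfrac{\alpha_n^2}{2\mu_n}\mathbf v^\top\{\mathbf A_n(\bbeta^\ast)-\mathbf A_n(\bbeta_0)\}\mathbf v$ is negligible. Using $\mathbf A_n(\bbeta^\ast)-\mathbf A_n(\bbeta_0)=\int_{D_n}\bz(u)\bz(u)^\top\rho(u;\bbeta_0)\{e^{(\bbeta^\ast-\bbeta_0)^\top\bz(u)}-1\}\,\dd u$ together with the bound $|(\bbeta^\ast-\bbeta_0)^\top\bz(u)|\le\kappa\sqrt{p_n}\,\alpha_n C$ from the proof of Lemma~\ref{lem:rhoubeta} and $\sqrt{p_n^2/\mu_n}\to0$, the positivity of $\bz\bz^\top\rho$ yields $\|\mathbf A_n(\bbeta^\ast)-\mathbf A_n(\bbeta_0)\|=O(\sqrt{p_n}\,\alpha_n)\,\|\mathbf A_n(\bbeta_0)\|=O(\sqrt{p_n}\,\alpha_n)\,O(p_n\mu_n)$, the last bound coming from the proof of Lemma~\ref{bound}. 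Dividing by $\mu_n$, the remainder is of order $\alpha_n^2 C^2\cdot C\sqrt{p_n^4/\mu_n}=o(\alpha_n^2 C^2)$ by the hypothesis $p_n^4/\mu_n\to0$. I expect this to be the main obstacle, as it is exactly where the dimension restriction is consumed, and it is what makes Lemma~\ref{lem:rhoubeta}(iii) applicable on the ball of radius $\alpha_n C$.

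For the penalty, $\beta_{0j}=0$ when $j>s_n$, so those increments equal $\lambda_{n,j}\alpha_n|v_j|\ge0$ and only help, hence are discarded with the upper bound $0$; for $j\le s_n$, the triangle inequality and Cauchy--Schwarz give $\big|\sum_{j\le s_n}\lambda_{n,j}(|\beta_{0j}+\alpha_n v_j|-|\beta_{0j}|)\big|\le\alpha_n a_n\sqrt{s_n}\,C$, which is $o(\alpha_n^2 C^2)$ precisely because $a_n\sqrt{s_n\mu_n/p_n}\to0$. Collecting the four contributions and factoring $\alpha_n^2 C$, I obtain $D_n(\mathbf v)\le\alpha_n^2 C\{O_{\mathrm P}(1)-\tfrac12 cC+o_{\mathrm P}(1)\}$ uniformly over $\|\mathbf v\|=C$; choosing $C$ large enough that $\tfrac12 cC$ dominates the $O_{\mathrm P}(1)$ term with probability at least $1-\epsilon$ yields $\sup_{\|\mathbf v\|=C}D_n(\mathbf v)<0$ and completes the proof.
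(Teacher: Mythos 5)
Your proposal is correct and follows essentially the same route as the paper's proof: the Fan--Peng-type boundary argument on the ball of radius $C\sqrt{p_n/\mu_n}$, the decomposition into score term (controlled by Lemma~\ref{bound}), negative quadratic term (controlled by the consequence of \cond{C:cov} on $\mu_n^{-1}\mathbf A_n(\bbeta_0)$), Hessian-difference remainder (consuming $p_n^4/\mu_n\to 0$), and penalty term restricted to $j\le s_n$ (consuming $a_n\sqrt{s_n\mu_n/p_n}\to 0$). The only cosmetic difference is that you bound $\mathbf A_n(\bbeta^\ast)-\mathbf A_n(\bbeta_0)$ via the identity $\rho(u;\bbeta^\ast)=\rho(u;\bbeta_0)e^{(\bbeta^\ast-\bbeta_0)^\top\bz(u)}$ and positive semi-definiteness, whereas the paper uses a further Taylor expansion of $\rho$ together with Lemma~\ref{lem:rhoubeta}; both yield the same rate.
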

		
		Note that the conditions on $a_n, s_n, \mu_n$ and $p_n$ are actually implied by conditions~\cond{C:snpn} and~\cond{C:anbn}.
		In the proof of this result and the following ones, the notation $\kappa$ stands for a generic constant which may vary from line to line. In particular this constant is independent of $n$, $\bbeta_0$ and $\mathbf k$.
		
		\begin{proof}
			Let $\mathbf{k}\in \mathbb{R}^{p_n}$. We remind the reader that the estimate of $\bbeta_0$ is defined as the maximum of the function $Q_n$, given by~\eqref{regmed},  over $\mathbb R^{p_n}$. 
			To prove Proposition~\ref{proposition:AL}, we aim at proving that for any given $\epsilon>0$, there exists sufficiently large $K>0$ such that for $n$ sufficiently large 
			\begin{equation}
				\label{ch2:eq:15}
				\mathrm{P}\bigg\{\sup_{\|\mathbf{k}\| = K} \Delta_n(\mathbf k)>0\bigg\}\leq \epsilon,
				\quad \mbox{ where } \Delta_n(\mathbf k) = Q_n(\bbeta_0+\sqrt{p_n/\mu_n}\mathbf{k})-Q_n(\bbeta_0).
			\end{equation}
			Equation~\eqref{ch2:eq:15} will imply that with probability at least $1-\epsilon$, there exists a local maximum in the ball $\{\bbeta_0+\sqrt{p_n/\mu_n}\mathbf{k}:\|\mathbf{k}\| \leq K\}$, and therefore  a local maximizer $\boldsymbol{\hat{\beta}}$  such that $\|{ \boldsymbol {\hat \beta}-\bbeta_0}\|=O_\mathrm{P}(\sqrt{p_n/\mu_n})$.  We decompose $\Delta_n(\mathbf k)$ as $\Delta_n(\mathbf k)= T_1+T_2$ where
			\begin{align*}
				T_1 & = \mu_n^{-1} \left\{ \ell_n(\bbeta_0+\sqrt{p_n/\mu_n}\mathbf{k})-\ell_n( \bbeta_0) \right\} \\
				T_2 & = \sum_{j=1}^{p_n} \lambda_{n,j} \left( |\beta_{0j}|- |\beta_{0j}+\sqrt{p_n/\mu_n}k_j| \right).
			\end{align*}
			Since $\rho(u;\cdot)$ is infinitely continuously differentiable and $\ell_n^{(2)}(\bbeta) =-\mathbf A_n(\bbeta)$, then using a second-order Taylor expansion there exists $t\in (0,1)$ such that
			\begin{align*}
				\mu_n T_1 =& \, \sqrt{p_n/\mu_n} \mathbf k^\top \ell_n^{(1)}(\bbeta_0) + T_{11}+T_{12} \\
			\end{align*}
			where
			\begin{align*}
				T_{11} =&- \frac12 \frac{p_n}{\mu_n}\mathbf k^\top \mathbf{A}_n(\bbeta_0) \mathbf k  \\
				T_{12}=&+ \frac12\frac{p_n}{\mu_n}\mathbf k^\top \left\{ \mathbf{A}_n(\bbeta_0) -\mathbf{A}_n(\bbeta_0 + t\sqrt{p_n/\mu_n} \mathbf k) \right\} \mathbf k .	
			\end{align*}
			By condition~\cond{C:cov}
			\[
			T_{11} = -\frac12 p_n \frac{\mathbf k^\top \{\mu_n^{-1}\mathbf A_n(\bbeta_0)\} \mathbf k}{\|\mathbf k\|^2} \, \|\mathbf k\|^2 \le -\frac{\alpha}2 p_n \|\mathbf k\|^2
			\] 
			where $\alpha = \liminf_{n\ge 1}\inf_{\mathbf \boldsymbol \phi, \|\mathbf \boldsymbol \phi\|=1} \boldsymbol \phi^\top \{\mu_n^{-1}\mathbf A_n(\bbeta_0)\} \boldsymbol \phi>0$. Now, for some $\tilde \bbeta$ on the line segment between $\bbeta_0$ and $\bbeta_0 + t{\sqrt{p_n/\mu_n}} \mathbf k$
			\[
			T_{12} = \frac12 \, \frac{p_n}{\mu_n} \mathbf k^\top 
			\left\{
			\int_{D_n} \bz(u) \bz(u)^\top t \sqrt{\frac{p_n}{\mu_n}} \mathbf k^\top \bz(u) \rho(u;\tilde \bbeta) \dd u
			\right\} 
			\mathbf k.
			\]
			By conditions~\cond{C:nun}-\cond{C:cov} and Lemma~\ref{lem:rhoubeta}
			\[
			T_{12} = O\left( \|\mathbf k\|^3 \frac{p_n}{\mu_n} p_n \sqrt{\frac{p_n}{\mu_n}} \sqrt{p_n} \mu_n\right) = O\left(p_n \sqrt{\frac{p_n^4}{\mu_n}} \right) = o(p_n).
			\]
			Hence, for $n$ sufficiently large
			\[
			\mu_n T_1 \le \sqrt{\frac{p_n}{\mu_n}} \mathbf k^\top \ell_n^{(1)}(\bbeta_0) - \frac{\alpha}4 p_n \|\mathbf k\|^2.
			\]
			Regarding the term $T_2$ we have,
			\[
			T_2\leq \sum_{j=1}^{s_n} \lambda_{n,j} 
			\left\{ 
			|\beta_{0j}|- \left|\beta_{0j}+ \sqrt{\frac{p_n}{\mu_n}} k_j\right|
			\right\} 
			\leq a_n \sqrt{\frac{p_n}{\mu_n}} \sum_{j=1}^{s_n} |k_j| \leq a_n \sqrt{\frac{s_np_n}{\mu_n}} \|\mathbf k\|.
			\]
			We deduce that for $n$ large enough, there exists $\kappa$ such that $T_2 \leq \kappa d_n^2 \|k\|$ whereby we deduce that
			\[
			\Delta_n(\mathbf k) \leq 
			\frac{1}{\mu_n}\sqrt{\frac{p_n}{\mu_n}} \mathbf k^\top \ell_n^{(1)}(\bbeta_0)  
			-\frac{\alpha}4 \frac{p_n}{\mu_n}  \|\mathbf k\|^2 + a_n \sqrt{\frac{s_np_n}{\mu_n}} \|\mathbf k\|.
			\]
			By the assumption of Proposition~\ref{proposition:AL}, $a_n\sqrt{s_np_n/\mu_n}= a_n \sqrt{s_n\mu_n/p_n} p_n/\mu_n = o(p_n/\mu_n)$, whereby we deduce that for $n$ sufficiently large
			\[
			\Delta_n(\mathbf k) \le \frac1{\mu_n} \sqrt{\frac{p_n}{\mu_n}} \mathbf k^\top \ell_n^{(1)}(\bbeta_0)  
			-\frac{\alpha}8 \frac{p_n}{\mu_n}  \|\mathbf k\|^2.
			\]
			Now for $n$ sufficiently large,
			\begin{align*}
				\mathrm{P}\bigg\{{\sup_{\|\mathbf{k}\|= K}  \Delta_n(\mathbf{k})>0}\bigg\} &\leq 
				\mathrm{P}\bigg\{ \|\ell_n^{(1)}(\bbeta_0)\| \ge \frac\alpha8 K p_n \sqrt{\frac{\mu_n}{p_n}} \bigg\} \\
				&= \mathrm{P}\bigg\{ \|\ell_n^{(1)}(\bbeta_0)\| \ge \frac\alpha8 K \sqrt{p_n \mu_n} \bigg\}<\varepsilon
			\end{align*}
			for any given $\varepsilon>0$ since $\ell_n^{(1)}(\bbeta_0) = \mathbf U_{n}(\bbeta_0)= O_\P(\sqrt{p_n\mu_n})$ by Lemma~\ref{bound}.
			
		\end{proof}

		\subsection{Sparsity property for $\hat \bbeta=\hat \bbeta_{\mathrm{AL}}$}
		
		The sparsity property for $\hat \bbeta_{\mathrm{AL}}$ follows directly from Proposition~\ref{proposition:AL} and the following Lemma~\ref{sparsity}.
		
		\begin{lemma}
			\label{sparsity}
			Assume the conditions~\cond{C:nun}-\cond{C:g} and~\cond{C:snpn}-\cond{C:anbn} hold, then with probability tending to $1$, for any {$\bbeta_1 \in \mathbb{R}^{s_n}$} satisfying $\|{\bbeta_1 - \bbeta_{01}}\|=O_\mathrm{P}(\sqrt{p_n/\mu_n})$, and for any constant $K_1 > 0$,
			\begin{align*}
				Q_n\Big\{({\bbeta_1}^\top,\mathbf{0}^\top)^\top \Big\}
				= \max_{\| \bbeta_2\| \leq K_1 \sqrt{p_n/\mu_n}}
				Q_n\Big\{({\bbeta_1}^\top,{\bbeta_2}^\top)^\top \Big\}.
			\end{align*}
		\end{lemma}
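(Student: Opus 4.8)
The plan is to use the standard sign/monotonicity argument for penalized estimators: on the relevant neighbourhood the penalty dominates the likelihood gradient in every non-informative direction, so that $Q_n$ can only decrease when a zero coordinate is moved away from $0$. First I would reduce the statement to a coordinatewise sign condition. For $\beta_j\neq 0$ one has $\partial Q_n(\bbeta)/\partial\beta_j = \mu_n^{-1}(\mathbf U_n(\bbeta))_j - \lambda_{n,j}\sign(\beta_j)$, hence, since $\lambda_{n,j}\ge b_n$ for $j>s_n$,
\[
\sign(\beta_j)\,\frac{\partial Q_n(\bbeta)}{\partial\beta_j}
\le \mu_n^{-1}\big|(\mathbf U_n(\bbeta))_j\big| - b_n , \qquad j=s_n+1,\dots,p_n .
\]
If, with probability tending to one, $\mu_n^{-1}\max_{j>s_n}|(\mathbf U_n(\bbeta))_j|<b_n$ uniformly over $\bbeta=(\bbeta_1^\top,\bbeta_2^\top)^\top$ with $\|\bbeta_2\|\le K_1\sqrt{p_n/\mu_n}$, then each such derivative has sign opposite to $\beta_j$, so $\beta_j\mapsto Q_n$ is maximised at $\beta_j=0$. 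Writing $Q_n\{(\bbeta_1^\top,\bbeta_2^\top)^\top\}-Q_n\{(\bbeta_1^\top,\mathbf 0^\top)^\top\}$ as a telescoping sum over the coordinates $s_n+1,\dots,p_n$ switched on one at a time and integrating the (one-sided) derivative along each segment, every summand is non-positive; one checks that all interpolating points stay inside the ball, so the required uniformity is exactly on this ball.

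The core estimate is therefore the uniform bound on $\mu_n^{-1}\max_{j>s_n}|(\mathbf U_n(\bbeta))_j|$. I would expand $\mathbf U_n$ around $\bbeta_0$ through the exact integral mean-value form $\mathbf U_n(\bbeta)=\mathbf U_n(\bbeta_0)-\int_0^1\mathbf A_n(\bbeta_0+t(\bbeta-\bbeta_0))\,\mathrm dt\,(\bbeta-\bbeta_0)$, using $\ell_n^{(2)}=-\mathbf A_n$. The leading term is controlled by Lemma~\ref{bound}: $\mu_n^{-1}\max_{j>s_n}|(\mathbf U_{n,2}(\bbeta_0))_j|\le\mu_n^{-1}\|\mathbf U_{n,2}(\bbeta_0)\|=O_\P(\sqrt{p_n/\mu_n})$. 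For the Hessian term, rather than the crude spectral bound $\|\mathbf A_n\|=O(p_n\mu_n)$, I would bound the $j$-th component directly by writing $(\mathbf A_n(\tilde\bbeta)(\bbeta-\bbeta_0))_j=\int_{D_n}z_j(u)\{\bz(u)^\top(\bbeta-\bbeta_0)\}\rho(u;\tilde\bbeta)\,\mathrm du$ and invoking $|z_j(u)|\le\kappa$ together with $|\bz(u)^\top(\bbeta-\bbeta_0)|\le\kappa\sqrt{p_n}\,\|\bbeta-\bbeta_0\|$ from Condition~\cond{C:cov}, and $\int_{D_n}\rho(u;\tilde\bbeta)\,\mathrm du=O_\P(\mu_n)$ from Lemma~\ref{lem:rhoubeta}(iii). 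This gives $|(\mathbf A_n(\tilde\bbeta)(\bbeta-\bbeta_0))_j|=O_\P(\sqrt{p_n}\,\|\bbeta-\bbeta_0\|\,\mu_n)$, uniformly for each interpolating $\tilde\bbeta$ in the ball.

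Combining, since $\|\bbeta-\bbeta_0\|=O_\P(\sqrt{p_n/\mu_n})$ throughout the ball (with $\bbeta_1$ random but $O_\P(\sqrt{p_n/\mu_n})$, handled by Lemma~\ref{lem:rhoubeta}(iii) with $\zeta_n=\sqrt{p_n/\mu_n}$), the Hessian contribution to $\mu_n^{-1}\max_{j>s_n}|(\mathbf U_n(\bbeta))_j|$ is $O_\P(\sqrt{p_n^2/\mu_n})$, which dominates the leading $O_\P(\sqrt{p_n/\mu_n})$ term. Condition~\cond{C:anbn} ($b_n\sqrt{\mu_n/p_n^2}\to\infty$) then makes the whole bound $o_\P(b_n)$, yielding the strict inequality with probability tending to one and hence the lemma.

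The main obstacle is exactly this component-wise control of the Hessian term. The naive spectral-norm estimate would produce $O_\P(\sqrt{p_n^3/\mu_n})$, losing a factor $\sqrt{p_n}$ and forcing the much stronger requirement $b_n\sqrt{\mu_n/p_n^3}\to\infty$; it is the use of the boundedness of individual covariates (rather than of $\|\mathbf A_n\|$) and of the uniform control of $\int_{D_n}\rho(u;\bbeta)\,\mathrm du$ over the neighbourhood that sharpens the bound and brings it into line with the stated Condition~\cond{C:anbn}. A secondary technical point, already packaged in Lemma~\ref{lem:rhoubeta}, is ensuring that all these estimates hold \emph{uniformly} over the ball and for the random, non-sparse $\bbeta_1$.
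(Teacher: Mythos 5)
Your proposal is correct and follows essentially the same route as the paper: the paper likewise reduces the lemma to sign conditions on $\partial Q_n(\bbeta)/\partial\beta_j$ for $j>s_n$, expands $(\mathbf U_n(\bbeta))_j$ around $\bbeta_0$, and bounds the remainder component-wise as $O_{\mathrm P}(p_n\sqrt{\mu_n})$ using the boundedness of the covariates and Lemma~\ref{lem:rhoubeta} (exactly your sharpened estimate, avoiding the spectral-norm loss), before concluding via $b_n\sqrt{\mu_n/p_n^2}\to\infty$. The only cosmetic differences are that the paper bounds the leading term $\partial\ell_n(\bbeta_0)/\partial\beta_j$ by its component variance $O(\mu_n)$ rather than by $\|\mathbf U_{n,2}(\bbeta_0)\|$, and leaves the coordinatewise-maximization (your telescoping) step implicit.
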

		\begin{proof}
			Let $\varepsilon_n= K_1 \sqrt{p_n/\mu_n}$. It is sufficient to show that with probability tending to $1$ as ${n\to \infty}$, for any ${\bbeta_1}$ satisfying $\|{\bbeta_1 -\bbeta_{01}}\|=O_\mathrm{P}(\sqrt{p_n/\mu_n})$, we have for any $j=s_n+1, \ldots, p_n$
			
			\begin{equation}
				\label{sparsitya}
				\frac {\partial Q_n(\bbeta)}{\partial\beta_j}<0 \quad
				\mbox { for } 0<\beta_j<\varepsilon_n, \mbox{ and}
			\end{equation}
			
			\begin{equation}
				\label{sparsityb}
				\frac {\partial Q_n(\bf \bbeta)}{\partial\beta_j}>0 \quad
				\mbox { for } -\varepsilon_n<\beta_j<0.
			\end{equation}
			From \eqref{eq:likepois},
			\begin{align*}
				\frac {\partial \ell_n(\bbeta)}{\partial\beta_j} = \frac {\partial \ell_n{(\bbeta_0)}}{\partial\beta_j} + R_n,
			\end{align*}
			where $R_n =  {-} \int_{D_n}  z_j(u)\big\{\rho(u;\bbeta)-\rho(u;\bbeta_0)\big\} \mathrm{d}u$. 
			Let $u \in \mathbb{R}^d$. By Taylor expansion, there exists $ t\in (0,1), $ such that 
			\begin{align*}
				\rho(u;\bbeta) = \rho(u;\bbeta_0) + (\bbeta-\bbeta_0)^\top \bz(u) \rho\{u;\bbeta_0 + t(\bbeta-\bbeta_0 )\}.
			\end{align*}
			By condition~\cond{C:intensity}-\cond{C:cov} and Lemma~\ref{lem:rhoubeta}, we have for $n$ sufficiently large
			\[
			|R_n| \le \kappa \|\bbeta-\bbeta_0\| \sqrt{p_n} \int_{D_n} \rho(u;\bbeta_0) \dd u = 
			O_{\mathrm P}\left( \sqrt{\frac{p_n}{\mu_n}} \sqrt{p_n} \mu_n \right) =
			O_{\mathrm P}\left( {p_n} \sqrt{\mu_n} \right). 
			\]
			Following the proof of Lemma~\ref{bound}, we can derive $\mathrm{Var}({\partial \ell_n{(\bbeta_0)}}/{\partial\beta_j} ) = \mathrm{Var}[\{\mathbf U_n(\bbeta_0) \}_j ]=O(\mu_n)$ whereby we deduce that
			\begin{equation}
				\label{Op}
				\frac {\partial \ell_n(\bbeta)}{\partial\beta_j} = O_{\mathrm P} (p_n \sqrt{\mu_n}).
			\end{equation}
			
			Now, we want to prove (\ref{sparsitya}). Let $0<\beta_j<\varepsilon_n$ and remind that the sequence $b_n$ is given by~\cond{C:anbn}. Then, for $n$ sufficiently large,
			\begin{align*}
				\mathrm{P} \left\{ \frac {\partial Q_n(\bbeta)}{\partial\beta_j}<0 \right\}&=\mathrm{P} \left\{ \frac {\partial \ell_n(\bbeta)}{\partial\beta_j} - \mu_n{\lambda_{n,j}}\sign(\beta_j)<0 \right\}\\
				&=\mathrm{P} \left\{ \frac {\partial \ell_n(\bbeta)}{\partial\beta_j}< \mu_n{\lambda_{n,j}} \right\}\\
				& \geq \mathrm{P} \left\{ \frac {\partial \ell_n(\bbeta)}{\partial\beta_j}< \mu_n b_n \right\}\\
				&= \mathrm{P} \left\{ \frac {\partial \ell_n(\bbeta)}{\partial\beta_j}< p_n\sqrt{\mu_n} \; \sqrt{\frac{\mu_n}{p_n^2}}b_n \right\}.
			\end{align*}
			The assertion (\ref{sparsitya}) is therefore deduced from (\ref{Op}) and from the assumption that $b_n \sqrt{\mu_n/p_n^2} \to \infty$ as $n  \to \infty$. We proceed similarly to prove (\ref{sparsityb}).
		\end{proof}
		
		\subsection{Asymptotic normality for $\hat \bbeta=\hat \bbeta_{\mathrm{AL}}$}

		\begin{proof}
			As shown in Proposition~\ref{proposition:AL}, there is a root-$(\mu_n/p_n)$ consistent local maximizer $\hat \bbeta_{\mathrm{AL}}$ of $Q_n(\bbeta)$, and it can be shown that there exists an estimator $\hat\bbeta_{\mathrm{AL},1}$ in Proposition~\ref{proposition:AL} that is a root-$(\mu_n/p_n)$ consistent local maximizer of $ Q_n \{({\bbeta_1}^\top,\mathbf{0}^\top)^\top \Big\}$, which is regarded as a function of  $\boldsymbol {\beta}_1$, and that satisfies
			\begin{align*}
				\frac {\partial Q_n(\hat\bbeta_{\mathrm{AL}})}{\partial\beta_j}=0 \quad
				\mbox { for } j=1,\ldots,s_n \mbox { and } \hat\bbeta_{\mathrm{AL}}=( \hat\bbeta_{\mathrm{AL},1}^\top,\mathbf{0}^ \top)^\top.
			\end{align*}
			There exists $t\in (0,1)$ and $\boldsymbol{\check{\beta}}= \hat\bbeta_{\mathrm{AL}} + t(\bbeta_0-\hat\bbeta_{\mathrm{AL}})$  such that for $j=1,\cdots,s_n$
			\begin{align}
				0
				=&\frac {\partial \ell_n{(\hat\bbeta_{\mathrm{AL}})}}{\partial\beta_j}-\mu_n{\lambda_{n,j}}\sign({\hat \beta_{\mathrm{AL},j}}) \nonumber\\
				=&\frac {\partial \ell_n{(\bbeta_0)}}{\partial\beta_j}+{\sum_{l=1}^{s_n} \frac {\partial^2 \ell_n{( \boldsymbol{\check{\beta}})}}{\partial\beta_j \partial\beta_l}}({\hat \beta_{\mathrm{AL},l}-\beta_{0l}})-\mu_n{\lambda_{n,j}}\sign({\hat \beta_{\mathrm{AL},j}}) \nonumber\\
				=&\frac {\partial \ell_n{(\bbeta_0)}}{\partial\beta_j}+{\sum_{l=1}^{s_n} \frac {\partial^2 \ell_n{( \bbeta_0)}}{\partial\beta_j \partial\beta_l}}({\hat \beta_{\mathrm{AL},l}-\beta_{0l}})+{\sum_{l=1}^{s_n} \Psi_{n,jl}({\hat \beta_{\mathrm{AL},l}}-\beta_{0l})} \nonumber \\
				&-\mu_n\lambda_{n,j}\sign(  {\hat \beta_{\mathrm{AL},j}}   )
				\label{eq:0equal}
			\end{align}
			where 
			\begin{align*}
				\Psi_{n,jl}=\frac {\partial^2 \ell_n{(\boldsymbol{\check{\beta}})}}{\partial\beta_j \partial\beta_l}-\frac {\partial^2 \ell_n{(\bbeta_0)}}{\partial\beta_j \partial\beta_l}.
			\end{align*}
			Let $\mathbf{U}_{n,1}(\bbeta_{0})$ (resp. $\ell^{(2)}_{n,1}(\bbeta_{0})$) be the first $s_n$ components (resp. $s_n \times s_n$ top-left corner) of $\mathbf{U}_{n}(\bbeta_{0})$ (resp. $\ell^{(2)}_{n}(\bbeta_{0})$). Let also $\boldsymbol \Psi_n$ be the $s_n \times s_n$ matrix containing $\Psi_{n,jl}, j,l=1,\ldots,s_n$. Finally, let the vector $\mathbf{p}'_n$ 
			\begin{align*}
				\mathbf{p}'_n&={ 
					\{\lambda_{n,1}\sign({\hat \beta_{\mathrm{AL},1}} ),\ldots,
					\lambda_{n,s_n}\sign( {\hat \beta_{\mathrm{AL},s_n}} )\}^\top}.  
			\end{align*}
			These notation allow us to rewrite~\eqref{eq:0equal} as
			\begin{equation}
				\label{eq:tmp}
				\mathbf U_{n,1}(\bbeta_0) - \mathbf A_{n,11}(\bbeta_0) (\hat \bbeta_{\mathrm{AL},1}-\bbeta_{01}) +
				\boldsymbol \Psi_n (\hat \bbeta_{\mathrm{AL},1}-\bbeta_{01}) - \mu_n \mathbf{p}'_n =0.
			\end{equation}
			Let $\boldsymbol \phi \in \mathbb R^{s_n}\setminus \{0\}$ and $\sigma^2_\phi = \boldsymbol \phi^\top \mathbf B_{n,11}(\bbeta_0) \boldsymbol \phi$, then
			\[
			\sigma_{\boldsymbol \phi}^{-1} \boldsymbol \phi^\top\mathbf U_{n,1}(\bbeta_0) - \sigma_{\boldsymbol \phi}^{-1} \boldsymbol \phi^\top\mathbf A_{n,11}(\bbeta_0) (\hat \bbeta_{\mathrm{AL},1}-\bbeta_{01}) +
			\sigma_{\boldsymbol \phi}^{-1} \boldsymbol \phi^\top\boldsymbol \Psi_n (\hat \bbeta_{\mathrm{AL},1}-\bbeta_{01}) - \mu_n \sigma_{\boldsymbol \phi}^{-1} \boldsymbol \phi^\top\mathbf{p}'_n =0.
			\]
			Now, by condition~\cond{C:Bn}, $\sigma_{\boldsymbol \phi}^{-1} = O(\mu_n^{-1/2})$ and by the definition of $a_n$, $\mathbf p'_n= O(a_n \sqrt{s_n})$. By conditions~\cond{C:intensity}-\cond{C:cov}, there exists some $\tilde \bbeta$ on the line segment between $\bbeta_0$ and $\check \bbeta$ such that
			\[
			\boldsymbol \Psi_n = \int_{D_n} \bz_1(u) \bz_1(u)^\top (\check \bbeta-\bbeta_0)^\top \bz(u) \rho(u;\tilde \bbeta) \dd u
			\]
			whereby we deduce from conditions~\cond{C:intensity}-\cond{C:cov} and Lemma~\ref{lem:rhoubeta} that
			\[
			\|\boldsymbol \Psi_n\| = O_{\mathrm P} \left( s_n \sqrt{\frac{p_n}{\mu_n}} \sqrt{p_n} \mu_n\right) = O_{\mathrm P}(s_np_n\sqrt{\mu_n}).
			\]
			The last two results and conditions~\cond{C:snpn}-\cond{C:anbn} yield that
			\begin{align*}
				\sigma_{\boldsymbol \phi}^{-1} \boldsymbol \phi^\top\boldsymbol \Psi_n (\hat \bbeta_{\mathrm{AL},1}-\bbeta_{01})&= 
				O_{\mathrm P} \left( \frac1{\sqrt{\mu_n}} s_np_n\sqrt{\mu_n} \sqrt{\frac{p_n}{\mu_n}} \right) = 
				O_{\mathrm P} \left( \sqrt{\frac{s_n^2 p_n^3}{\mu_n}}\right) = o_{\mathrm P}(1)\\
				\mu_n \sigma_{\boldsymbol \phi}^{-1} \boldsymbol \phi^\top\mathbf{p}'_n &=
				O \left(\mu_n \frac{1}{\sqrt{\mu_n}} a_n\sqrt{s_n}  \right) = O(a_n\sqrt{s_n\mu_n}) = o(1).
			\end{align*}
			These results finally lead to
			\[
			\sigma_{\boldsymbol \phi}^{-1} \boldsymbol \phi^\top\mathbf A_{n,11}(\bbeta_0) (\hat \bbeta_{\mathrm{AL},1}-\bbeta_{01})  = \sigma_{\boldsymbol \phi}^{-1} \boldsymbol \phi^\top\mathbf U_{n,1}(\bbeta_0)  + o_{\mathrm P}(1)
			\]
			and finally to the proof of the result using Slutsky's lemma and condition~\cond{C:clt}.

		\end{proof}

		\section{Proof of Theorem~\ref{thm:main} when $\hat \bbeta=\hat \bbeta_{\mathrm{ALDS}}$} \label{sec:proofALDS}

		\subsection{Existence and optimal solutions for  the primal and dual problems} \label{sec:existenceALDS}

		For $\bbeta \in \mathbb R^{p_n}$, we let $\boldsymbol\Delta_n (\bbeta) = \mathbf{U}_{n}( \tilde {\bbeta}) +  \mathbf{A}_n( \tilde {\bbeta}) (\tilde {\bbeta} -\bbeta )$.

		\begin{lemma} \label{lem:existenceALDS}
			There exists a solution to the problem~\eqref{ADS2}.
		\end{lemma}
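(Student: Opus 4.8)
The plan is to read~\eqref{ADS2} as a linear program: the objective $\|\bld_n\bbeta\|_1$ is a weighted $\ell_1$ norm and the constraint is built from the \emph{affine} map $\bbeta\mapsto\boldsymbol\Delta_n(\bbeta)=\mathbf{U}_n(\tilde\bbeta)+\mathbf{A}_n(\tilde\bbeta)(\tilde\bbeta-\bbeta)$. Existence will then follow from a standard Weierstrass argument once I check three things: the feasible set is non-empty, it is closed, and the objective is coercive. The first is the only point requiring structural input; the other two are immediate.

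First I would establish non-emptiness by exhibiting an explicit feasible point, which hinges on the invertibility of $\mathbf{A}_n(\tilde\bbeta)$. For any $\boldsymbol\phi$ with $\|\boldsymbol\phi\|=1$,
\[
\boldsymbol\phi^\top\mathbf{A}_n(\tilde\bbeta)\boldsymbol\phi=\int_{D_n}\{\boldsymbol\phi^\top\bz(u)\}^2\,\rho(u;\tilde\bbeta)\,\dd u>0,
\]
since the last part of Condition~\cond{C:cov} guarantees $\inf_{u\in D_n}\{\boldsymbol\phi^\top\bz(u)\}^2>0$ and $\rho(u;\tilde\bbeta)=\exp\{\tilde\bbeta^\top\bz(u)\}>0$. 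Hence $\mathbf{A}_n(\tilde\bbeta)$ is positive definite and invertible, so I may take the one-step Newton update
\[
\bbeta^\star=\tilde\bbeta+\mathbf{A}_n(\tilde\bbeta)^{-1}\mathbf{U}_n(\tilde\bbeta),
\]
for which $\boldsymbol\Delta_n(\bbeta^\star)=\mathbf{U}_n(\tilde\bbeta)+\mathbf{A}_n(\tilde\bbeta)(\tilde\bbeta-\bbeta^\star)=\mathbf 0$. This point satisfies the constraint trivially, as $(\mu_n)^{-1}\|\bld_n^{-1}\mathbf 0\|_\infty=0\le 1$.

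Next I would invoke the standing assumption for the ALDS procedure that $\lambda_{n,j}>0$ for all $j$, so that $\bld_n$ is invertible and the objective is coercive, $\|\bld_n\bbeta\|_1\ge(\min_j\lambda_{n,j})\|\bbeta\|_1\to\infty$ as $\|\bbeta\|\to\infty$. The feasible set $\mathcal F=\{\bbeta:(\mu_n)^{-1}\|\bld_n^{-1}\boldsymbol\Delta_n(\bbeta)\|_\infty\le 1\}$ is closed, being the preimage of a closed set under the continuous map $\boldsymbol\Delta_n$. Intersecting $\mathcal F$ with the sublevel set $\{\bbeta:\|\bld_n\bbeta\|_1\le\|\bld_n\bbeta^\star\|_1\}$ yields a set that is non-empty (it contains $\bbeta^\star$), closed, and bounded by coercivity, hence compact; the continuous objective attains its infimum there, and this minimizer solves~\eqref{ADS2}. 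The only genuinely delicate step is the invertibility of $\mathbf{A}_n(\tilde\bbeta)$ above, which I expect to be the main obstacle — though I note it holds \emph{surely} for every realization of $\tilde\bbeta$, so no probabilistic control is needed here, in contrast to the quantitative bound on $\mathbf{A}_{n,11}(\tilde\bbeta)^{-1}$ imposed by Condition~\cond{C:initial} and used later.
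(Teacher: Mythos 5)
Your proof is correct, and it takes a genuinely different route from the paper's. The paper recasts \eqref{ADS2} as a linear program with slack variables, builds the Lagrangian, derives the dual problem, and concludes existence by exhibiting a feasible dual point so that strong duality for linear programs applies. You instead argue directly by Weierstrass: the Newton point $\bbeta^\star=\tilde\bbeta+\mathbf{A}_n(\tilde\bbeta)^{-1}\mathbf{U}_n(\tilde\bbeta)$ (legitimate, since \cond{C:cov} makes $\mathbf{A}_n(\tilde\bbeta)$ positive definite for every realization of $\tilde\bbeta$) annihilates the constraint vector, the feasible set is closed, and positivity of all $\lambda_{n,j}$ makes the weighted $\ell_1$ objective coercive, so the minimum is attained on a compact sublevel set. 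Your argument is more elementary and self-contained; in particular it makes primal feasibility explicit, a point the paper's duality argument leaves implicit (dual feasibility bounds the primal value from below via weak duality, but attainment also requires a feasible primal point, which is exactly what your $\bbeta^\star$ supplies). What the paper's longer route buys is not the existence statement itself but infrastructure: the dual problem \eqref{eq:dual} and its feasibility characterization are reused in Lemma~\ref{lemma:opt} to state the KKT/optimality conditions, and then in Lemma~\ref{lemma:sparsity}, where explicit primal-dual pairs are constructed to prove the sparsity and rate results for $\hat\bbeta_{\mathrm{ALDS}}$. Your compactness argument, while cleaner for existence alone, produces no dual certificate and therefore could not replace that part of the subsequent development.
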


		\begin{proof}
			Following \cite{candes:romberg:05}, we state that \eqref{ADS2} is equivalent to
			\begin{align}
				\label{ADS-linear}
				\min_{\bbeta, {\boldsymbol u} } \sum_j u_j \mbox{ subject to }  \begin{cases}
					\bld_{n} \bbeta \leq {\boldsymbol u} \\
					-\bld_{n} \bbeta \leq {\boldsymbol u} \\
					\mu_n^{-1}\bld_{n}^{-1} \boldsymbol \Delta_n(\bbeta) - \boldsymbol 1_{p_n} \leq {\mathbf{0}} \\
					- \mu_n^{-1} \bld_{n}^{-1} \boldsymbol \Delta_n(\bbeta) - \boldsymbol 1_{p_n} \leq {\mathbf{0} }
				\end{cases}
			\end{align}
			where $\boldsymbol u  \in \mathbb{R}^{p_n}$ is an additional parameter vector to be optimized and $\tilde {\bbeta}$ is the initial estimator. Note that~\eqref{ADS-linear} is a linear problem with $4p_n$ linear inequality constraints. To prove the existence of ALDS estimates, we need to derive dual problem of \eqref{ADS-linear} and prove that strong duality holds. To derive the dual problem, we first construct the Lagrangian form associated with the problem \eqref{ADS-linear} considering the main arguments by \cite[section 5.2]{boyd2004convex}
			\begin{align*}
				L(\bbeta;  \boldsymbol u ; \baf) & = \sum_j u_j  
				+ \baf_1^\top (\bld_{n} \bbeta -\mathbf u)
				+ \baf_2^\top (-\bld_{n} \bbeta -\mathbf u)\notag\\
				&  \quad + \baf_3^\top \Big[ \mu_n^{-1} \bld_{n}^{-1} 
				\boldsymbol \Delta_n(\bbeta)- \boldsymbol 1_{p_n} \Big]  + \baf_4^\top \Big[ - \mu_n^{-1} \bld_{n}^{-1} \boldsymbol \Delta_n(\bbeta) - \boldsymbol 1_{p_n} \Big]\\
				& =  (\boldsymbol 1_{p_n} - \baf_1 - \baf_2)^\top \boldsymbol u + \Big \{  (\baf_1-\baf_2)^\top\bld_{n}
				- \mu_n^{-1} (\baf_3-\baf_4)^\top\bld_{n}^{-1} \mathbf{A}_n( \tilde {\bbeta})
				\Big \}\bbeta \notag \\
				&  \quad + (\baf_3-\baf_4)^\top \Big \{ \mu_n^{-1} \bld_{n}^{-1} \Big(\mathbf{U}_{n}( \tilde {\bbeta}) +  \mathbf{A}_n( \tilde {\bbeta}) \tilde {\bbeta} \Big)  \Big \} - (\baf_3+\baf_4)^\top \boldsymbol 1_{p_n},
			\end{align*}
			where $\baf =(\baf_1^\top, \baf_2^\top, \baf_3^\top, \baf_4^\top)^\top \in \mathbb{R}^{4p_n}$ is the dual vector (which can be viewed as a Lagrange multiplier). 
			
			The dual function  $h$ is defined by
			\begin{align}
				h(\baf)&=\inf_{\bbeta, \boldsymbol u} L(\bbeta; \boldsymbol u; \baf) \\
				&=
				\begin{cases}
					(\baf_3-\baf_4)^\top \Big \{ \mu_n^{-1}\bld_{n}^{-1} \Big(\mathbf{U}_{n}( \tilde {\bbeta}) +  \mathbf{A}_n( \tilde {\bbeta}) \tilde {\bbeta} \Big)  \Big \} - (\baf_3+\baf_4)^\top \boldsymbol 1_{p_n} , \text{ if} \\
					\quad 
					\begin{cases}
						\boldsymbol 1_{p_n} - \baf_1 - \baf_2 ={\mathbf{0}} \\
						(\baf_1-\baf_2)^\top\bld_{n}
						- \mu_n^{-1} (\baf_3-\baf_4)^\top\bld_{n}^{-1} \mathbf{A}_n( \tilde {\bbeta}) ={\mathbf{0}}
					\end{cases}
					\\
					-\infty \text{ otherwise}.
				\end{cases} \nonumber
			\end{align}
			For any $\baf =(\baf_1^\top, \baf_2^\top, \baf_3^\top, \baf_4^\top)^\top \in \mathbb{R^+}^{4p_n}$, $h(\baf)$ is a lower bound to the optimality problem \eqref{ADS-linear} (see \cite[p.216]{boyd2004convex}). To find the best lower bound comes to solve the dual problem: $\max_{\baf \geq {\mathbf{0}}} h(\baf).$

			Recall that problem \eqref{ADS-linear} is a linear program with linear inequality constraints, so that strong duality holds if the dual problem is feasible \cite[see][p.227]{boyd2004convex}, that is to say if there exists some $\baf =(\baf_1^\top, \baf_2^\top, \baf_3^\top, \baf_4^\top)^\top \in \mathbb{R^+}^{4p_n}$ such that
			\begin{eqnarray}
				\boldsymbol 1_{p_n} - \baf_1 - \baf_2 &=& {\mathbf{0}} \notag \\
				(\baf_1-\baf_2)^\top\bld_{n}
				- \mu_n^{-1} (\baf_3-\baf_4)^\top\bld_{n}^{-1} \mathbf{A}_n( \tilde {\bbeta}) &=&{\mathbf{0}}.
				\notag
			\end{eqnarray}
			Moreover, we remark that
			\begin{eqnarray}
				\baf_1 \geq {\mathbf{0}} \notag , \; 	\baf_2\geq {\mathbf{0}} \notag , \; \baf_3 \geq {\mathbf{0}} \notag, \; \baf_4 &\geq& {\mathbf{0}} \notag \\
				\boldsymbol 1_{p_n} - \baf_1 - \baf_2 &=&{\mathbf{0}} \notag \\
				(\baf_1-\baf_2)^\top\bld_{n}
				- \mu_n^{-1} (\baf_3-\baf_4)^\top\bld_{n}^{-1} \mathbf{A}_n( \tilde {\bbeta}) &=&{\mathbf{0}}
				\notag
			\end{eqnarray}
			is equivalent to 
			\begin{eqnarray}
				\baf_1 \geq {\mathbf{0}} \notag , \; 
				\baf_2=\boldsymbol 1_n - \baf_1 \geq {\mathbf{0}} \notag ,\; 
				\baf_3 \geq {\mathbf{0}} \notag ,\;
				\baf_4 &\geq& {\mathbf{0}} \notag \\
				(2\baf_1-\boldsymbol 1_{p_n})^\top\bld_{n}  
				-  \mu_n^{-1} (\baf_3-\baf_4)^\top\bld_{n}^{-1} \mathbf{A}_n( \tilde {\bbeta}) &=&{\mathbf{0}}
				\notag
			\end{eqnarray}
			which is also equivalent to 
			\begin{eqnarray}
				\baf_1 = \frac{1}{2} \Big\{\boldsymbol 1_{p_n}  + \mu_n^{-1}  \bld_{n}^{-1} \mathbf{A}_n( \tilde {\bbeta }) \bld_{n}^{-1} (\baf_3 - \baf_4)   \Big\}&\geq& {\mathbf{0}} \notag \\
				\baf_2 =\boldsymbol 1_{p_n} - \baf_1 = \frac{1}{2} \Big\{\boldsymbol 1_{p_n}  - \mu_n^{-1}  \bld_{n}^{-1} \mathbf{A}_n( \tilde {\bbeta }) \bld_{n}^{-1} (\baf_3 - \baf_4) \Big\}&\geq& {\mathbf{0}} \notag \\
				\baf_3 \geq {\mathbf{0}} \notag,\; 
				\baf_4 &\geq& {\mathbf{0}} \notag.
			\end{eqnarray}
			This comes to the condition that there exists $(\baf_3^\top, \baf_4^\top)^\top \in \mathbb{R^+}^{2p_n}$ such that
			\begin{equation}
				\label{existence}
				\mu_n^{-1} \| \bld_{n}^{-1} \mathbf{A}_n( \tilde {\bbeta }) \bld_{n}^{-1} (\baf_3 - \baf_4) \|_{\infty} \leq 1 .
			\end{equation}
			Therefore, the dual problem associated with \eqref{ADS-linear} is
			\begin{align}
				\max_{\baf_3, \baf_4 \geq 0} (\baf_3-\baf_4)^\top \Big[ \mu_n^{-1} \bld_{n}^{-1} \Big\{\mathbf{U}_{n}( \tilde {\bbeta}) +  \mathbf{A}_n( \tilde {\bbeta}) \tilde {\bbeta} \Big\}  \Big] - (\baf_3+\baf_4)^\top \boldsymbol 1_{p_n}  \nonumber \\
				\text{subject to } \mu_n^{-1}  \| \bld_{n}^{-1} \mathbf{A}_n( \tilde {\bbeta }) \bld_{n}^{-1} (\baf_3 - \baf_4) \|_{\infty} \leq 1 . \label{eq:dual2}
			\end{align}
			
			The condition \eqref{existence} is always true as far as the matrix $\bld_{n}^{-1} \mathbf{A}_n( \tilde {\bbeta }) \bld_{n}^{-1}$ is non zero. Indeed, let $\mathbf y \in \mathbb{R}^{p_n}$ such that $\mathbf \{\bld_{n}^{-1} \mathbf{A}_n( \tilde {\bbeta }) \bld_{n}^{-1} y \}  \neq 0$.  Now define 
			\begin{eqnarray*}
				\alpha_{3j} &=& \frac{y_j}{\mu_n^{-1} \|\bld_{n}^{-1} \mathbf{A}_n( \tilde {\bbeta }) \bld_{n}^{-1} y \|_{\infty}} \mathbf 1(y_j>0), \\
				\alpha_{4j} &=& \frac{-y_j}{\mu_n^{-1} \| \bld_{n}^{-1} \mathbf{A}_n( \tilde {\bbeta }) \bld_{n}^{-1} y \|_{\infty}} \mathbf 1(y_j<0).
			\end{eqnarray*}
			Clearly, $(\baf_3^\top, \baf_4^\top)^\top \in \mathbb{R^+}^{2p_n}$, 
			\eqref{existence} is always verified. This ends the proof.
		\end{proof}
		
		Note that the dual problem~\eqref{eq:dual2} can be unequivocally reparameterized in terms of $\bgam=\baf_3-\baf_4$ as follows
		\begin{align}
			\max_{\bgam \in \mathbb{R}^{p_n}} \bgam^\top \Big[ \mu_n^{-1} \bld_{n}^{-1} \Big\{\mathbf{U}_{n}( \tilde {\bbeta}) +  \mathbf{A}_n( \tilde {\bbeta}) \tilde {\bbeta} \Big\}  \Big] - \|\bgam\|_1  \nonumber \\
			\text{subject to } \mu_n^{-1}  \|\bgam^\top \bld_{n}^{-1} \mathbf{A}_n( \tilde {\bbeta }) \bld_{n}^{-1} \|_{\infty} \leq 1 \label{eq:dual}
		\end{align}
		due to complementary slackness conditions. Now, we derive the following optimality conditions and obtain optimal primal and dual solutions.

		We derive the following optimality conditions ensuring the Karush-Kuhn-Tucker (KKT) conditions and thus obtain optimal primal and dual solutions.
		
		\begin{lemma}
			\label{lemma:opt}
			Consider the primal and dual problems defined by \eqref{ADS2} and \eqref{eq:dual}. 
			Suppose that the matrix $\bld_{n}^{-1} \mathbf{A}_n( \tilde {\bbeta }) \bld_{n}^{-1}$ is non zero and that ${\hat \bbeta}$ and $\hat{\bgam}$ verify
			\begin{align}
				\mu_n^{-1} \Big\|\bld_{n}^{-1} 
				\boldsymbol \Delta_n({\hat \bbeta})
				\Big\|_\infty & \leq 1  \label{eq:fea1} \\  
				\mu_n^{-1} \Big\|\hat \bgam^\top \bld_{n}^{-1} \mathbf{A}_n( \tilde {\bbeta}) \bld_{n}^{-1} \Big\|_\infty  & \leq 1 \label{eq:fea2} \\
				\mu_n^{-1}
				\hat \bgam^\top  \bld_{n}^{-1} \mathbf{A}_n( \tilde {\bbeta}) \hat \bbeta & = \|\bld_n \hat \bbeta \|_1 \label{eq:slack1} \\
				\mu_n^{-1} \hat \bgam ^\top \bld_{n}^{-1} \boldsymbol\Delta_n({\hat \bbeta})  & = \|\hat \bgam \|_1. \label{eq:slack2}
			\end{align}
			Then the Karush-Kuhn-Tucker (KKT) conditions for \eqref{ADS2} are fulfilled and ${\hat \bbeta}$ and $\hat\bgam$ are the optimal primal and dual solutions.
		\end{lemma}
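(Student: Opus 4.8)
The plan is to exhibit a zero duality gap: I will show that the value of the primal objective at $\hat\bbeta$ equals the value of the dual objective at $\hat\bgam$, which by weak duality forces both points to be optimal. Since~\eqref{ADS2} (equivalently the linear program~\eqref{ADS-linear}) is a feasible convex program for which strong duality was established in Lemma~\ref{lem:existenceALDS}, producing a primal--dual pair with matching objective values is precisely a certificate that the KKT conditions are fulfilled. So the whole argument reduces to one algebraic identity plus a bookkeeping step on the multipliers.

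First I would recall from the Lagrangian built in the proof of Lemma~\ref{lem:existenceALDS} that, for every primal feasible $\bbeta$ and every dual feasible $\bgam$, the reparametrized dual objective
\[
D(\bgam) = \mu_n^{-1}\bgam^\top\bld_n^{-1}\bigl\{\mathbf U_n(\tilde\bbeta)+\mathbf A_n(\tilde\bbeta)\tilde\bbeta\bigr\} - \|\bgam\|_1
\]
is a lower bound for the primal objective $\|\bld_n\bbeta\|_1$. Here conditions~\eqref{eq:fea1} and~\eqref{eq:fea2} are exactly primal and dual feasibility of $\hat\bbeta$ and $\hat\bgam$, so only the matching of objective values remains. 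The key computation is to rewrite the vector inside $D(\hat\bgam)$ by means of the elementary identity $\mathbf U_n(\tilde\bbeta)+\mathbf A_n(\tilde\bbeta)\tilde\bbeta=\boldsymbol\Delta_n(\hat\bbeta)+\mathbf A_n(\tilde\bbeta)\hat\bbeta$, immediate from the definition $\boldsymbol\Delta_n(\hat\bbeta)=\mathbf U_n(\tilde\bbeta)+\mathbf A_n(\tilde\bbeta)(\tilde\bbeta-\hat\bbeta)$. This splits the dual value as
\[
D(\hat\bgam)=\mu_n^{-1}\hat\bgam^\top\bld_n^{-1}\boldsymbol\Delta_n(\hat\bbeta)+\mu_n^{-1}\hat\bgam^\top\bld_n^{-1}\mathbf A_n(\tilde\bbeta)\hat\bbeta-\|\hat\bgam\|_1 .
\]
Now the slackness identity~\eqref{eq:slack2} replaces the first term by $\|\hat\bgam\|_1$ and~\eqref{eq:slack1} replaces the second term by $\|\bld_n\hat\bbeta\|_1$, so $D(\hat\bgam)=\|\bld_n\hat\bbeta\|_1$, which is precisely the primal value at $\hat\bbeta$. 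Hence the duality gap vanishes and, by weak duality, $\hat\bbeta$ solves~\eqref{ADS2} while $\hat\bgam$ solves~\eqref{eq:dual}.

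To phrase the conclusion in terms of the KKT system itself, I would translate $\hat\bgam$ back into the original multipliers $(\baf_1,\baf_2,\baf_3,\baf_4)$ of~\eqref{ADS-linear}: set $\baf_3$ and $\baf_4$ equal to the positive and negative parts of $\hat\bgam$ (so that $\baf_3-\baf_4=\hat\bgam$ and $\baf_3+\baf_4$ contributes $\|\hat\bgam\|_1$), recover $\baf_1,\baf_2$ from the stationarity relation of Lemma~\ref{lem:existenceALDS}, and take $\hat u_j=\lambda_{n,j}|\hat\beta_j|$. Nonnegativity of $\baf_1,\baf_2$ is exactly dual feasibility~\eqref{eq:fea2}, stationarity holds by construction of the reparametrization, and complementary slackness on the four families of constraints is encoded in~\eqref{eq:slack1}--\eqref{eq:slack2} together with the choice of $\hat{\boldsymbol u}$.

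I expect the only delicate part to be this multiplier bookkeeping, i.e.\ checking that each complementary slackness relation of the full $4p_n$-constraint program is matched by one of the two scalar identities~\eqref{eq:slack1}--\eqref{eq:slack2} once the positive/negative-part decomposition of $\hat\bgam$ is used; the duality-gap computation itself is the short manipulation above. The convexity of the linear program then makes the verified KKT conditions sufficient for global optimality of both $\hat\bbeta$ and $\hat\bgam$, which completes the proof.
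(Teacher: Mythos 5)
Your proposal is correct, but it runs in the opposite logical direction from the paper's proof, and the difference is worth noting. The paper writes out the full KKT system \eqref{KKT1}--\eqref{KKT12} for the linear program \eqref{ADS-linear}, constructs every multiplier explicitly ($\hat\baf_3,\hat\baf_4$ as the positive/negative parts of $\hat\bgam$; $\hat\baf_1,\hat\baf_2$ from the stationarity relations, nonnegative precisely because of \eqref{eq:fea2}; and $\hat u_i=|(\bld_n\hat\bbeta)_i|$, as you also choose), verifies each condition, and only then concludes optimality from KKT sufficiency for convex programs. You instead prove optimality first, via a zero duality gap: the identity $\mathbf{U}_n(\tilde\bbeta)+\mathbf{A}_n(\tilde\bbeta)\tilde\bbeta=\boldsymbol\Delta_n(\hat\bbeta)+\mathbf{A}_n(\tilde\bbeta)\hat\bbeta$ splits the dual objective so that \eqref{eq:slack2} and \eqref{eq:slack1} cancel exactly, giving $D(\hat\bgam)=\|\bld_n\hat\bbeta\|_1$, and weak duality (which your positive/negative-part remark correctly transfers to the reparameterized dual \eqref{eq:dual}) then certifies both points as optimal. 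Your route is shorter and makes transparent why \eqref{eq:slack1}--\eqref{eq:slack2} are exactly the right hypotheses: they are what closes the gap. What the paper's route buys is the per-coordinate complementary slackness conditions \eqref{KKT7}--\eqref{KKT10}, which are part of the lemma's conclusion and do \emph{not} follow from the two aggregate identities by mere matching of terms: the paper derives them by noting that, e.g., \eqref{eq:slack2} expresses a sum of products of nonnegative multipliers with nonpositive constraint values as zero, forcing every summand to vanish individually. This sign argument is precisely the ``delicate bookkeeping'' you flag but do not execute; to finish your version you should either reproduce it or invoke the standard linear-programming fact that a feasible primal--dual pair with equal objective values satisfies complementary slackness. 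With that one step made explicit, your argument is complete, so there is no genuine gap.
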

		
		\begin{proof}

			We start by writing the Karush-Kuhn-Tucker (KKT) conditions for the problem~\eqref{ADS2}:
			\begin{align}
				\bld_{n} \bbeta &\leq {\boldsymbol u} \label{KKT1}  \\
				-\bld_{n} \bbeta&\leq {\boldsymbol u} \label{KKT2}  \\
				\mu_n^{-1}\bld_{n}^{-1} {\boldsymbol \Delta_n(\bbeta)} - \boldsymbol 1_{p_n}& \leq {\mathbf{0}} \label{KKT3}  \\
				- \mu_n^{-1} \bld_{n}^{-1} {\boldsymbol \Delta_n(\bbeta)} - \boldsymbol 1_{p_n} &\leq {\mathbf{0}} \label{KKT4}  \\
				\baf_1 \geq 0, \baf_2 &\geq 0,  \label{KKT5} \\
				\baf_3 \geq 0, \baf_4 &\geq 0,  \label{KKT6} \\
				\forall i \; \alpha_{1i}\{(\bld_{n} \bbeta)_i -u_i\}&=0 \label{KKT7}\\
				\forall i \; \alpha_{2i}\{-(\bld_{n} \bbeta)_i -u_i\}&=0 \label{KKT8}\\
				\forall i \; \alpha_{3i}[\mu_n^{-1}\{\bld_{n}^{-1} {\boldsymbol \Delta_n(\bbeta)}\}_i - 1] &=0 \label{KKT9}\\
				\forall i \; \alpha_{4i}[-\mu_n^{-1}\{{(}\bld_{n}^{-1} {\boldsymbol \Delta_n(\bbeta)}\}_i - 1] &=0 \label{KKT10}\\
				1-\baf_1 - \baf_2 &=0 \label{KKT11}\\
				(\baf_1 - \baf_2)^T \bld_{n} - \mu_n^{-1} (\baf_3-\baf_4)^T \bld_{n}^{-1}  \mathbf{A}_n( \tilde {\bbeta}) &=0  \label{KKT12}
			\end{align}

			Let ${\hat \bbeta}$ and $\hat\bgam$ satisfy~\eqref{eq:fea1}-\eqref{eq:slack2}. \eqref{KKT3} and \eqref{KKT4} are obviously satisfied under~\eqref{eq:fea1}. If one defines $\hat{\boldsymbol \alpha}_3$ and $\hat{\boldsymbol \alpha}_4$ such that $\hat{\alpha}_{3i}=(0,\hat{\gamma}_i)_+$ and $\hat{\alpha}_{4i}=(0,-\hat{\gamma}_i)_+$, then $\hat{\bgam}=\hat{\boldsymbol \alpha}_3-\hat{\boldsymbol \alpha}_4$ and~\eqref{KKT6} is satisfied.
			
			Now, we define
			\begin{align*}
				\hat{\baf}_1 &= \frac{1}{2} \{\mathbf 1_{p_n} + \mu_n^{-1}  \bld_{n}^{-1} \mathbf{A}_n( \tilde {\bbeta}) \bld_{n}^{-1}\} (\hat \baf_3 - \hat \baf_4) \\
				\hat{\baf}_2 &= \frac{1}{2} \{\mathbf 1_{p_n} - \mu_n^{-1}  \bld_{n}^{-1} \mathbf{A}_n( \tilde {\bbeta}) \bld_{n}^{-1}\} (\hat \baf_3 - \hat \baf_4)
			\end{align*}
			which ensures~\eqref{KKT11} and~\eqref{KKT12}. In addition, under~\eqref{eq:fea2} implies that~\eqref{KKT5} is also true.
			
			From the definition of $\boldsymbol \alpha_3$ and $\boldsymbol \alpha_4$, we rewrite~\eqref{eq:slack2} as
			\[
			\sum_i 
			\left(
			\hat{\alpha}_{3i}[\mu_n^{-1}\{\bld_{n}^{-1} {\boldsymbol \Delta_n(\hat{\bbeta})}\}_i - 1] 
			+
			\hat{\alpha}_{4i}[-\mu_n^{-1}\{\bld_{n}^{-1} {\boldsymbol \Delta_n(\hat{\bbeta})}\}_i - 1] 
			\right)
			=0.
			\]
			From~\eqref{KKT4}-\eqref{KKT5} each term in the above sum is the sum of two negative terms, whereby we dedude that~\eqref{KKT9}-\eqref{KKT10} necessarily hold.
			With a similar argument, by using~\eqref{eq:slack1}, we also deduce that~\eqref{KKT7}-\eqref{KKT8} are also true by setting in particular $u_i = |(\bld_n \hat \bbeta)_i|$. And that latter choice implies that~\eqref{KKT1}-\eqref{KKT2} are also satisfied.
		\end{proof}

		
		\subsection{A few auxiliary statements}
		
		Before tackling more specifically the proof of Theorem~\ref{thm:main} for the ALDS estimator, we present a few auxiliary results that will be used. 
		
		\begin{lemma} \label{lem:aux} Assume conditions \cond{C:intensity}-\cond{C:initial} hold.\\
			(i) 
			\begin{equation}\label{lambda12}
				\|\bld_{n,11} \|= a_n 
				, \qquad
				\|\bld_{n,22} ^{-1}\|  = \frac1{b_n}.
			\end{equation}
			(ii) For any $t\in [0,1]$ and $\check \bbeta = \bbeta_0 + t(\tilde \bbeta -\bbeta_0)$, we have
			\begin{align*} 
				\mathbf{A}_n( {\check \bbeta})&=O_\P\left(p_n \mu_n\right) \\ 
				\mathbf{A}_{n,1}(\check {\bbeta}) & =O_\P\left(\sqrt{p_n s_n} \mu_n\right) \\
				\mathbf{A}_{n,2}(\check {\bbeta})  &=O_\P\left(p_n  \mu_n\right)  \\
				\mathbf{A}_{n,11}(\check {\bbeta}) & =O_\P\left( s_n \mu_n\right)\\
				\mathbf{A}_{n,21}(\check {\bbeta})  &=O_\P\left(\sqrt{p_ns_n} \mu_n \right) \\
				\mathbf A_n (\check \bbeta) - \mathbf A_n (\tilde \bbeta)  &= 
				O_\P\left(p_n^2 \sqrt{\mu_n}\right) \\
				\mathbf A_{n,1} (\check \bbeta) - \mathbf A_{n,1} (\tilde \bbeta)  &= 
				O_\P\left(  \sqrt{s_np_n^3\mu_n} \right) \\
				\mathbf A_{n,11} (\check \bbeta) - \mathbf A_{n,11} (\tilde \bbeta)  &= 
				O_\P\left( {\sqrt{s_n^2p_n^2 \mu_n}}\right).
			\end{align*}
			(iii) 
			\begin{equation}
				\label{eq:Unbetatilde}
				\max \left\{ \|\mathbf U_{n}(\tilde \bbeta)\|, \|\mathbf U_{n,2}(\tilde \bbeta)\| \right\} = 
				O_\P ( \sqrt{p_n^3 \mu_n}).
			\end{equation}
		\end{lemma}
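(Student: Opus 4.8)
The plan is to dispatch the three parts separately. Part (i) is immediate from the definitions: since $\bld_n=\mathrm{diag}(\lambda_{n,1},\dots,\lambda_{n,p_n})$ has nonnegative entries, the block $\bld_{n,11}=\mathrm{diag}(\lambda_{n,1},\dots,\lambda_{n,s_n})$ is diagonal and its spectral norm is its largest entry $a_n=\max_{j\le s_n}\lambda_{n,j}$, while $\bld_{n,22}^{-1}=\mathrm{diag}(\lambda_{n,s_n+1}^{-1},\dots,\lambda_{n,p_n}^{-1})$ has spectral norm $1/\min_{j>s_n}\lambda_{n,j}=1/b_n$. Parts (ii) and (iii) both rest on two facts used repeatedly: Condition~\cond{C:cov} gives the uniform bounds $\|\bz(u)\|\le\kappa\sqrt{p_n}$, $\|\bz_1(u)\|\le\kappa\sqrt{s_n}$ and $\|\bz_2(u)\|\le\kappa\sqrt{p_n}$; and, since $\check\bbeta$ lies on the segment joining $\bbeta_0$ and $\tilde\bbeta$, Condition~\cond{C:initial} yields $\|\check\bbeta-\bbeta_0\|\le\|\tilde\bbeta-\bbeta_0\|=O_\P(\sqrt{p_n/\mu_n})$, which is $o_\P(1/\sqrt{p_n})$ by Condition~\cond{C:snpn}, so that Lemma~\ref{lem:rhoubeta}(ii)--(iii) gives $\int_{D_n}\rho(u;\check\bbeta)\dd u=O_\P(\mu_n)$.

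For the six block bounds of part (ii), I would express each as a bilinear form: for unit vectors $\boldsymbol\psi,\boldsymbol\phi$ of the appropriate dimensions, $\boldsymbol\psi^\top\mathbf{A}_{n,IJ}(\check\bbeta)\boldsymbol\phi=\int_{D_n}(\boldsymbol\psi^\top\bz_I(u))(\boldsymbol\phi^\top\bz_J(u))\rho(u;\check\bbeta)\dd u$, and bound $|\boldsymbol\psi^\top\bz_I(u)|$ by $\kappa\sqrt{s_n}$ when $I$ indexes the informative block and by $\kappa\sqrt{p_n}$ otherwise. Taking the supremum over $\boldsymbol\psi,\boldsymbol\phi$ recovers the spectral norm, and combining the dimension factors with $\int\rho(u;\check\bbeta)\dd u=O_\P(\mu_n)$ produces exactly the advertised rates $s_n\mu_n$, $\sqrt{p_ns_n}\,\mu_n$ and $p_n\mu_n$ according to the blocks involved.

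For the three differences, I would expand to first order, $\rho(u;\check\bbeta)-\rho(u;\tilde\bbeta)=(\check\bbeta-\tilde\bbeta)^\top\bz(u)\,\rho(u;\bar\bbeta)$ for some $\bar\bbeta$ (depending on $u$) between $\check\bbeta$ and $\tilde\bbeta$. Because $\check\bbeta-\tilde\bbeta=(t-1)(\tilde\bbeta-\bbeta_0)$, we have $|(\check\bbeta-\tilde\bbeta)^\top\bz(u)|\le\kappa\,p_n/\sqrt{\mu_n}$; inserting this into the bilinear form and bounding the remaining two inner products as above yields a factor $p_n/\sqrt{\mu_n}$ beyond the corresponding block bound, hence $p_n^2\sqrt{\mu_n}$, $\sqrt{s_np_n^3\mu_n}$ and $\sqrt{s_n^2p_n^2\mu_n}$.

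Part (iii) follows the same template. Writing $\mathbf{U}_n(\tilde\bbeta)=\mathbf{U}_n(\bbeta_0)-\int_{D_n}\bz(u)\{\rho(u;\tilde\bbeta)-\rho(u;\bbeta_0)\}\dd u$ and expanding the integrand to first order, the leading term is $O_\P(\sqrt{p_n\mu_n})$ by Lemma~\ref{bound}, while the remainder, tested against a unit $\boldsymbol\phi\in\mathbb{R}^{p_n}$, is $\int_{D_n}(\boldsymbol\phi^\top\bz(u))((\tilde\bbeta-\bbeta_0)^\top\bz(u))\rho(u;\bar\bbeta)\dd u$, of order $\kappa\sqrt{p_n}\cdot O_\P(p_n/\sqrt{\mu_n})\cdot O_\P(\mu_n)=O_\P(\sqrt{p_n^3\mu_n})$; the remainder dominates, giving the claim, and the identical argument with $\bz_2(u)$ in place of $\bz(u)$ handles $\mathbf{U}_{n,2}(\tilde\bbeta)$. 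The calculations are routine once this template is fixed; the step I expect to require the most care is that the Taylor intermediate point $\bar\bbeta$ depends on $u$, so Lemma~\ref{lem:rhoubeta} cannot be applied at a single fixed $\bbeta$. I would resolve this via the pointwise bound $\rho(u;\bar\bbeta)\le\exp\{\kappa\sqrt{p_n}\|\tilde\bbeta-\bbeta_0\|\}\rho(u;\bbeta_0)$, where $\sqrt{p_n}\|\tilde\bbeta-\bbeta_0\|=O_\P(p_n/\sqrt{\mu_n})=o_\P(1)$ under Condition~\cond{C:snpn}, so the $u$-dependent exponential factor is $O_\P(1)$ and the integral reduces to $\int\rho(u;\bbeta_0)\dd u=\mu_n$; equivalently, one invokes the uniform-over-the-segment bound of Lemma~\ref{lem:rhoubeta}(i).
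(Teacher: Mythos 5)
Your proposal is correct and follows essentially the same route as the paper's proof: part (i) from the diagonal structure of $\bld_n$; part (ii) by bounding the $\|\bz(u)\|$-type factors via \cond{C:cov}, using $\int_{D_n}\rho(u;\cdot)\,\dd u = O_\P(\mu_n)$ from Lemma~\ref{lem:rhoubeta}, and a first-order expansion of $\rho$ for the difference terms; and part (iii) by a Taylor expansion of $\mathbf U_n$ around $\bbeta_0$ combined with Lemma~\ref{bound} and \cond{C:initial}. The only differences are cosmetic: you write out the block bounds that the paper dismisses as following ``along similar lines,'' and you are slightly more careful than the paper about the $u$-dependence of the Taylor intermediate point, resolving it exactly as intended via the uniform bound of Lemma~\ref{lem:rhoubeta}(i).
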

		
		\begin{proof}
			(i) follows from conditions on $a_n$ and $b_n$. \\
			(ii) follows from conditions \cond{C:intensity}-\cond{C:cov}, ($\mathcal{C}$.\ref{C:initial}) and Lemma~\ref{lem:rhoubeta}. We only prove the assertions for the matrices $\mathbf A_n(\tilde \bbeta)$ and $\mathbf A_n(\check \bbeta)-\mathbf A_n(\tilde \bbeta)$. The other cases follow along similar lines. First,
			\begin{align*}
				\| \mathbf A_n(\tilde \bbeta) \| \leq \int_{D_n} \|\bz(u)\|^2 \rho(u;\tilde \bbeta) \dd u = O_\P(p_n \mu_n).
			\end{align*}
			Second, using Taylor expansion, there exists $\bbeta^\prime$ on the segment between $\tilde \bbeta$ and $\check \bbeta$ such that $\rho(u;\check\bbeta)-\rho(u;\tilde\bbeta)= (\check \bbeta - \tilde \bbeta )^\top \bz(u)\rho(u;\bbeta^\prime)$ whereby we deduce that  
			\begin{align*}
				\|\mathbf A_n(\check \bbeta)-\mathbf A_n(\tilde \bbeta)\| &\leq \int_{D_n} \|\bz(u)\|^3 \|\check \bbeta-\tilde \bbeta\| \rho(u;\bbeta^\prime) \dd u\\
				&=O_\P\left(\mu_n p_n^{3/2} \|\tilde \bbeta-\bbeta_0\|\right) = O_\P \left(p_n^2 \sqrt{\mu_n} \right).
			\end{align*}
			(iii) We only have to prove it for $\|\mathbf U_n(\tilde \bbeta)\|$. Using Taylor expansion, there exists $\check \bbeta$ such that $\mathbf U_n(\tilde \bbeta) = \mathbf U_n(\bbeta_0) - \mathbf A_n(\check \bbeta) (\tilde \bbeta-\bbeta_0)$. Using Lemma~\ref{bound}, (ii) and condition~\cond{C:initial}, we obtain
			\begin{align*}
				\|\mathbf U_n(\tilde \bbeta) \| = O_\P \left( \sqrt{p_n \mu_n} + p_n \mu_n \sqrt{p_n/\mu_n} \right) = O_\P\left( \sqrt{p_n^3\mu_n}\right).	
			\end{align*}
		\end{proof}

		\subsection{Sparsity property for $\hat \bbeta=\hat \bbeta_{\mathrm{ALDS}}$}

		The sparsity property of $\hat \bbeta_{\mathrm{ALDS}}$ follows from the following Lemma.
		
		\begin{lemma} \label{lemma:sparsity}
			Let $\hat \bbeta_{\mathrm{ALDS}}$ and $\hat \bgam$ satisfy the following conditions
			\begin{align}
				\hat \bbeta_{\mathrm{ALDS,1}} & =\mathbf{A}_{n,11}(\tilde {\bbeta})^{-1} \Big \{\mathbf{U}_{n,1}( \tilde{\bbeta}) + \mathbf{A}_{n,1}(\tilde {\bbeta})\tilde {\bbeta} - \mu_n\bld_{n,11}  \sign(\hat \bgam_1) \Big \} \label{eq:betahat1} \\
				\hat \bbeta_{\mathrm{ALDS,2}} & = \mathbf{0} \label{eq:betahat2} \\
				\hat \bgam_1 & = \mu_n \bld_{n,11}\mathbf{A}_{n,11}(\tilde {\bbeta})^{-1} \bld_{n,11} \sign(\hat \bbeta_{\mathrm{ALDS,1}}). \label{eq:gammahat1} \\
				\hat \bgam_2 & = \mathbf{0}. \label{eq:gammahat2}
			\end{align}
			Then, under the conditions~\cond{C:intensity}-\cond{C:anbn}, the following two statements hold.\\
			(i)
			\begin{align*}
				\hat \bbeta_{\mathrm{ALDS,1}} -\bbeta_{01} &= \mathbf A_{n,11}(\bbeta_0)^{-1} \mathbf U_{n,1}(\bbeta_0) + o_\P\left( \frac{1}{s_n\sqrt{\mu_n}} \right)  = 
				O_\P\left( \sqrt{\frac{s_n}{\mu_n}}\right).
			\end{align*}
			(ii) With probability tending to 1, $\hat \bbeta_{\mathrm{ALDS}}$ and $\hat \bgam$ given by \eqref{eq:betahat1}-\eqref{eq:gammahat2} satisfy conditions~\eqref{eq:fea1}-\eqref{eq:slack2} and are thus the primal and dual optimal solutions (whence the notation $\hat \bbeta_{\mathrm{ALDS}}$).
		\end{lemma}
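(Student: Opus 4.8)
The plan is to establish (i) first, since the rate it yields is needed when checking feasibility in (ii); throughout write $\hat\bbeta=\hat\bbeta_{\mathrm{ALDS}}$ and $\hat\bbeta_1=\hat\bbeta_{\mathrm{ALDS},1}$. For (i), I would multiply \eqref{eq:betahat1} by $\mathbf A_{n,11}(\tilde\bbeta)$, subtract $\mathbf A_{n,11}(\tilde\bbeta)\bbeta_{01}$, use $\mathbf A_{n,1}(\tilde\bbeta)\tilde\bbeta=\mathbf A_{n,11}(\tilde\bbeta)\tilde\bbeta_1+\mathbf A_{n,12}(\tilde\bbeta)\tilde\bbeta_2$ together with the exact first-order expansion $\mathbf U_{n,1}(\tilde\bbeta)=\mathbf U_{n,1}(\bbeta_0)-\mathbf A_{n,1}(\check\bbeta)(\tilde\bbeta-\bbeta_0)$ (for some $\check\bbeta$ on $[\bbeta_0,\tilde\bbeta]$, as in Lemma~\ref{lem:aux}(iii)), and recall $\bbeta_{02}=\mathbf 0$. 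This gives
\[
\hat\bbeta_1-\bbeta_{01}=\mathbf A_{n,11}(\tilde\bbeta)^{-1}\Big[\mathbf U_{n,1}(\bbeta_0)+\{\mathbf A_{n,11}(\tilde\bbeta)-\mathbf A_{n,11}(\check\bbeta)\}(\tilde\bbeta_1-\bbeta_{01})+\{\mathbf A_{n,12}(\tilde\bbeta)-\mathbf A_{n,12}(\check\bbeta)\}\tilde\bbeta_2-\mu_n\bld_{n,11}\sign(\hat\bgam_1)\Big].
\]
I would isolate the leading term $\mathbf A_{n,11}(\tilde\bbeta)^{-1}\mathbf U_{n,1}(\bbeta_0)$, replace $\mathbf A_{n,11}(\tilde\bbeta)^{-1}$ by $\mathbf A_{n,11}(\bbeta_0)^{-1}$ via $\mathbf A_{n,11}(\tilde\bbeta)^{-1}-\mathbf A_{n,11}(\bbeta_0)^{-1}=\mathbf A_{n,11}(\tilde\bbeta)^{-1}\{\mathbf A_{n,11}(\bbeta_0)-\mathbf A_{n,11}(\tilde\bbeta)\}\mathbf A_{n,11}(\bbeta_0)^{-1}$ and the increment bound $\mathbf A_{n,11}(\bbeta_0)-\mathbf A_{n,11}(\tilde\bbeta)=O_{\mathrm P}(s_np_n\sqrt{\mu_n})$ of Lemma~\ref{lem:aux}(ii), then bound the remaining three terms using $\|\mathbf A_{n,11}(\tilde\bbeta)^{-1}\|=O_{\mathrm P}(\mu_n^{-1})$ and $\|\mathbf A_{n,11}(\bbeta_0)^{-1}\|=O(\mu_n^{-1})$ (\cond{C:initial} and \cond{C:cov} respectively), $\|\mathbf U_{n,1}(\bbeta_0)\|=O_{\mathrm P}(\sqrt{s_n\mu_n})$ (Lemma~\ref{bound}), $\|\bld_{n,11}\|=a_n$, the increment bounds of Lemma~\ref{lem:aux}(ii), and $\|\tilde\bbeta-\bbeta_0\|=O_{\mathrm P}(\sqrt{p_n/\mu_n})$. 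A short computation shows each remainder is $o_{\mathrm P}(1/(s_n\sqrt{\mu_n}))$ exactly under \cond{C:snpn}--\cond{C:anbn}, yielding both the stated expansion and the $O_{\mathrm P}(\sqrt{s_n/\mu_n})$ rate.

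For (ii), the key observation is that the slackness conditions \eqref{eq:slack1}--\eqref{eq:slack2} hold as algebraic identities, requiring no relation between the sign vectors. Since $\hat\bbeta_2=\mathbf 0$, substituting \eqref{eq:betahat1} collapses the first block of the constraint vector to
\[
\boldsymbol{\Delta}_{n,1}(\hat\bbeta)=\mathbf U_{n,1}(\tilde\bbeta)+\mathbf A_{n,1}(\tilde\bbeta)\tilde\bbeta-\mathbf A_{n,11}(\tilde\bbeta)\hat\bbeta_1=\mu_n\bld_{n,11}\sign(\hat\bgam_1),
\]
so that, using $\hat\bgam_2=\mathbf 0$, $\mu_n^{-1}\hat\bgam^\top\bld_n^{-1}\boldsymbol{\Delta}_n(\hat\bbeta)=\hat\bgam_1^\top\sign(\hat\bgam_1)=\|\hat\bgam\|_1$, which is \eqref{eq:slack2}. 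Likewise, substituting \eqref{eq:gammahat1} and using the telescoping $\bld_{n,11}\mathbf A_{n,11}(\tilde\bbeta)^{-1}\bld_{n,11}\cdot\bld_{n,11}^{-1}\mathbf A_{n,11}(\tilde\bbeta)=\bld_{n,11}$ reduces $\mu_n^{-1}\hat\bgam^\top\bld_n^{-1}\mathbf A_n(\tilde\bbeta)\hat\bbeta$ to $\sign(\hat\bbeta_1)^\top\bld_{n,11}\hat\bbeta_1=\|\bld_n\hat\bbeta\|_1$, which is \eqref{eq:slack1}. Because both $\hat\bbeta$ and $\hat\bgam$ vanish on the second block, the off-diagonal blocks never enter these two identities.

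It remains to check feasibility. The first blocks of \eqref{eq:fea1} and \eqref{eq:fea2} hold deterministically: $\mu_n^{-1}\bld_{n,11}^{-1}\boldsymbol{\Delta}_{n,1}(\hat\bbeta)=\sign(\hat\bgam_1)$ and, by the same telescoping, $\mu_n^{-1}\hat\bgam_1^\top\bld_{n,11}^{-1}\mathbf A_{n,11}(\tilde\bbeta)\bld_{n,11}^{-1}=\sign(\hat\bbeta_1)^\top$, both of sup-norm at most $1$. For the second block of \eqref{eq:fea1} I would bound $\|\bld_{n,22}^{-1}\boldsymbol{\Delta}_{n,2}(\hat\bbeta)\|_\infty\le b_n^{-1}\|\boldsymbol{\Delta}_{n,2}(\hat\bbeta)\|$ with $\|\boldsymbol{\Delta}_{n,2}(\hat\bbeta)\|\le\|\mathbf U_{n,2}(\tilde\bbeta)\|+\|\mathbf A_{n,2}(\tilde\bbeta)\|\,\|\tilde\bbeta-\hat\bbeta\|=O_{\mathrm P}(p_n^{3/2}\sqrt{\mu_n})$, where $\|\tilde\bbeta-\hat\bbeta\|=O_{\mathrm P}(\sqrt{p_n/\mu_n})$ combines \cond{C:initial} with part (i) and the matrix/score bounds come from Lemma~\ref{lem:aux}(ii)--(iii); the requirement $\mu_n^{-1}b_n^{-1}p_n^{3/2}\sqrt{\mu_n}\to0$ is exactly $b_n\sqrt{\mu_n/p_n^3}\to\infty$. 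For the second block of \eqref{eq:fea2}, substituting \eqref{eq:gammahat1} reduces it to $\|\sign(\hat\bbeta_1)^\top\bld_{n,11}\mathbf A_{n,11}(\tilde\bbeta)^{-1}\mathbf A_{n,12}(\tilde\bbeta)\bld_{n,22}^{-1}\|_\infty=O_{\mathrm P}(a_ns_n\sqrt{p_n}/b_n)$, which tends to $0$ under \cond{C:anbn}. All four conditions of Lemma~\ref{lemma:opt} then hold (two exactly, two with probability tending to one), so $\hat\bbeta_{\mathrm{ALDS}}$ and $\hat\bgam$ are primal and dual optimal, and \eqref{eq:betahat2} delivers the sparsity $\hat\bbeta_{\mathrm{ALDS},2}=\mathbf 0$.

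The main obstacle I anticipate is the rate bookkeeping: each remainder in (i) and each second-block bound in (ii) must be matched against the sharp exponents of \cond{C:snpn}--\cond{C:anbn}, repeatedly invoking $s_n\le p_n$ to absorb incidental factors (e.g. $s_n^4p_n^3$ or $s_n^5p_n^2$) into $s_n^3p_n^4$. A secondary, conceptual point is that \eqref{eq:betahat1}--\eqref{eq:gammahat2} are coupled through the sign vectors; as the derivations above show, the KKT verification is insensitive to this coupling, so the two conclusions follow for any pair solving these relations, while the existence of such a consistent pair can be read off from the sign recovery implied by part (i).
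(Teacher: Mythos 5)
Your proposal is correct and takes essentially the same route as the paper: part (i) uses the identical Taylor expansion of $\mathbf U_{n,1}(\tilde\bbeta)$ around $\bbeta_0$, the same resolvent identity to swap $\mathbf A_{n,11}(\tilde\bbeta)^{-1}$ for $\mathbf A_{n,11}(\bbeta_0)^{-1}$, and the same term-by-term rate bookkeeping against \cond{C:snpn}--\cond{C:anbn}, while part (ii) verifies exactly the four conditions of Lemma~\ref{lemma:opt}, with the slackness relations \eqref{eq:slack1}--\eqref{eq:slack2} holding algebraically and the feasibility relations \eqref{eq:fea1}--\eqref{eq:fea2} holding with probability tending to one via the same bounds $O_\P\bigl(b_n^{-1}\sqrt{p_n^3/\mu_n}\bigr)$ and $O_\P\bigl(a_n s_n\sqrt{p_n}/b_n\bigr)$. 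Your only departures are cosmetic streamlinings (noting $\boldsymbol\Delta_{n,1}(\hat\bbeta_{\mathrm{ALDS}})=\mu_n\bld_{n,11}\sign(\hat\bgam_1)$ exactly, and bounding $\boldsymbol\Delta_{n,2}$ via $\mathbf U_{n,2}(\tilde\bbeta)$ directly rather than re-expanding it), which yield the same rates.
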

		
		It is worth mentioning that the rate $o_\P(1/s_n\sqrt{\mu_n})$ in Lemma~\ref{lemma:sparsity} (i) is required to derive the central limit theorem proved in Appendix~\ref{sec:proofCLT_ALDS}. That required rate of convergence imposes some stronger restriction on the sequence~$a_n$.
		
		\begin{proof}
			(i) Using Taylor expansion, there exists $\check \bbeta$ on the line segment between $\tilde \bbeta$ and $\bbeta_0$ such that $\mathbf U_{n,1} (\tilde \bbeta) =\mathbf U_{n,1} (\bbeta_0) - \mathbf A_{n,1}(\check \bbeta)(\tilde \bbeta - \bbeta_0)$ which leads, by noticing that $\bbeta_{02}=0$, to
			\begin{align*}
				\hat \bbeta_{\mathrm{ALDS,1}}-\bbeta_{01} =&\mathbf{A}_{n,11}(\tilde {\bbeta})^{-1} 
				\bigg[ \left\{
				\mathbf{A}_{n,1}(\tilde {\bbeta}) -\mathbf{A}_{n,1}(\check {\bbeta})\right\} \left\{  \tilde {\bbeta}-\bbeta_0\right\} \\
				&+ \mathbf U_{n,1}(\bbeta_0) - \mu_n\bld_{n,11}  \sign(\hat \bgam_1) \bigg].
			\end{align*}
			Condition~\cond{C:cov} ensures that $\|\mathbf A_{n,11}(\tilde \bbeta)^{-1}\| =O_\P(\mu_n^{-1})$. Let $\hat \bbeta_{\mathrm{ALDS,1}}-\bbeta_{01} = \mathbf A_{n,11}(\bbeta_0)^{-1} \mathbf U_{n,1}(\bbeta_0) +  T_1+T_2+T_3$ where
			\begin{align*}
				T_1&=\left\{\mathbf A_{n,11}(\tilde\bbeta)^{-1} -\mathbf A_{n,11}(\bbeta_0)^{-1} \right\} \mathbf U_{n,1}(\bbeta_0) \\
				T_2&= \mathbf{A}_{n,11}(\tilde {\bbeta})^{-1} 
				\left\{
				\mathbf{A}_{n,1}(\tilde {\bbeta}) -\mathbf{A}_{n,1}(\check {\bbeta})\right\} \left\{  \tilde {\bbeta}-\bbeta_0\right\} \\
				T_3 &= \mu_n \mathbf{A}_{n,11}(\tilde {\bbeta})^{-1} \bld_{n,11}  \sign(\hat \bgam_1).
			\end{align*}
			Regarding the term $T_1$ we have
			\[
			T_1 = \mathbf A_{n,11}(\tilde\bbeta)^{-1} \left\{  \mathbf A_{n,11}(\bbeta_0)- \mathbf A_{n,11}(\tilde\bbeta)\right\} \mathbf A_{n,11}(\bbeta_0)^{-1} 
			\mathbf U_{n,1}(\bbeta_0).
			\]
			Condition~\cond{C:cov} ensures that $\max(\|\mathbf A_{n,11}(\tilde \bbeta)^{-1}\|,\|\mathbf A_{n,11}(\bbeta_0)^{-1}\|)= O_\P(\mu_n^{-1})$. Using this, Lemma~\ref{lem:aux} and Lemma~\ref{bound} we obtain
			\[
			T_1 = O_\P \left( \frac{1}{\mu_n} \sqrt{s_n^2 p_n^2 \mu_n} \frac{1}{\mu_n} \sqrt{s_n \mu_n}\right) = O_\P \left( \frac{\sqrt{s_n^3p_n^2}}{\mu_n}\right).
			\]
			With similar arguments, we have
			\[
			T_2 = O_\P \left( \frac{1}{\mu_n} \sqrt{s_np_n^3\mu_n}\sqrt{p_n/\mu_n}\right) = 
			O_\P \left( \frac{\sqrt{s_np_n^4}}{\mu_n}\right).
			\]
			Condition~\cond{C:snpn} ensures that
			\[
			T_1+T_2= O_\P\left( \frac{\sqrt{s_np_n^4}}{\mu_n}\right) = o_\P\left( \frac1{s_n \sqrt{\mu_n}}
			\right).
			\]
			Now, regarding the last term
			\[
			T_3 = O_\P\left( \mu_n \frac1{\mu_n} a_n \sqrt{s_n} \right) = O_\P (a_n \sqrt{s_n}). 
			\]
			And we observe that condition~\cond{C:anbn} is sufficient to establish that $T_3=o_\P(1/s_n\sqrt{\mu_n})$ which proves (i) using again condition~\cond{C:cov} and Lemma~\ref{bound}.
			
			(ii) We have to show that with probability tending to 1, $\hat \bbeta_{\mathrm{ALDS}}$ and $\hat \bgam$ given by \eqref{eq:betahat1}-\eqref{eq:gammahat2} satisfy conditions~\eqref{eq:fea1}-\eqref{eq:slack2}. 
			By \eqref{eq:betahat1}-\eqref{eq:gammahat2},
			\begin{align*}
				\mu_n^{-1}\hat \bgam^\top\bld_{n}^{-1} \mathbf{A}_n(\tilde {\bbeta}) \hat \bbeta_{\mathrm{ALDS}} &= \mu_n^{-1} \hat \bgam_1^\top\bld_{n,11}^{-1} \mathbf{A}_{n,11}(\tilde {\bbeta}) \hat \bbeta_{\mathrm{ALDS},1} \\
				& = \sign(\hat \bbeta_{\mathrm{ALDS,1}})^\top\bld_{n,11} \big\{\mathbf{A}_{n,11}(\tilde {\bbeta})\big\}^{-1} \bld_{n,11}\bld_{n,11}^{-1} \mathbf{A}_{n,11}(\tilde {\bbeta}) \hat \bbeta_{\mathrm{ALDS,1}} \\
				&= \sign(\hat \bbeta_{\mathrm{ALDS,1}})^\top\bld_{n,11}  \hat \bbeta_{\mathrm{ALDS,1}} \\
				& = \ \|\bld_{n,11} \hat \bbeta_{\mathrm{ALDS,1}}\|_1=  \|\bld_{n} \hat \bbeta_{\mathrm{ALDS}}\|_1,
			\end{align*}
			so, \eqref{eq:slack1} is satisfied. Now, we want to show that \eqref{eq:slack2} holds. We have
			\begin{align*}
				\mu_n^{-1} \hat \bgam^\top\bld_{n}^{-1} \big \{\mathbf{U}_{n}( \tilde {\bbeta}) +  \mathbf{A}_n( \tilde {\bbeta}) (\tilde {\bbeta}  -  \hat \bbeta_{\mathrm{ALDS}}) \big \} = \mathbf{I} + \mathbf{II},
			\end{align*}
			where
			\begin{align*}
				\mathbf{I} = \; & \mu_n^{-1} \hat \bgam^\top\bld_{n}^{-1} \mathbf{U}_{n}( \tilde {\bbeta}) = \mu_n^{-1} \hat \bgam_1^\top\bld_{n,11}^{-1} \mathbf{U}_{n,1}( \tilde {\bbeta}),\\
				\mathbf{II} = \;  & \mu_n^{-1} \hat \bgam_1^\top\bld_{n,11}^{-1}  \mathbf{A}_{n,1}( \tilde {\bbeta}) \tilde {\bbeta}  -  \mu_n^{-1} \hat \bgam_1^\top\bld_{n,11}^{-1}  \mathbf{A}_{n,11}( \tilde {\bbeta})  \hat \bbeta_{\mathrm{ALDS,1}} \\
				=  \; & \mu_n^{-1} \hat \bgam_1^\top\bld_{n,11}^{-1}  \mathbf{A}_{n,1}( \tilde {\bbeta}) \tilde {\bbeta} \\
				&-  \mu_n^{-1} \hat \bgam_1^\top\bld_{n,11}^{-1}  \{\mathbf{U}_{n,1}( \tilde {\bbeta}) + \mathbf{A}_{n,1}( \tilde {\bbeta}) \tilde {\bbeta} -\mu_n \bld_{n,11} \sign(\hat\bgam_1)\} \\
				=  \; & \hat \bgam_1^\top \sign(\hat\bgam_1)  -  \mu_n^{-1} \hat \bgam_1^\top\bld_{n,11}^{-1} \mathbf{U}_{n,1}( \tilde {\bbeta}), 
			\end{align*}
			from \eqref{eq:betahat1}-\eqref{eq:gammahat2}. By summing $\mathbf{I}$ and $\mathbf{II}$, we deduce that \eqref{eq:slack2} holds.
			
			To prove \eqref{eq:fea2} holds, we use \eqref{eq:gammahat2} and decompose the vector \linebreak$\mu_n^{-1} \bld_{n}^{-1} \mathbf{A}_n(\tilde {\bbeta})\bld_{n}^{-1} \hat \bgam $ as
			\begin{align*}
				\mu_n^{-1}  \bld_{n}^{-1} \mathbf{A}_n(\tilde {\bbeta})\bld_{n}^{-1} \hat \bgam = \mu_n^{-1}
				\begin{bmatrix}
					\mathbf{I}^\prime  \\
					\mathbf{II}^\prime
				\end{bmatrix}
				= \mu_n^{-1}
				\begin{bmatrix}
					\bld_{n,11}^{-1} \mathbf{A}_{n,11}(\tilde {\bbeta}) \bld_{n,11}^{-1} \hat{\boldsymbol \gamma_1 }  \\
					\bld_{n,22}^{-1} \mathbf{A}_{n,21}(\tilde {\bbeta}) \bld_{n,11}^{-1} \hat{\boldsymbol \gamma_1}
				\end{bmatrix}.
			\end{align*}  
			By~\eqref{eq:gammahat1}
			\begin{align*}
				\mu_n^{-1} \| \mathbf{I}^\prime \|_\infty & = \mu_n^{-1} \| \bld_{n,11}^{-1} \mathbf{A}_{n,11}(\tilde {\bbeta}){\bld_{n,11}^{-1}} \hat \bgam_1 \|_\infty \\
				& = \|  \sign(\hat \bbeta_{\mathrm{ALDS,1}}) \|_\infty =1.  
			\end{align*}
			Regarding $\mathbf{II}^\prime$, by \eqref{eq:gammahat1}, conditions on $a_n$ and $b_n$, conditions~($\mathcal{C}$.\ref{C:intensity})-($\mathcal{C}$.\ref{C:cov}), ($\mathcal{C}$.\ref{C:initial}) and Lemma~\ref{lem:aux}(i)-(ii), we have
			\begin{align*}
				\mu_n^{-1}\mathbf{II}^\prime & = \mu_n^{-1} \bld_{n,22}^{-1} \mathbf{A}_{n,21}(\tilde {\bbeta})\bld_{n,11}^{-1} \hat \bgam_1 \\
				& =\bld_{n,22}^{-1} \mathbf{A}_{n,21}(\tilde {\bbeta}){\bld_{n,11}^{-1}}\bld_{n,11}\mathbf{A}_{n,11}(\tilde {\bbeta})^{-1}\bld_{n,11} \sign(\hat \bbeta_{\mathrm{ALDS,1}}) \\
				& =\bld_{n,22}^{-1} \mathbf{A}_{n,21}(\tilde {\bbeta})\mathbf{A}_{n,11}(\tilde {\bbeta})^{-1}\bld_{n,11} \sign(\hat \bbeta_{\mathrm{ALDS,1}}) \\
				&=  
				O_\P \left( 
				\frac1{b_n} \sqrt{p_ns_n}\mu_n \frac1{\mu_n} a_n \sqrt{s_n}
				\right) = 
				O_\P \left(
				\frac{a_n \sqrt{s_n^2 p_n}}{b_n}
				\right) \\
				&=  O_\P \left( a_n \sqrt{s_n^3 \mu_n} \frac1{b_n}\sqrt{\frac{p_n^3}{\mu_n}} \, \frac{1}{s_np_n} \right). 
			\end{align*}
			Hence, $\mu_n^{-1}\|\mathbf{II}^\prime\|_\infty = o_\P(1)$ by condition~\cond{C:snpn} and
			\eqref{eq:fea2} is satisfied with probability tending to 1. We finally focus on \eqref{eq:fea1}. Note that
			\begin{align*}
				\mu_n^{-1} \bld_{n}^{-1}  \{\mathbf{U}_{n}( \tilde {\bbeta}) &+  \mathbf{A}_n( \tilde {\bbeta}) (\tilde {\bbeta}  -  \hat \bbeta_{\mathrm{ALDS}}) \}  = \mu_n^{-1}
				\begin{bmatrix}
					\tilde {\mathbf{I}}  \\
					\tilde {\mathbf{II}}
				\end{bmatrix} \\
				& = \mu_n^{-1}
				\begin{bmatrix}
					\bld_{n,11}^{-1}  \{\mathbf{U}_{n,1}( \tilde {\bbeta}) +  \mathbf{A}_{n,1}( \tilde {\bbeta}) (\tilde {\bbeta}  -  \hat \bbeta_{\mathrm{ALDS}}) \} \\
					\bld_{n,22}^{-1}  \{\mathbf{U}_{n,2}( \tilde {\bbeta}) +  \mathbf{A}_{n,2}( \tilde {\bbeta}) (\tilde {\bbeta}  -  \hat \bbeta_{\mathrm{ALDS}}) \} 
				\end{bmatrix}.
			\end{align*}
			Regarding $\tilde {\mathbf{I}}$, from~\eqref{eq:betahat1}-\eqref{eq:betahat2},
			\begin{align*}
				\mu_n^{-1} \|\tilde {\mathbf{I}}\|_\infty = & \mu_n^{-1} \|\bld_{n,11}^{-1}  \mathbf{U}_{n,1}( \tilde {\bbeta}) +\bld_{n,11}^{-1} \mathbf{A}_{n,1}( \tilde {\bbeta}) \tilde {\bbeta}  - \bld_{n,11}^{-1} \mathbf{A}_{n,11}( \tilde {\bbeta}) \hat \bbeta_{\mathrm{ALDS,1}} \|_\infty \\
				= & \mu_n^{-1} \|\bld_{n,11}^{-1}  \mathbf{U}_{n,1}( \tilde {\bbeta}) +\bld_{n,11}^{-1} \mathbf{A}_{n,1}( \tilde {\bbeta}) \tilde {\bbeta}  \\
				& - \bld_{n,11}^{-1} \{\mathbf{U}_{n,1}( \tilde {\bbeta})+ \mathbf{A}_{n,1}( \tilde {\bbeta}) \tilde {\bbeta} -\mu_n\bld_{n,11} \sign(\hat\bgam_1)\} \|_\infty \\
				= & \|\sign(\hat\bgam_1) \|_\infty = 1.  
			\end{align*}
			
			Now, consider $\tilde {\mathbf{II}}$. By the sparsity of $\hat \bbeta_{\mathrm{ALDS}}$ and $\bbeta_0$ we can write
			\[
			\mu_n^{-1} \tilde {\mathbf{II}} = \mu_n^{-1} \boldsymbol\Lambda_{n,22}^{-1} \left\{
			\mathbf U_{n,2}(\tilde \bbeta) + \mathbf A_{n,2}(\tilde \bbeta) (\tilde \bbeta-\bbeta_0) + 
			\mathbf A_{n,21}(\tilde \bbeta) (\bbeta_{01}-\hat \bbeta_{\mathrm{ALDS,1}})
			\right\}.
			\]
			We combine Lemma~\ref{lem:aux} (i)-(iii) and Lemma~\ref{sparsity} to derive
			\begin{align*}
				\mu_n^{-1}  \tilde {\mathbf{II}} &= O_\P 
				\left\{
				\frac{1}{\mu_n} \frac1{b_n} 
				\left(
				\sqrt{p_n\mu_n} + p_n \mu_n \sqrt{\frac{p_n}{\mu_n}} + \sqrt{p_n s_n} \mu_n \sqrt{\frac{s_n}{\mu_n}}
				\right)
				\right\}\\
				& =O_\P \left( \frac{1}{b_n} \sqrt{\frac{p_n^3}{\mu_n}}\right)  = o_\P(1)
			\end{align*}
			by condition~\cond{C:anbn}. Hence, $\mu_n^{-1}\| \tilde{\mathbf{II}} \|_\infty = o_\P(1)$ and
			\eqref{eq:fea1} is satisfied with probability tending to 1.
		\end{proof}

		\subsection{Asymptotic normality for $\hat \bbeta=\hat \bbeta_{\mathrm{ALDS}}$}
		\label{sec:proofCLT_ALDS}
		
		\begin{proof}
			By Lemma~\ref{lem:aux}, $\mathbf A_{n,11}(\bbeta_0)=O_\P(s_n\mu_n)$. This and Lemma~\ref{lemma:sparsity} show that
			\[
			\mathbf A_{n,11}(\bbeta_0) \left(\hat \bbeta_{\mathrm{ALDS,1}} -\bbeta_{01}\right)= \mathbf U_{n,1}(\bbeta_0) + o_\P(\sqrt{\mu_n}).  
			\]
			Let $\boldsymbol \phi \in \mathbb R^{s_n}\setminus\{0\}$ with $\|\boldsymbol \phi\|<\infty$ and let $\sigma^2_{\boldsymbol \phi} = \boldsymbol \phi^\top \mathbf B_{n,11}(\bbeta_0) \boldsymbol \phi$.  Now
			\[
			\sigma_{\boldsymbol \phi}^{-1}
			\boldsymbol \phi^\top
			\mathbf A_{n,11}(\bbeta_0) \left( \hat \bbeta_{\mathrm{ALDS,1}} -\bbeta_{01}\right) = 
			\sigma_{\boldsymbol \phi}^{-1}\boldsymbol \phi^\top
			\mathbf U_{n,1}(\bbeta_0) +
			\sigma_{\boldsymbol \phi}^{-1} \boldsymbol \phi^\top o_\P(\sqrt{\mu_n}).	
			\]
			By condition~\cond{C:Bn}, $\sigma_{\boldsymbol \phi}^{-1}=O(1/\sqrt{\mu_n})$.
			The result is therefore deduced  from condition~\cond{C:clt} and Slutsky's theorem.
		\end{proof}

		\section{Resulting $\bbeta$ estimates for the BCI dataset}
		
		\setlength{\tabcolsep}{1.35pt}
		\renewcommand{\arraystretch}{1.1}
		\begin{table}[H]
			\caption{Values of $\hat \bbeta_{\mathrm{AL}}$ and  $\hat \bbeta_{\mathrm{ALDS}}$ for the real data example}
			\label{tab:betaest}
			\centering
			\begin{tabular}{lrrrrrrrrrrrrrr}
				\hline
				& Int & elev & grad & Al & B & Ca & Cu & Fe & K & Mg & Mn & P & Zn & N \\ 
				\hline
				AL & -6.252 & 0 & 0 & 0 & -0.461 & 0.448 & 0 & 0.260 & 0 & 0 & -0.244 & 0 & 0 & 0 \\ 
				ALDS & -6.245 & 0 & 0 & 0 & -0.429 & 0.419 & 0 & 0.245 & 0 & 0 & -0.235 & 0 & 0 & 0 \\ 
				\hline
				& N.min & pH & AlB & AlCa & AlCu & AlFe & AlK & AlMg & AlMn & AlP & AlZn & AlN & AlN.min & AlpH \\ 
				\hline
				AL & 0.076 & 0.477 & -0.162 & 0 & 0 & 0.379 & 0 & -0.514 & 0 & -0.022 & 0 & 0.103 & 0 & -0.033 \\ 
				ALDS & 0.077 & 0.464 & -0.221 & 0 & 0 & 0.394 & 0 & -0.471 & 0 & -0.008 & 0 & 0.103 & 0 & 0 \\ 
				\hline
				& BCa & BCu & BFe & BK & BMg & BMn & BP & BZn & BN & BN.min & BpH & CaCu & CaFe & CaK \\ 
				\hline
				AL & 0 & 0 & 0.152 & 0 & 0 & 0 & -0.299 & 0 & -0.071 & 0 & 0 & 0 & 0.144 & 0 \\ 
				ALDS & 0 & -0.093 & 0.183 & 0 & 0 & 0 & -0.286 & -0.080 & -0.027 & 0.053 & 0 & 0 & 0.142 & 0 \\ 
				\hline
				& CaMg & CaMn & CaP & CaZn & CaN & CaN.min & CapH & CuFe & CuK & CuMg & CuMn & CuP & CuZn & CuN \\ 
				\hline
				AL & -0.155 & 0 & 0.104 & 0 & 0 & -0.888 & 0 & 0 & -0.091 & 0 & 0.134 & 0.148 & 0 & 0 \\ 
				ALDS & -0.125 & 0 & 0.095 & 0 & 0 & -0.890 & 0.042 & 0 & -0.013 & 0 & 0.130 & 0.148 & 0 & 0 \\ 
				\hline
				& CuN.min & CupH & FeK & FeMg & FeMn & FeP & FeZn & FeN & FeN.min & FepH & KMg & KMn & KP & KZn \\ 
				\hline
				AL & 0 & 0 & -0.311 & 0 & 0 & 0 & 0 & 0 & 0 & 0 & 0 & 0 & 0 & -0.051 \\ 
				ALDS & 0 & 0 & -0.331 & 0 & 0 & 0 & 0 & 0 & 0 & 0 & 0 & 0 & 0 & 0 \\ 
				\hline
				& KN & KN.min & KpH & MgMn & MgP & MgZn & MgN & MgN.min & MgpH & MnP & MnZn & MnN & MnN.min & MnpH \\ 
				\hline
				AL & 0.198 & 0.580 & 0.023 & 0 & 0 & -0.011 & 0 & 0 & 0 & 0 & 0 & -0.050 & 0.107 & 0 \\ 
				ALDS & 0.161 & 0.530 & 0 & -0.003 & 0 & 0 & 0 & 0 & 0 & 0 & 0 & -0.047 & 0.100 & 0 \\ 
				\hline
				& PZn & PN & PN.min & PpH & ZnN & ZnN.min & ZnpH & NN.min & NpH & N.minpH & elevgrad &&&\\ 
				\hline
				AL & 0 & 0.269 & 0 & 0 & 0 & 0 & 0 & 0 & 0 & 0.054 & 0 &&& \\ 
				ALDS & 0 & 0.258 & 0 & 0 & 0 & 0 & 0 & 0 & 0 & 0.054 & 0 &&& \\ 
				\hline
			\end{tabular}
		\end{table}
		
	\end{appendix}

	\section*{Acknowledgements}
	We thank the editor, associate editor, and two reviewers for the constructive comments. The research of J.-F. Coeurjolly is supported by the Natural Sciences and Engineering Research Council of Canada. J.-F. Coeurjolly would like to thank Université du Québec à Montréal for the excellent research conditions he received these last years. The research of A. Choiruddin is supported by the Direktorat Riset, Teknologi, dan Pengabdian Kepada Masyarakat, Direktorat Jenderal Pendidikan Tinggi, Riset, dan Teknologi, Kementerian Pendidikan, Kebudayaan, Riset, dan Teknologi Republik Indonesia. The BCI soils data sets were collected and analyzed by J. Dalling, R. John, K. Harms, R. Stallard and J. Yavitt with support from NSF DEB021104,021115, 0212284,0212818 and OISE 0314581, and STRI Soils Initiative and CTFS and assistance from P. Segre and J. Trani. 

	\bibliographystyle{imsart-number} 
	\bibliography{refthesis_2}

\end{document}